\definecolor{lgreen} {RGB}{180,210,100}
\definecolor{dblue}  {RGB}{20,66,129}
\definecolor{ddblue} {RGB}{11,36,69}
\definecolor{lred}   {RGB}{220,0,0}
\definecolor{nred}   {RGB}{224,0,0}
\definecolor{norange}{RGB}{230,120,20}
\definecolor{nyellow}{RGB}{255,221,0}
\definecolor{ngreen} {RGB}{98,158,31}
\definecolor{dgreen} {RGB}{78,138,21}
\definecolor{nblue}  {RGB}{28,130,185}
\definecolor{jblue}  {RGB}{20,50,100}
\newcommand{\idlow}[1]{\mathord{\mathcode`\-="702D\it #1\mathcode`\-="2200}}
\definecolor{GreenYellow}       {RGB}{217, 229, 6} 	    
\definecolor{Yellow}            {RGB}{254, 223, 0} 	    
\definecolor{Goldenrod}         {RGB}{249, 214, 22} 	
\definecolor{Dandelion}         {RGB}{253, 200, 47} 	
\definecolor{Apricot}           {RGB}{255, 170, 123} 	
\definecolor{Peach}             {RGB}{255, 127, 69} 	
\definecolor{Melon}             {RGB}{255, 129, 141} 	
\definecolor{YellowOrange}      {RGB}{240, 171, 0} 	    
\definecolor{Orange}            {RGB}{255, 88, 0} 	    
\definecolor{BurntOrange}       {RGB}{199, 98, 43} 	    
\definecolor{Bittersweet}       {RGB}{189, 79, 25} 	    
\definecolor{RedOrange}         {RGB}{222, 56, 49} 	    
\definecolor{Mahogany}          {RGB}{152, 50, 34} 	    
\definecolor{Maroon}            {RGB}{152, 30, 50} 	    
\definecolor{BrickRed}          {RGB}{170, 39, 47} 	    
\definecolor{Red}               {RGB}{255, 0, 0}        
\definecolor{BrilliantRed}      {RGB}{237, 41, 57} 	    
\definecolor{OrangeRed}         {RGB}{231, 58, 0} 	    
\definecolor{RubineRed}         {RGB}{202, 0, 93}       
\definecolor{WildStrawberry}    {RGB}{203, 0, 68} 	    
\definecolor{Salmon}            {RGB}{250, 147, 171} 	
\definecolor{CarnationPink}     {RGB}{226, 110, 178} 	
\definecolor{Magenta}           {RGB}{255, 0, 144} 	    
\definecolor{VioletRed}         {RGB}{215, 31, 133} 	
\definecolor{Rhodamine}         {RGB}{224, 17, 157} 	
\definecolor{Mulberry}          {RGB}{163, 26, 126} 	
\definecolor{RedViolet}         {RGB}{161, 0, 107} 	    
\definecolor{Fuchsia}           {RGB}{155, 24, 137} 	
\definecolor{Lavender}          {RGB}{240, 146, 205} 	
\definecolor{Thistle}           {RGB}{222, 129, 211} 	
\definecolor{Orchid}            {RGB}{201, 102, 205} 	
\definecolor{DarkOrchid}        {RGB}{153, 50, 204} 	
\definecolor{Purple}            {RGB}{182, 52, 187} 	
\definecolor{Plum}              {RGB}{79, 50, 76} 	    
\definecolor{Violet}            {RGB}{75, 8, 161} 	    
\definecolor{RoyalPurple}       {RGB}{82, 35, 152} 	    
\definecolor{BlueViolet}        {RGB}{33, 7, 106} 	    
\definecolor{Periwinkle}        {RGB}{136, 132, 213} 	
\definecolor{CadetBlue}	  	    {RGB}{95, 158, 160} 	
\definecolor{CornflowerBlue}  	{RGB}{99, 177, 229} 	
\definecolor{MidnightBlue}	  	{RGB}{0, 65, 101} 	    
\definecolor{NavyBlue}          {RGB}{0, 70, 173}       
\definecolor{RoyalBlue}         {RGB}{0, 35, 102}       
\definecolor{Blue}              {RGB}{0, 24, 168}       
\definecolor{Cerulean}          {RGB}{0, 122, 201}      
\definecolor{Cyan}              {RGB}{0, 159, 218}      
\definecolor{ProcessBlue}       {RGB}{0, 136, 206}      
\definecolor{SkyBlue}           {RGB}{91, 198, 232}     
\definecolor{Turquoise}         {RGB}{0, 255, 239} 	    
\definecolor{TealBlue}          {RGB}{0, 124, 146} 	    
\definecolor{Aquamarine}        {RGB}{0, 148, 179} 	    
\definecolor{BlueGreen}         {RGB}{0, 154, 166} 	    
\definecolor{Emerald}           {RGB}{80, 200, 120} 	
\definecolor{JungleGreen}       {RGB}{0, 115, 99} 	    
\definecolor{SeaGreen}          {RGB}{0, 176, 146} 	    
\definecolor{Green}             {RGB}{0, 173, 131} 	    
\definecolor{ForestGreen}       {RGB}{0, 105, 60} 	    
\definecolor{PineGreen}         {RGB}{0, 98, 101} 	    
\definecolor{LimeGreen}         {RGB}{50, 205, 50} 	    
\definecolor{YellowGreen}       {RGB}{146, 212, 0} 	    
\definecolor{SpringGreen}       {RGB}{201, 221, 3} 	    
\definecolor{OliveGreen}        {RGB}{135, 136, 0} 	    
\definecolor{RawSienna}         {RGB}{149, 82, 20} 	    
\definecolor{Sepia}             {RGB}{98, 60, 27} 	    
\definecolor{Brown}             {RGB}{134, 67, 30}      
\definecolor{Tan}               {RGB}{210, 180, 140}	
\definecolor{Gray}              {RGB}{139, 141, 142} 	
\definecolor{Black}		  	    {RGB}{30, 30, 30}       
\definecolor{White}		  	    {RGB}{255, 255, 255}    
\definecolor{Numa}               {RGB}{255, 0, 0}        
\newcommand{\numacol}{red}
\scriptsize\color{Gray},
\newtheorem{theorem}{Theorem}
\newtheorem{lemma}{Lemma}
\definecolor{code_indent}{HTML}{CCCCCC}
\newcommand{\ccmq}{\emph{SMQ}}
\newcommand{\smq}{\ccmq{}}
\newcommand{\stealsize}{\texttt{SIZE$_{steal}$}}
\newcommand{\stealprob}{$p_{\mathit{steal}}$}
\newcommand{\del}{\texttt{delete()}}
\newcommand{\ins}{\texttt{insert(..)}}
\newcommand{\chunksize}{\texttt{CHUNK\_SIZE}}
\newcommand{\obimdel}{$\Delta$}
\newcommand{\insprob}{$p_{\mathit{insert}}$}
\newcommand{\delprob}{$p_{\mathit{delete}}$}
\newcommand{\insbatch}{\texttt{BATCH$_{\mathit{insert}}$}}
\newcommand{\delbatch}{\texttt{BATCH$_{\mathit{delete}}$}}
\newcommand{\speed}{Speedup}
\newcommand{\workinc}{Work Increase}
\renewcommand{\paragraph}[1]{\vspace{0.3em}\noindent\textbf{#1}}
\newcounter{figureAsListingCtr}
\title{Multi-Queues Can Be State-of-the-Art Priority Schedulers}
\author{
  Anastasiia Postnikova \\
  ITMO University\\
  \texttt{postnikovaanastasiaa@gmail.com} \\
  \And
  Nikita Koval \\
  JetBrains Research\\
  \texttt{nikita.koval@jetbrains.com}
  \And
  Giorgi Nadiradze\\
  IST Austria\\
  \texttt{giorgi.nadiradze@ist.ac.at} \\
  \And
  Dan Alistarh\\
  IST Austria\\
  \texttt{dan.alistarh@ist.ac.at} \\
}
\begin{document}


\maketitle

\begin{abstract}
    Designing and implementing efficient parallel priority schedulers is an active research area. 
    An intriguing proposed design is the \emph{Multi-Queue}: given $n$ threads and $m\ge n$ distinct priority queues, task insertions are performed uniformly at random, while, to delete, a thread picks two queues uniformly at random, and removes the observed task of higher priority. 
    This approach scales well, and has probabilistic rank guarantees: roughly, the rank of each task removed, relative to remaining tasks in all other queues, is $O(m)$ in expectation. 
    Yet, the performance of this pattern is below that of well-engineered schedulers, which eschew theoretical guarantees for practical efficiency. 
    
    We investigate whether it is possible to design and implement a Multi-Queue-based task scheduler that is both highly-efficient and has analytical guarantees. 
    We propose a new variant called the \emph{Stealing Multi-Queue} (\ccmq{}), a \emph{cache-efficient} variant of the Multi-Queue, which leverages both \emph{queue affinity}---each thread has a \emph{local} queue, from which tasks are usually removed; but, with some probability, threads also attempt to \emph{steal} higher-priority tasks from the other queues---and \emph{task batching}, that is, the processing of several tasks in a single insert / remove step. 
    These ideas are well-known for task scheduling \emph{without priorities};  our theoretical contribution is showing that, despite relaxations, this design can still provide rank guarantees, which in turn implies bounds on total work performed. 
    We provide a general \ccmq{} implementation which can surpass state-of-the-art schedulers such as Galois and PMOD in terms of performance on popular graph-processing benchmarks. Notably, the performance improvement comes mainly from the \emph{superior rank guarantees} provided by our scheduler, confirming that analytically-reasoned approaches can still provide performance improvements for priority task scheduling. 
\end{abstract}

\section{Introduction}
Scalable concurrent priority schedulers are a key ingredient for efficiently parallelizing algorithms arising in graph processing, computational geometry, or scientific simulation. In such settings, algorithms are usually structured as a series of prioritized tasks, each of which accesses a subset of the shared algorithm state, computes a local update, and applies the update to the shared state. 
Many classic problems fit this pattern, such as graph processing algorithms (e.g., BFS or MST computations),  Dijkstra's single-source shortest paths (SSSP), or Delaunay mesh triangulation. 

Often, algorithm semantics imply a natural task priority ordering. 
A classic example is Dijkstra's SSSP algorithm, where the first task to be processed should be the one corresponding to the active node that is closest to the source. 
Concurrent schedulers often relax this sequential order to enable parallellism, but also, importantly, to reduce the overhead of the task scheduler's implementation, which can become extremely contended if perfect priority order is enforced~\cite{lenharth2015concurrent}. 
However, relaxing the priority order excessively can also decrease performance, as it leads to wasted work: in Dijkstra's SSSP algorithm, for example, processing a node out-of-order at a distance that is higher than its minimal distance from the source is useless, since the node will need to be re-processed later, when its distance is updated to the correct one. 
Generally, relaxed priority schedulers induce non-trivial trade-offs between the higher scalability of the scheduling mechanism itself, allowed by out-of-order execution, and the potential for wasted work caused by excessive speculation. 

The sustained recent progress on concurrent priority scheduling can be viewed through the lens of this trade-off. 
The influential Galois line of work, e.g.~\cite{Nguyen2013, pingali2011tao} proposed a family of highly-efficient schedulers which significantly relax priority order, specifically focusing on performance rather than priority guarantees~\cite{lenharth2015concurrent}. As a consequence, the scheduler has low overhead, but may induce high wasted work~\cite{PMOD}. This approach provided state-of-the-art performance upon its publication, and has inspired significant follow-up work, notably in terms of hardware implementations, e.g.~\cite{abeydeera2017sam, jeffrey2016unlocking}.  
Yesil et al.~\cite{PMOD} performed an in-depth analysis of the relaxation-vs-wasted-work trade-off in concurrent priority schedulers, and proposed a new scheduler called PMOD, combining the high scalability of Galois-type schedulers, with a dynamic priority management heuristic which reduces wasted work.

A parallel research thread has been on providing efficient relaxed schedulers \emph{with guarantees} on the maximum amount of priority rank relaxation, e.g.~\cite{SprayList, MQ, klsm, sagonas2016contention}. Of these, arguably the most popular design is the Multi-Queue~\cite{MQ}, which works roughly as follows: given $n$ threads, we instantiate $m \geq n$ concurrent priority queues, which will store tasks. To \emph{insert} a task, a thread simply places it into a random priority queue. To \emph{remove} a task, the thread picks \emph{two} priority queues at random, and removes the top element of higher priority. (For simplicity, we discuss concurrency-related details in the later sections.) 
Alistarh et al.~\cite{AKLN17} showed that a sequential variant of the Multi-Queue ensures that the rank of an element removed is always $O(m)$, in expectation, and $O(m \log m )$, with high probability in the number of queues $m$. Follow-up work showed that similar guarantees can be extended to concurrent executions~\cite{alistarh2018distributionally}, and gave work bounds for some task-based algorithms when executed via Multi-Queue-like priority schedulers~\cite{alistarh2018relaxed, alistarh2019efficiency, aksenov2020scalable}. Despite their guarantees, Multi-Queue schedulers are known to have lower overall performance relative to efficient scheduling heuristics~\cite{lenharth2015concurrent, PMOD}. 

\paragraph{Contribution.} In this paper, we show that these two lines of work can be unified, providing a highly-efficient, practical concurrent priority scheduler, while still maintaining theoretical rank guarantees, under analytical assumptions, for variants of the scheduler. 

An obvious reason for the lower performance of Multi-Queues relative to practical heuristics is  poor cache efficiency, as the basic process requires a high number of random accesses to maintain rank guarantees. 
A simple practical approach to address such issues, which we also adopt, is to affinitize threads to queues, assigning some subset of queues ``preferentially'' to each thread. 
For insertions, each thread can pick one of these preferential queues with (higher) probability $p_{\idlow{insert}}$, relative to inserting into other queues. 

We would like to apply a similar approach for deletions. 
Yet, allowing fully-local removals would perturb the two-choice process, and cause  divergence~\cite{PTW15, AKLN17}.
Nevertheless, we show that the process can be adapted as follows: to remove, with probability $p_{\idlow{steal}}$
the thread considers $\emph{stealing}$ tasks from a randomly chosen other queue, comparing the top element of a local queue with that on top of a globally-chosen random queue, and removing the higher-priority element. Otherwise, the thread directly removes from a local queue. 
As most insertions and deletions using the above scheme are local, this can result in a very unbalanced task distribution among queues, our analysis will adapt a stochastic scheduling model, by which threads are scheduled according to a scheduling distribution 
$\vec{\pi} = (\pi_1, \pi_2, \ldots, \pi_n)$, where thread $i$ is scheduled in each step with probability $\pi_i$, and we place upper and lower bounds on the maximum and minimum scheduling probabilities. Under these assumptions, we will be able to still provide rank guarantees. 

A second performance issue with standard Multi-Queues is that, in practice, the overhead of inserting or removing a task can be large relative to the task execution time. 
The standard approach to address this issue is \emph{task batching}, by which multiple tasks are inserted or removed at a single step. We show that the above random process can be resilient to task batching. 
Both the above approaches are well-known in the context of task scheduling \emph{without priorities}, e.g.~\cite{blumofe1995cilk, blumofe1999scheduling}, but have not been well-explored in the context of \emph{priority} scheduling. 

Our theoretical contribution is a generalization of the Multi-Queue analysis of~\cite{AKLN17} showing that, under assumptions, the above process, which we call \emph{the stealing Multi-Queue (SMQ)}, induces a non-trivial trade-off between the stealing probability, the scheduling properties, and the average rank of elements removed. For instance, assuming a Multi-Queue formed of $m=n$ queues (one queue per thread), with task batches of size $O(B)$, and a balanced thread scheduling distribution, we show that the expected rank removed at a step is $O(Bm)$, and $O( B m \log (Bm)) $ with high probability. 
These bounds hold irrespective of the running time. 

On the practical side, our work starts from an examination of the possible performance benefits of Multi-Queues relative to scheduling heuristics~\cite{Nguyen13, PMOD}, but also of their performance bottlenecks. 
The main benefit, which motivates our investigation, is the lower wasted work in real tasks, correlated to their rank guarantees. 
On the other hand, as mentioned, standard Multi-Queues have poor cache locality, relatively high per-task cost, and, so far, have had very limited specific implementation support. Our implementation addresses these shortcomings, via the following optimizations. 

We begin by investigating the ``optimal'' data structure for implementing individual queues. 
While previous schedulers partitioned the tasks into sub-buffers, maintained either manually or semi-automatically by priority range, we adopt an efficient variant of a \emph{sequential heap} for local task structure. 
To allow for  efficient stealing, we affix a \emph{stealing buffer} to each thread, into which the queue's owner periodically places tasks, which can be either stolen by other threads or later processed by the queue owner. 
With this in place, we examine various mechanisms for allowing concurrent access to heaps and stealing buffers, and implement a task batching mechanism similar to the one described above.  

This basic mechanism allows for several extensions, in particular a non-trivial NUMA-aware variant, which defines affinities and probabilities such that we seek to minimize out-of-socket accesses. 

\paragraph{Experimental Results.} We provide a general \ccmq{} implementation on top of the Galois graph processing framework, which includes standard and NUMA-aware variants of the \ccmq{}. 
Experiments show that our designs can surpass state-of-the-art schedulers such as OBIM and PMOD in terms of throughput and scalability when executing popular graph algorithms such as SSSP or A*. Of note, much of the performance improvement comes from significantly less wasted work, which is linked to the improvement in rank guarantees provided by our scheduler. 

\paragraph{Related Work.} To our knowledge, the  Multi-Queue-like data structure was given by the parallel branch-and-bound framework by Karp and Zhang~\cite{KarZha93}, which distributed tasks randomly among queues, assigned to processors, and also remove tasks uniformly at random. 
We stress however that their proposal is in the context of classic task scheduling in the PRAM model, and that their design does not provide rank guarantees under asynchrony. 
Our construction starts from the Multi-Queue of Rihani, Sanders, and Dementiev~\cite{MQ},  who introduced this design and provided a simple argument showing that the expected rank of the \emph{first} removed element is $O(m)$. 
Follow-up work by Alistarh et al.~\cite{AKLN17} provided a more general and involved argument, showing that, for a sequential variant of the Multi-Queue, the expected rank of \emph{any} removal is $O(m)$, by linking Multi-Queues with the classic $(1 + \beta)$ random process of Peres, Talwar, and Wieder~\cite{PTW15}. Follow-up work by the same authors~\cite{alistarh2018distributionally} extended the analysis to a \emph{concurrent} version of the Multi-Queue, under analytical assumptions. 

Relative to this work, we adapt the standard Multi-Queue semantics so that they result in efficient implementations, in particular with respect to caching, add queue locality and task batching to the original design, and then adapt the analysis approach of~\cite{AKLN17} to prove rank bounds for the resulting algorithm. 

There has been a tremendous amount of work on efficient scheduling heuristics for fine-grained task-based programs, especially in the context of graph processing~\cite{low2014graphlab, gonzalez2012powergraph, PMOD, Nguyen13, shun2013ligra, dhulipala2017julienne, dhulipala2018theoretically}. 
A complete survey is beyond our scope, so we focus on the two works that are closest to ours. 
The first is~\cite{Nguyen13}, which details the design, implementation, and practical performance of the Galois system, focusing on the OBIM (Ordered By Integer Metric) scheduler. In brief, this scheduler assigns one \emph{bag} (unordered set) per \emph{task priority class}, which is empirically defined. 
Each bag is implemented as one or more FIFO queues, one per socket. 
Enqueues insert into the bag corresponding to the task priority, creating the bag if required, and each queue element is mapped to a batch ({chunk}) of tasks. 
Threads dequeue chunks from their socket's queue; if that is empty, the thread steals from a remote queue. Tasks in a batch are performed one at a time. 
The list of bags is maintained in a global map, which is mirrored  locally by each thread for cache efficiency. 

Yesil et al.~\cite{PMOD} start from the observation that the communication-avoiding pattern of OBIM can lead to significant wasted work, due to the relatively high number of priority inversions. 
Hence, they propose a heuristic which defines \emph{priority groups}, which change dynamically at runtime. 
More precisely, PMOD tries to adapt the number of different priority bags in OBIM, by merging sets of similar priorities, so as to reduce the number of empty bags during runtime, thus trying to ensure that threads always have work to do. 
This is implemented via a dynamic merging mechanism. Conversely, the algorithm detects when there are \emph{too few} priority bags, therefore splitting bags which are too full. These operations are controlled via carefully-designed heuristics. 

Our design shares some features with these schedulers: in particular, we also recognize the importance of ``localizing'' the queues, and of batching for cache efficiency. However, it also differs in key ways, required to provide providing \emph{rank guarantees} for the resulting scheduler. 
(None of these two previous heuristics have rank guarantees, and we do not believe that such guarantees could be shown without significant modifications.) 
For instance, in keeping with the Multi-Queue design, do not split tasks per priority ``level,'' and instead maintain a local heap structure for each queue.  
Further, the stealing mechanism we use is different from both the standard Multi-Queue, and from the previous priority scheduling heuristics.


\section{The Stealing Multi-Queue}\label{sec:algorithm}

\subsection{The Classic Multi-Queue Design}\label{subsec:optimizations}

The classic Multi-Queue uses $m$ sequential queues, each protected by a lock, and distributes requests among them. Typically, $m$ is taken to be the number of threads $T$ multiplied by a constant factor $C \geq 2$, making it likely that individual operations will not interfere with each other when taking locks. 
An \texttt{insert(x)} comes, it chooses uniformly random queue and tries to lock it. When successful, it adds \texttt{x} into it and releases the lock. If the lock acquisition fails, the operation restarts. 
Similarly, \texttt{delete()} picks \textit{two} different queues uniformly at random, and removes from the higher priority top element. 
Then, it tries to lock the chosen queue and retrieve the top task from it, releasing the lock at the end. If the lock acquisition fails, the operation restarts.

Listing~\ref{lst:classic_mq} presents a pseudo-code of a simplified Multi-Queue version which may return \texttt{null} in \texttt{delete()} when the queue it removes an element from becomes empty. The version described here faithfully models the Galois implementation of Multi-Queues~\cite{Nguyen13}. 

\begin{lstlisting}[
label={lst:classic_mq}, 
caption={The classic Multi-Queue implementation.}
]
class MultiQueue<E> {
 val queues = Queue<E>[C * T] 
 
 fun insert(task: E) = while(true) {
 #\indentrule#  q := queues[random(0, queues.size)]
 #\indentrule#  if !tryLock(q): continue
 #\indentrule#  q.add(task)
 #\indentrule#  unlock(q)
 #\indentrule#  return
 }
 
 fun delete(): E? = while(true) {
 #\indentrule#  i1, i2 := distinctRandom(0, queues.size)
 #\indentrule#  q1 := queues[i1]; q2 := queues[i2]
 #\indentrule#  if !tryLock(q1, q2): continue
 #\indentrule#  q := q1.top() < q2.top() ? q1 : q2
 #\indentrule#  task := q.extractTop()
 #\indentrule#  unlock(q1, q2)
 #\indentrule#  return element
 }
}
\end{lstlisting}

\paragraph{Optimization 1: Task Batching.}
A standard way to reduce the ratio between synchronization cost and task execution time in schedulers is \emph{task batching}: \texttt{delete()} operations can retrieve multiple tasks from the same queue at once, storing them into a fixed-size thread-local buffer, and \texttt{insert(..)}\nobreakdash-s put the tasks into another thread-local buffer, flushing it to a random queue when the number of buffered tasks exceeds the buffer capacity.
Most practical graph processing frameworks implement some variant of this optimization, e.g.~\cite{Nguyen13}.
Our analysis can bound its impact on the rank guarantees of the Multi-Queue. 

The benefits of this approach are that (1) it reduces the number of lock acquisitions by a constant factor which is approximately the batch size and 
(2) it reduces the number of cache misses and contention compared to the version which accesses different queues on each \texttt{insert(..)} and \texttt{delete()}. 
However, it clearly impacts rank guarantees, since multiple elements are retrieved from the same queue so that other queues and further task insertions are ``ignored'' until the buffered tasks are processed. 
When this optimization is applied to insert operations, buffered tasks cannot be processed, so it also makes the implementation less fair. 

\paragraph{Optimization 2: Temporal Locality.}
A similar, but different approach to reduce the cache coherence overhead is to use the same queue for a sequence of \texttt{delete()} or \texttt{insert()} operations, making it more likely that the corresponding data is already cached at the current core. Specifically, in this variant, the thread flips a biased coin before each new operation to decide whether to keep using the same queue as in the previous operation, or potentially pick another queue, according to the algorithm. (The coin is biased towards locality.) 
Our analysis approach can also provide rank bounds for this approach, although we will focus on analyzing the more performant \emph{stealing} variant. 

The difference between this method and task buffering is that updates to the queue (e.g. newly inserted tasks by another thread) would be visible to the current thread in this case. 
However, this approach is more costly, as it requires synchronization upon every operation, although obtaining the same lock multiple times, when uncontended, is relatively cheap. 

We can examine the difference in terms of Dijkstra's SSSP algorithm. In a step, the algorithm usually relaxes multiple edges, and thus adds several tasks to the queue. With temporal locality, the sequence of \texttt{insert(..)} invocations may insert part of all of these tasks into the same queue. Moreover, it is possible to insert several elements into the same queue with a single lock acquisition: with the lock acquired, the algorithm flips a coin after the insertion of each task to determine whether to keep inserting elements, or whether to potentially switch queues. 
With this optimization, the cost of lock acquisitions can be lower than in the classic Multi-Queue, but is still greater than with task buffering, where a fixed set of tasks is always inserted. The advantage, however, is in relatively better rank guarantees.  





\subsection{The Stealing Multi-Queue}\label{sec:stealing_mq}\label{subsec:smq}

The experimental data in Section~\ref{sec:experiments} show that \emph{task batching} and \emph{temporal locality} optimizations can improve the performance of Multi-Queue scheduling for graph algorithms by up to $3\times$ relative to the classic variant. 
However, accessing queues by different threads and lock acquisitions still have high performance cost. Therefore, we constructed a new variant which we call the \emph{Stealing Multi-Queue (SMQ)}, which eschews locks, and improves cache locality beyond the optimizations discussed above, while maintaining rank guarantees under analytic assumptions. 

Without task priorities, the classic way to implement schedulers, e.g.~\cite{blumofe1999scheduling}, is to use thread-local queues and allow \emph{stealing}, so that threads can both add and retrieve tasks from their own queues, and steal tasks when the queues become empty. 

The idea behind \smq{} is similar {---} it also allows task stealing, but it uses thread-local \emph{priority} queues. To guarantee fairness, \smq{} steals tasks not only when it finds the thread-local queue empty, but also with a constant probability in each step. Specifically, in each step, with probability $p_{\idlow{steal}},$ the thread compares the priority of the top element in its local queue with that of a randomly chosen queue. 
In Section~\ref{sec:proofs} we provide a rank analysis for this process under analytic assumptions. 

Stealing lends itself to additional optimizations. 
For example, we employ the task batching optimization for task stealing: threads do not steal a single task at a time, but a whole batch. 
Intuitively, we aim to take advantage of the fact that in e.g. graph algorithms, tasks with similar priorities refer to nodes that are close to eachother in the graph, and therefore it would be more efficient that they are processed together.

\paragraph{High-Level Algorithm.}
Listing~\ref{listing:smq} presents a high-level pseudocode for the \smq{} algorithm. Thread-local queues are stored in the \texttt{queues} array, where \texttt{queues[t]} is associated with thread \texttt{t} (line~\ref{line:smq:queues}). Additionally, each threads owns a  \texttt{stolenTasks} buffer of capacity \stealsize{}~\texttt{-~1} that is thread-local and stores the tasks stolen from another queue (line~\ref{line:smq:stolenTasks}). 

\begin{lstlisting}[label={listing:smq}, caption={
High-level Stealing Multi-Queue (\smq{}) algorithm. It assumes that \texttt{Queue} is provided with an additional \texttt{steal(k)} function that retrieves top \texttt{k} tasks (or less, if the queue size is lower). The implementation of this \texttt{steal(k)} function is discussed in Section~\ref{sec:smq_impl}.
}]
class StealingMultiQueue<E> {
 // #\color{Mahogany}queues[t]# is associated with thread `t`
 val queues: Queue<E>[threads] #\label{line:smq:queues}#
 threadlocal val stolenTasks = #\nolinebreak#Buffer<E>(#\stealsize{}#-1) #\label{line:smq:stolenTasks}#
  
 fun insert(task: E) =  #\label{line:smq:insert0}#
   queues[curThread()].addLocal(task) #\label{line:smq:insert1}#
    
 fun delete(): E? {
 #\indentrule#  // Do we have previously stolen tasks?
 #\indentrule#  if stolenTasks.isNotEmpty(): #\label{line:smq:delete:processStolen:start}#
 #\indentrule#  #\indentrule#  return stolenTasks.removeFirst() #\label{line:smq:delete:processStolen:end}#
 #\indentrule#  // Should we steal?
 #\indentrule#  with #\stealprob{}# probability { #\label{line:smq:delete:steal:start}#
 #\indentrule#  #\indentrule#  task := trySteal()  #\label{line:smq:delete:steal:trySteal}#
 #\indentrule#  #\indentrule#  if task != null: return task
 #\indentrule#  } #\label{line:smq:delete:steal:end}#
 #\indentrule#  // Try to retrieve the top task
 #\indentrule#  // from the thread-local queue
 #\indentrule#  task := queues[curThread()].extractTopLocal() #\label{line:smq:delete:extractTop:start}#
 #\indentrule#  if (task != null) return task #\label{line:smq:delete:extractTop:end}#
 #\indentrule#  // The local queue is empty, try to steal
 #\indentrule#  return trySteal()  #\label{line:smq:delete:stealEmpty}#
 }
  
 fun trySteal(): T? { #\label{line:smq:trySteal:start}#
 #\indentrule#  // Choose a random queue and check whether 
 #\indentrule#  // its top task has higher priority
 #\indentrule#  t := curThread()
 #\indentrule#  qId := random(0, queues.size)
 #\indentrule#  if queues[qId].top() < queues[t].top():
 #\indentrule#  #\indentrule#  // Try to steal a better task!
 #\indentrule#  #\indentrule#  stolen := queues[qId].steal(STEAL_SIZE)
 #\indentrule#  #\indentrule#  if stolen.isEmpty(): return null // failed
 #\indentrule#  #\indentrule#  // Return the first task and add the others
 #\indentrule#  #\indentrule#  // to the thread-local buffer of stolen ones
 #\indentrule#  #\indentrule#  stolenElements.add(stolen[1:])
 #\indentrule#  #\indentrule#  return stolen[0]
 } #\label{line:smq:trySteal:end}#
}
\end{lstlisting}

The \texttt{insert(..)} operation is straightforward{---}it simply adds the specified task into the thread-local queue (lines~\ref{line:smq:insert0}--\ref{line:smq:insert1}).

The \texttt{delete()} operation first processes any buffered stolen tasks (lines~\ref{line:smq:delete:processStolen:start}--\ref{line:smq:delete:processStolen:end}). 
In case the buffer is empty, it steals tasks from another queue with probability \stealprob{} and returns the  stolen task of highest priority if this succeeds (lines~\ref{line:smq:delete:steal:start}--\ref{line:smq:delete:steal:end}). 
If the algorithm did not steal or stealing has failed (the \texttt{trySteal()} invocation at line~\ref{line:smq:delete:steal:trySteal} returned \texttt{null}), \texttt{delete()} retrieves the top task from the thread-local queue and returns it (lines~\ref{line:smq:delete:extractTop:start}--\ref{line:smq:delete:extractTop:end}).
However, the thread-local queue can be empty. In this case, the algorithm tries to steal tasks from another queue (line~\ref{line:smq:delete:stealEmpty}).
As before, this implementation may return \texttt{null} if not task is found.

The stealing logic is implemented in the \texttt{trySteal()} function (lines~\ref{line:smq:trySteal:start}--\ref{line:smq:trySteal:end}) which attempts to steal \stealsize{} tasks from a random queue if the top element from the thread-local queue has lower priority, and returns the top task as a result, storing all the resulting tasks into the thread-local \texttt{stolenTasks} buffer.

\section{Analysis of the Stealing Multi-Queue} \label{sec:proofs}

We consider the following simplified variant of the \smq{} algorithm in the analysis. 

\begin{lstlisting}[label={listing:theoreticalalgo}, caption={
 The stealing Multi-Queue algorithm assumed in the analysis. It assumes that each queue is equipped with \texttt{extractTopB()} method which retrieves and returns its top $B$ elements.
}]
class StealingMultiQueue<E> {
 // #\color{Mahogany}queues[t]# is associated with thread `t`
 val queues: Queue<E>[threads] 
 
 fun insert(task: E) = 
   queues[curThread()].add(task)  
    
 fun delete(): List<E> {
 #\indentrule#  with #\stealprob{}# probability { // should we steal?
 #\indentrule#  #\indentrule#  return trySteal() 
 #\indentrule#  }
 #\indentrule#  // Never empty in the theoretical model
 #\indentrule#  return queue[curThread()].extractTopB() 
 }
  
 fun trySteal(): List<E> {
 #\indentrule#  // Choose a random queue and check whether its
 #\indentrule#  // top task has higher priority, #\color{Mahogany}$\Rightarrow$# lower rank
 #\indentrule#  t := curThread()
 #\indentrule#  qId := random(0, queues.size)
 #\indentrule#  if (queues[qId].top() < queues[t].top()):
 #\indentrule#  #\indentrule#  return queues[qId].extractTopB()
 #\indentrule#  }
 #\indentrule#  return queues[t].extractTopB()
 }
\end{lstlisting}

\paragraph{Analytical Model.} 
We now provide rank bounds for the above algorithm, in a simplified analytical model. As in~\cite{AKLN17}, we assume that all element insertions occur initially and in increasing rank order. Additionally, we assume that tasks are inserted at random among the queues, and that \texttt{extractTopB()} operations occur atomically, so that we can analyze a sequential ``linearized'' version of the process.

Importantly, we assume a stochastic scheduling model similar to that of~\cite{alistarh2015lock}, where we are given a thread scheduling distribution $\vec{\pi} = (\pi_1, \pi_2, \ldots, \pi_n)$ for the $n$ threads, where $\pi_i$ is the probability with which thread $\pi_i$ is scheduled to perform  operation in a step.
Since we assume that all insertions occur initially, threads will not observe empty queues, so our main objective will be to characterize the expected rank bound of the elements removed, relative to all elements present in the queues. 
The \texttt{extractTopB()} operation retrieves $B$ elements, for constant $B$; however, for convenience, we will first describe the case $B = 1$ first. 

Let \stealprob{} be the stealing probability, that is: once a thread is scheduled to perform a delete  operation, 
with probability \stealprob{} stealing occurs: it picks a second queue uniformly at random among all queues,
and ``steals'' its top element if the rank of its top element is smaller than the rank of the element on top its local queue.
If no stealing occurs, the thread simply deletes the top element of its local queue and returns it.
Finally, we assume that there exists a constant $\gamma$ such that, for each thread $i$, $1-\gamma \le \frac{1}{\pi_i n} \le 1+\gamma$, for $\gamma \le 1/2$. Intuitively, the parameter $\gamma$
bounds how unfair the thread scheduler may be. 
For instance, the value $\gamma = 0$ means that the thread scheduler is completely uniform. 

\paragraph{The Main Theorem.}
Given the above definitions, our main claim is the following. 


\begin{theorem}
\label{thm:main}
Assume the thread scheduling probability distribution $\vec{\pi} = (\pi_1, \pi_2, \ldots, \pi_n)$ with constant $\gamma \geq 0$ such that, for each thread $i$, $1-\gamma \le \frac{1}{\pi_i n} \le 1+\gamma$, for $\gamma \le 1/2$,\footnote{If the scheduling distribution is \emph{uniform}, then $\gamma = 0$.} and let $p_{\idlow{steal}}$ be the stealing probability.
If $\gamma\Big(\frac{1}{p_{\idlow{steal}}}-1\Big) \le \frac{1}{2n}$, then the SMQ process which removes $B$ elements during the delete operation satisfies that,
for any time step $t$, the expected maximum rank of elements on top of queues is $O\Bigg(\frac{nB(1+\gamma)}{p_{\idlow{steal}}}\Big(\log{n}+\log{\frac{(1+\gamma)}{p_{\idlow{steal}}}}\Big)\Bigg)$
and the expected average rank (maximum and average ranks are computed over $Bn$ elements: $B$ elements on top of $n$ queues) is at most 
$O\Bigg(\frac{nB(1+\gamma)}{p_{\idlow{steal}}}\log{\frac{(1+\gamma)}{p_{\idlow{steal}}}}\Bigg)$.
\end{theorem}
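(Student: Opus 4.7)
The plan is to reduce the rank analysis to a variant of the $(1+\beta)$-choice balls-into-bins process of Peres, Talwar, and Wieder~\cite{PTW15}, generalizing the argument of Alistarh et al.~\cite{AKLN17} to handle three new ingredients: asymmetric stealing with probability $p_{\idlow{steal}}$, non-uniform thread scheduling captured by $\gamma$, and batched deletion of $B$ elements per step. When $\gamma = 0$, $p_{\idlow{steal}} = 1$, and $B = 1$, the argument should recover the $O(n\log n)$ maximum rank bound of~\cite{AKLN17}.

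First, I would switch from tracking the rank of the top of each queue to tracking the ``work load'' $d_i(t)$, defined as the number of batch-deletions performed on queue $i$ by step $t$. Since all insertions occur upfront in increasing rank order, the rank of the top of queue $i$ concentrates around $nB \cdot d_i(t)$, so that bounds on $\max_i (d_i(t) - \overline{d}(t))$ translate to the excess-rank bounds of the theorem up to a factor of $nB$ and lower-order fluctuations.

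Next, I would describe the one-step evolution of the load vector as a mixture: when thread $i$ is scheduled (probability $\pi_i$), with probability $1 - p_{\idlow{steal}}$ coordinate $d_i$ is incremented (``one-choice''), and with probability $p_{\idlow{steal}}$ the coordinate $d_{q^\ast}$ is incremented for the queue $q^\ast \in \{i,j\}$ with the smaller current load, where $j$ is uniform (asymmetric ``two-choice''). Aggregating over $i$, the one-choice part is only approximately uniform (within a multiplicative factor of $1\pm\gamma$ per queue), while the two-choice part is approximately symmetric and provides a restoring force toward smaller loads. The hypothesis $\gamma\bigl(1/p_{\idlow{steal}} - 1\bigr) \le 1/(2n)$ is precisely what ensures that the biased one-choice drift is dominated, by at least a factor of two, by the stealing-induced restoring drift.

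I would then reuse the exponential potential $\Phi(t) = \sum_i \bigl(e^{\alpha(d_i(t) - \overline{d}(t))} + e^{-\alpha(d_i(t) - \overline{d}(t))}\bigr)$ of~\cite{PTW15, AKLN17}, with $\alpha = \Theta(p_{\idlow{steal}}/(1+\gamma))$, and establish a one-step drift inequality $\mathbb{E}[\Phi(t+1)\mid\mathcal{F}_t] \le (1-\delta)\Phi(t) + O(n)$ with $\delta = \Theta(\alpha p_{\idlow{steal}}/n)$. This yields $\mathbb{E}[\Phi(t)] = O(n/\delta)$ at stationarity, and standard inversion of the exponential gives $\mathbb{E}[\max_i (d_i - \overline{d})] = O\bigl((\log n + \log((1+\gamma)/p_{\idlow{steal}}))/\alpha\bigr)$ for the maximum and an $O(1/\alpha)$ bound for the average; multiplying by the conversion factor $nB$ recovers the claimed rank bounds. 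The main obstacle will be proving the drift inequality under the asymmetric two-choice structure: the ``local'' queue is the acting thread's, not uniformly random, so the restoring force only reappears after averaging over the scheduled thread, and it is precisely the non-uniform scheduling that can spoil this averaging. The condition on $\gamma$ and $p_{\idlow{steal}}$ is tight enough to keep the argument going, while the batch factor $B$ enters cleanly through the final rank conversion, since each step shifts a single coordinate of the load vector by exactly $1$.
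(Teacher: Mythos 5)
Your high-level strategy matches the paper's: reduce to a $(1+\beta)$-choice process in the sense of Peres--Talwar--Wieder, use the two-sided exponential potential, prove a per-step drift inequality with $\beta = \Theta(p_{\mathit{steal}}/(1+\gamma))$, and observe that the hypothesis $\gamma(1/p_{\mathit{steal}}-1)\le 1/(2n)$ is exactly what lets the stealing-induced restoring drift dominate the $\gamma$-biased one-choice drift (this is the content of the paper's majorization lemma on the cumulative deletion probabilities $S_c(i)\ge S_\beta(i)$). However, there is a genuine gap in your reduction to the load vector $d_i(t)$. First, your stated dynamics are not those of the algorithm: in the two-choice step the algorithm deletes from the queue whose \emph{top element has higher priority} (smaller original rank), not from the queue with the smaller deletion count. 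These two orderings coincide only up to the accumulated fluctuations of the insertion process: the original rank of the top of queue $i$ after $Bd_i$ deletions is a sum of $Bd_i$ i.i.d.\ geometric variables, so two queues with equal loads can have top ranks differing by $\Theta\big(n\sqrt{Bd_i(t)}\big)$, which grows without bound in $t$. Consequently the restoring force acts in rank space, not in load space, and a potential defined on $d_i-\overline{d}$ does not see the quantity the algorithm actually compares.

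Second, even if you could show the loads stay balanced to within $O(\log n)$, the conversion ``rank of the top of queue $i$ among remaining elements $\approx nB\,d_i(t)$ up to lower-order fluctuations'' fails uniformly in $t$ for the same reason: the rank among remaining elements is $\sum_j\big(N_j(r_i)-Bd_j\big)_+$, and the deviations of $N_j(r_i)$ from $r_i\pi_j$ contribute an error of order $n\sqrt{Bd_i(t)}$ that eventually dwarfs the $O(nB\log n)$ target. This is precisely why the paper (following \cite{AKLN17}) couples the discrete process to a \emph{continuous} balls-into-bins process in which the state variable is the top label itself, with exponential spacings of mean $\pi_i$: by memorylessness, at every step the remaining balls in bin $j$ form a Poisson process above $\ell_j(t)$, so bounded label gaps convert to bounded expected ranks with no error accumulation, and the same label comparison drives both the algorithm and the potential. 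To repair your argument you would need to either adopt this continuous coupling (in which case the batch case $B>1$ is handled not by a clean ``$\times B$'' conversion but by checking that the moment generating function of $\sum_{k=1}^B \mathrm{Exp}(\pi_i)/B$ is dominated by that of $\mathrm{Exp}(\pi_i)$, via Jensen, so the drift lemma still applies), or supply a substitute argument controlling the rank process directly.
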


\paragraph{Discussion.}
The above claim is somewhat abstract, so let us build intuition via some examples. 
Consider first the case of a \emph{uniform} scheduling distribution, $\gamma = 0$, and no task batching, $B = 1$. 
Then, if the stealing probability is \emph{constant}, the expected average rank is $O(n)$, and the expected maximum rank is $O(n \log n)$. If we wish to steal less often, i.e. $p_{\idlow{steal}} = O(1 / n)$, then the expected rank cost becomes $O(n^2 \log n)$ in both cases. 

More generally, the claim also shows that, if $p_{\idlow{steal}}$ is large enough, we consistently obtain good rank bounds. Given parameter $0 \le q \le 1$, if $p_{\idlow{steal}} \geq 1-\frac{1}{n^{q}+1} \ge \frac{1}{2}$ and  $\gamma \le \frac{n^q}{2n}$, then
$\gamma\Big(\frac{1}{p_{\idlow{steal}}}-1\Big) \le \frac{1}{2n}$ and 
$\frac{1+\gamma}{p_{\idlow{steal}}}=O(1)$.
Hence, the expected average rank bound 
becomes $O(n)$, irrespective of the scheduling imbalance.
We can also use smaller $p_{\idlow{steal}}$ at the expense of larger rank bounds
and smaller tolerated $\gamma$. If $\gamma \le \frac{p_{\idlow{steal}}}{2n} \le \frac{1}{2n}$,  threads are scheduled almost uniformly.
Here, the expected average rank bound becomes $O\Big(\frac{n \log{\frac{1}{p_{\idlow{steal}}}}}{p_{\idlow{steal}}}\Big)$.

\paragraph{Analysis Overview.}
Our analysis generalizes the argument of~\cite{AKLN17} to the more intricate variant of the Multi-Queue process which we consider. Due to space constraints, we only sketch the argument here, and provide the full proof as supplementary material. 

The first step is similar to~\cite{AKLN17}: we describe a coupling which equates the discrete $SMQ$ process described above to a \emph{continuous} balls-into-bins process.
We replace the $n$ queues with $n$ bins, each of which initially contains a single ball of label $0$. 
We start by inserting infinitely many balls into the bins, so that for each bin $i$ the difference between the labels of two consecutive balls
is an exponential random variable with mean $\pi_i$. 
Then, we perform  $T$ insertions in the modified $SMQ$ process (recall that elements are inserted in the increasing rank order) as follows :
for each element with rank $t \le T$, we insert it into $i$ if the label (ball) with rank $t$ is inserted in the bin $i$.
Finally, we remove the labels which have rank larger than $T$ from the bins. 
Note that after the insertion phase the rank distributions of labels (balls) in bins and elements in queues are equal.
The first technical step is a Lemma proving that elements are inserted into the queues 
in the same way as the original $SMQ$ process.

Assuming that the number of initial insertions $T$ is large, we extend the coupling to the second \emph{removal phase}. 
In each removal step, we first pick a ``local'' bin $i$ with probability $\pi_i$, and then flip a coin to decide whether to steal or not. With probability \stealprob{}, we decide to steal. 
If so, we pick a second bin uniformly at random from among all $n$, and examine the two balls of lowest label (highest priority) on ``top'' of the two bins. 
Following the priority process, we remove the ball of lowest label among the two, uncovering the next ball in the bin. 
If the coin flip dictated that we \emph{not steal}, then we directly remove the ball on top of bin $i$. (Notice that, in both cases, the label increment for a chosen bin is exponentially-distributed).
If batch size $B>1$, we remove $B$
 labels from the bin.
 In the $SMQ$ process we make the same random choices as in the balls and bins process and follow the same removal procedure, and since the rank distributions are equal after the insertion phase, it is easy to see that they will stay equal after every removal. That is, if the ball is removed from the bin $k$ then the element is removed from the queue $k$.
 
 The \emph{rank cost} at a step is the \emph{rank} of the label of the removed ball among all labels still present in all the bins. Because of the coupling, this will imply that the rank cost in the $SMQ$ process is also bounded in expectation.

Let $\ell_i(t)$ be the label on top of bin $i$ at time step $t$.
Let $x_i(t)=\ell_i(t)/n$ be the normalized value of this label, 
and let $\mu(t)=\frac{1}{n} \sum_{i=1}^n x_i(t)$ be the average normalized label at time step $t$.
As in~\cite{PTW15, AKLN17}, we will be analyzing the potential function 
\begin{equation}
    \Gamma(t)=\sum_{i=1}^n e^{-\alpha(x_i(t)-\mu(t))}+\sum_{i=1}^n e^{\alpha(x_i(t)-\mu(t))},
\end{equation} 
for a suitable constant $\alpha > 0$, which we will define later.

A key ingredient of this analysis is the $(1+\beta)$-choice random process~\cite{PTW15}, which is similar to the continuous process we defined above, with the difference that in order to delete an element, with probability $(1-\beta)$ we choose a single bin uniformly at random and delete from it, and with probability $\beta$
we choose two bins uniformly at random and delete from the one which has a ball of lower label on top. (Thus, in expectation, we perform $(1 + \beta)$ choices at a step.) 

\paragraph{Bounding the Potential.}
Fix an arbitrary time step $t$, and the labels  $x_1(t), x_2(t), ... x_n(t)$  on top of the bins.
Let $\Gamma_c(t+1)$ be the potential at time step $t+1$ if the label is deleted by our 
continuous process, and let $\Gamma_{\beta}(t+1)$ be the potential at time step $t+1$
if the label is deleted by $1+\beta$ process.
Our strategy is to show that there exists $\beta$ such that 
\begin{align}
\label{eq:dominance}
    \mathbb{E}[\Gamma_c(t+1) &|(x_i(t)_{i = 1, n}] \nonumber \le   \mathbb{E}[\Gamma_{\beta}(t+1)|(x_i(t)_{i = 1, n}].
\end{align}

At this point, our argument diverges from that of~\cite{AKLN17}. 
First, we define $S_c(i)$ to be the probability that we delete a label from one of the first $i$ bins in our process continuous process, and let $S_{\beta}(i)$ be the same probability for the $(1+\beta)$ process. 
Our first technical step is to show that, under reasonably chosen parameter values $\gamma\Big(\frac{1}{p_{\idlow{steal}}}-1\Big) \le \frac{1}{2n}$ and $\beta=\frac{p_{\idlow{steal}}}{2(1+\gamma)}$, we have that 
for any $1 \le i \le n, S_c(i) \ge S_{\beta}(i).$ 

Based on this, our key technical step will be to use coupling in order to show that the potential dominance claimed above holds for suitable chosen parameter values. 
Subsequently, we can use \cite[Lemma 3]{AKLN17} to show that for any step $t+1$:
\begin{align*}
\mathbb{E}[\Gamma_{c}(t+1)|(x_i(t)_{i = 1, n}] \le & \mathbb{E}[\Gamma_{\beta}(t+1)|(x_i(t)_{i = 1, n}] \\\le & \Big(1-\Omega\left(\frac{\beta^2}{n}\right)\Big)\Gamma(t)\nonumber +poly(\frac{1}{\beta}).
\end{align*}

The rest of the proof leverages the implied potential bound to obtain \emph{rank} bounds on the elements removed.
First, we use the above supermartingale type bound to prove that for any time step $t \ge 0$:  
$\mathbb{E}[\Gamma_{c}(t)] \le O\Big(poly\left(\frac{1}{\beta}\right)n\Big).$
Finally, we can use Theorems 4 and 5 in the full version of \cite{AKLN17} to show
that for the continuous process and time  $t \ge 0$, the expected maximum rank of labels on top of bins at time step $t$ is
$O\Big(\frac{n}{\beta}(\log {n}+\log {\frac{1}{\beta}})\Big)$ and expected average rank is 
$O\Big(\frac{n}{\beta}\log {\frac{1}{\beta}}\Big)$. Plugging in $\beta=\frac{p_{\idlow{steal}}}{2(1+\gamma)}$ in the above rank provides the proof of the theorem for $B=1$. The proof for $B > 1$ is slightly more involved, and is described in the supplementary material.

\newpage
\section{SMQ Implementation Details}\label{sec:smq_impl}

In Section~\ref{sec:stealing_mq} we presented the general SMQ design, skipping the details of the stealing implementation. 
Here, we address this gap. 
In our investigation, we have first used \emph{concurrent skip-lists} as thread-local queues as well as \emph{sequential $d$-ary heaps} augmented with special \emph{stealing buffers}.
We have found the latter approach to perform consistently better, and therefore we focus on it here. 

\paragraph{\smq{} via d-ary Heaps with Stealing Buffers.}
Listing~\ref{listing:smq_heaps} presents the pseudo-code for this design. We use sequential $d$-ary heaps (typically, with parameter $d=4$) as thread-local queues for storing tasks and separate the synchronization and stealing from the sequential heap implementation. 

\begin{lstlisting}[label={listing:smq_heaps}, caption={
Stealing Multi-Queue implementation via $d$-ary heaps with stealing buffers.
}]
class HeapWithStealingBufferQueue<E> {
 val q = Heap<E>() // local d-ary heap #\label{line:smq_heaps:q}#
 // Other threads can steal from this buffer
 val stealingBuffer = Buffer<E>(#\stealsize{}#) #\label{line:smq_heaps:b}#
 // 64-bit register with the current buffer epoch 
 // and the "tasks are stolen" flag
 val (epoch, stolen): (Int, Bool) = (0, true)  #\label{line:smq_heaps:ef}#

 fun addLocal(task: E) { #\label{line:smq_heaps:add:start}#
 #\indentrule#  q.add(task) // add to the local queue #\label{line:smq_heaps:add:q}#
 #\indentrule#  if stolen: fillBuffer()  #\label{line:smq_heaps:add:b}#
 }  #\label{line:smq_heaps:add:end}#
  
 fun extractTopLocal(): E? { #\label{line:smq_heaps:top:start}#
 #\indentrule#  if stolen: fillBuffer() 
 #\indentrule#  return q.extractTop()
 } #\label{line:smq_heaps:top:end}# 
  
 fun top(): E? = while(true) { 
 #\indentrule#  // Read the current epoch and the flag
 #\indentrule#  (curEpoch, curStolen) := (epoch, stolen) #\label{line:smq_heaps:top:read}#
 #\indentrule#  if stolen: return null // can we steal? #\label{line:smq_heaps:top:check}#
 #\indentrule#  top := stealingBuffer.first() // read the top #\label{line:smq_heaps:top:top}#
 #\indentrule#  if curEpoch != epoch: continue // restart #\label{line:smq_heaps:top:recheck}#
 #\indentrule#  return top // return the top buffer element
 }
  
 fun steal(size: Int): List<E> = while(true) {  #\label{line:smq_heaps:steal:start}#
 #\indentrule#  // Read the current epoch and the flag
 #\indentrule#  (curEpoch, curStolen) := (epoch, stolen) #\label{line:smq_heaps:steal:read}#
 #\indentrule#  if stolen: return emptyList() // can't steal #\label{line:smq_heaps:steal:check}#
 #\indentrule#  // Read the tasks non-atomically
 #\indentrule#  tasks := stealingBuffer.read()  #\label{line:smq_heaps:steal:buffer}#
 #\indentrule#  atomic { // atomically update (epoch, stolen) #\label{line:smq_heaps:steal:atomic0}#
 #\indentrule#  #\indentrule#  if epoch != curEpoch || stolen: continue
 #\indentrule#  #\indentrule#  stolen = true // the tasks have been stolen!
 #\indentrule#  } #\label{line:smq_heaps:steal:atomic1}#
 #\indentrule#  return tasks
 } #\label{line:smq_heaps:steal:end}#
  
 fun fillBuffer() { // stolen #\color{Mahogany}==# true #\label{line:smq_heaps:fb:end}#
 #\indentrule#  stealingBuffer.clear()  #\label{line:smq_heaps:fb:clear}#
 #\indentrule#  // Fill the buffer, keep the flag `true`
 #\indentrule#  repeat(STEAL_SIZE) {  #\label{line:smq_heaps:fb:fill0}#
 #\indentrule#  #\indentrule#  task := q.extractTop()
 #\indentrule#  #\indentrule#  if task == null: break
 #\indentrule#  #\indentrule#  stealingBuffer.add(task)
 #\indentrule#  } #\label{line:smq_heaps:fb:fill1}#
 #\indentrule#  // Increment the epoch and re-set the flag
 #\indentrule#  (epoch, stolen) = (epoch + 1, false) #\label{line:smq_heaps:fb:apply}#
 } #\label{line:smq_heaps:fb:start}#
}
\end{lstlisting}


To steal efficiently, we maintain the metadata such as a buffer epoch and a ``tasks are stolen'' flag in a single 64-bit integer field (line~\ref{line:smq_heaps:ef}). When the tasks in the buffer are stolen, the flag should be set to \texttt{true}. Simultaneously, when the buffer is filled with new tasks, the epoch increases.

Thus, the stealing procedure presenting in the \texttt{steal(..)} function (lines~\ref{line:smq_heaps:steal:start}--\ref{line:smq_heaps:steal:end}) is organized as follows. First, it atomically reads the current epoch and the flag (line~\ref{line:smq_heaps:steal:read}). If the tasks have been already stolen, the function fails and returns an empty list (line~\ref{line:smq_heaps:steal:check}). Next, it reads the buffer in a non-atomic way in the hope that the tasks are still in the buffer (line~\ref{line:smq_heaps:steal:buffer}). Then, the algorithm atomically checks that the epoch has not been updated, and changes the flag from \texttt{false} to \texttt{true} (line~\ref{line:smq_heaps:steal:atomic0}--\ref{line:smq_heaps:steal:atomic1}). On success, it returns the already read tasks; otherwise, it restarts the procedure from the beginning.

In the \texttt{add(..)} operation, the task is added to the thread-local heap (line~\ref{line:smq_heaps:add:q}) followed by a \texttt{fillBuffer()} invocation if the tasks from the buffer are stolen (line~\ref{line:smq_heaps:add:b}). The \texttt{extractTop()} operation also delegates the work to the sequential heap, filling the buffer with new tasks if needed (lines~\ref{line:smq_heaps:top:start}--\ref{line:smq_heaps:top:end}).

The \texttt{top()} operation uses the same approach as stealing: it reads the current epoch and flag (line~\ref{line:smq_heaps:top:read}), checks that the tasks are not stolen (line~\ref{line:smq_heaps:top:check}), reads the top task from the buffer (line~\ref{line:smq_heaps:top:top}), and checks that the epoch has not been changed (line~\ref{line:smq_heaps:top:recheck}). When the epoch is the same, the operation returns the already read task; otherwise, it restarts.

To refill the buffer (lines~\ref{line:smq_heaps:fb:start}--\ref{line:smq_heaps:fb:end}), the algorithm clears it first (line~\ref{line:smq_heaps:fb:clear}) and puts \stealsize{} top tasks from the local sequential heap (lines~\ref{line:smq_heaps:fb:fill0}--\ref{line:smq_heaps:fb:fill1}). During this procedure, the ``tasks are stolen'' flag is \texttt{true}, so other threads do not interfere. In the end, the algorithm applies  refilling by atomically increasing the epoch and resetting the flag (line~\ref{line:smq_heaps:fb:apply}).


\paragraph{NUMA-Awareness.}\label{sec:numa}
The classic Multi-Queue design does not consider Non-Uniform Memory Access (NUMA) effects, and threads are likely to access queues located on another socket, which can affect performance. 
We present a simple strategy which significantly reduces overheads, and fits the analysis of Section~\ref{sec:proofs}.

Assume $N$ NUMA nodes with $T_i$ threads each (here, $i$ is the node index). 
In this configuration, we create $T_i \times C$ queues for each node. This way, the total number of queues stays the same as before. The straightforward idea is for threads to preferentially access queues from the same node, so all the synchronization conflicts are likely to be resolved on the last-level cache (usually, L3) of the current NUMA node. 
For this, we use a weighted probability distribution to be used when choosing a new queue to sample as part of either the Multi-Queue or the Stealing Multi-Queue. 
For a given thread, all $T_i \times C$ queues associated with its own node will have  weight $1$ while all other queues, associated with other nodes, will have weight $1 \over K$, where $K > 1$ is a constant. Intuitively, by adjusting the multiplier $K$, we balance fairness with the number of ``out-of-node'' accesses.

Specifically, given a thread from note $i$, the total weight of all queues is $W_i = \sum_{j \neq i}^{} {T_j \cdot C \over k} + T_i \times C$.
The probability of choosing a queue from the same node by a thread from the node $i$ is $T_i C / W_i$. Thus the expected number of ``internal'' queue choices by all threads in node $i$ is $E_i = T_i^2 \cdot C / W_i$, and the expected ratio of ``internal'' queue choices by \emph{all} threads in \emph{all} nodes is $E = \sum E_i$. This metric represents the ``NUMA-friendliness'' of the algorithm.
Typically, since we use all cores, the number of threads on each NUMA node is equal to $T / N$, where $T = \sum_i T_i$. When $K > N$, which is reasonable in practice, we simplify to the expression of $E$ to
\noindent\vspace{-0.5em}
\[E_{int} \approx T \times \left(1 - {1 \over K}\right).\] 


In our implementation, we aim to keep this ratio constant as we increase the thread count, and therefore we make $K$ depend linearly depend on the number of threads $T$.


\section{Evaluation} \label{sec:experiments}
We now examine how the various optimizations we discussed impact the performance of the Multi-Queue, 
and compare relative to high-performance scheduling heuristics, such as OBIM and and its adaptive PMOD version, Spray-List~\cite{SprayList}, and the random-enqueue local-dequeue (RELD) algorithm from~\cite{RELD}. 
We examine throughput versus a single-threaded baseline (and hence also scalability) but also  the total amount of additional work incurred due to scheduling. 
We build on Galois~\cite{Nguyen13}, a popular graph processing framework, 
which is still maintained and provides efficient implementations of the above two schedulers. 
Our implementation will be made open-source upon publication.


\paragraph{Hardware.}
Experiments were performed an AMD machine featuring two EPYC 7702 64-Core processors with hyper-treading for 256 total hardware threads, and an Intel machine with features four Intel Xeon Gold 5218 processors with 16 cores each for 128 hardware threads in total.

\begin{table}[t]
    \centering
    {\small
    \begin{tabular}{|p{1.55cm}|p{0.9cm}|p{1.2cm}|p{7cm}|}
    \hline
\large{\textbf{Graph}} & \large{\textbf{|V|}}&\large{\textbf{|E|}} & \large{\textbf{Description}}    
\\
\hline
 \textbf{USA} & \centering\large$24M$ & \centering\large$58M$ & Full USA roads \\\hline
 \textbf{WEST} & \centering\large$6M$  & \centering\large$15M$ & Roads of the western part of the USA \\\hline
 \textbf{TWITTER} & \centering\large$41M$ & \centering\large$1468M$ & Directed graph describing follower relation in Twitter	 \\\hline
  \textbf{WEB} & \centering\large$50M$& \centering\large$1930M$ & Directed web crawl of .sk domain \\\hline
  \end{tabular}
  }
    \vspace{0.3em}
    \caption{Input graphs for the algorithms used to benchmark different schedulers. }
    \label{table:graphs}
\end{table}

\begin{figure*}
    \centering
    \includegraphics[width=0.92\textwidth]{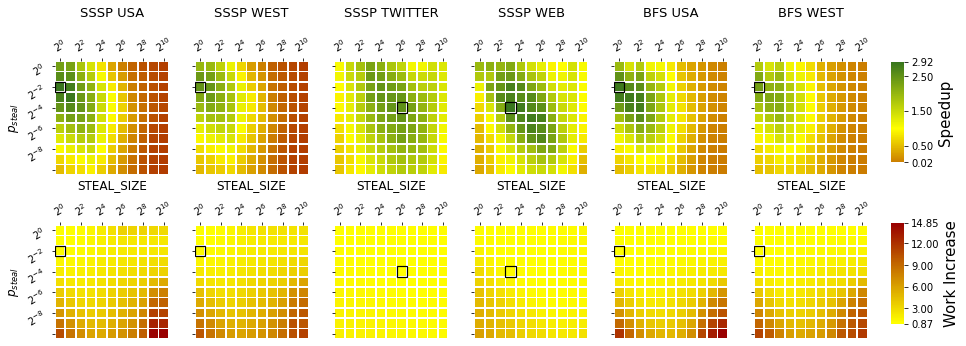}
    {
    \color{lightgray}
    \centerline{\rule{0.8\textwidth}{0pt}}
    }
    \vspace{-2em}
    \caption{Ablation of stealing probability $p_{\idlow{steal}}$ and steal buffer size, for \smq{} implemented using $d$-ary heaps, relative to wasted work, for a subset of benchmarks. Experiments are executed on the AMD machine on $256$ threads. The baseline is the Multi-Queue on $256$ threads with $C$ = 4. The fastest configuration for each benchmark is highlighted with a black border. Best viewed in color.}
    \label{fig:mq_insertbatch_deletebatch_paper}
\end{figure*}

\paragraph{Benchmarks.}
We use the real-world road and social graphs listed in Table~\ref{table:graphs} for experiments. 
The first two graphs represent the full USA road network and its West part. 
The second two graphs represent follower relations in the Twitter social network and a web crawl of .sk domain; edge weights are uniform random  in the range \texttt{[0,255]}.

We use various graph algorithms to test schedulers under different workloads: \textbf{Single-Source Shortest Paths (SSSP):} The Galois implementation of SSSP based on delta-stepping. We evaluate it on the USA and WEST road graphs, and on TWITTER and WEB social network graphs. \textbf{Breadth-First Search (BFS):} The classic  traversal algorithm a graph, where the weight of each edge is 1, evaluated on USA and WEST road graphs, and on TWITTER and WEB social network graphs.
 \textbf{A*:} This algorithm calculates the distance between two vertices, guided by the expected distance to the destination vertex from the currently visiting one. As a heuristic, the equi-rectangular approximation is used. We evaluate it on the USA and WEST road graphs.
 \textbf{Minimum Spanning Tree (MST):} We use Boruvka’s algorithm to find a spanning tree over all vertices with minimum total edge weight, with task priority equal to the degree of the associated vertex. We evaluate it on the USA and USA-WEST graphs.

\paragraph{Metrics.} 
For all runs, we record end-to-end times for the given tasks, as well as total number of tasks executed, to measure wasted work relative to the baseline. 
We use 10 repetitions, and show the average. 
(Standard deviation is low, so we omit confidence intervals for readability.)

\paragraph{Methods and Tuning.} 
We use the PMOD and OBIM baselines provided as part of Galois~\cite{Nguyen13}. 
We follow PMOD~\cite{PMOD} for parameter and benchmark setups. 
Specifically, we start from their optimized choices for the $\Delta$ and \texttt{CHUNK\_SIZE} parameters, for both OBIM and adaptive PMOD schedulers, and perform additional tuning to maximize  throughput on our setup. 
Full experimental data is presented in Appendix Section~\ref{appendix:obim_pmod_tuning}.
We also examine variants of the classic Multi-Queue, including Random Enqueue Local Dequeue (RELD), with and without the suggested buffering and locality optimizations. (In brief, we found that these optimizations, without stealing, improved the throughput of the baseline implementation by up to $3.4\times$).
 Further, we implemented the \ccmq{} proposal described in Section~\ref{sec:algorithm} both with local skip-lists, and using local heaps. 
We also implemented the buffering optimization. 

\begin{figure*}
    \centering
    \includegraphics[width=\textwidth]{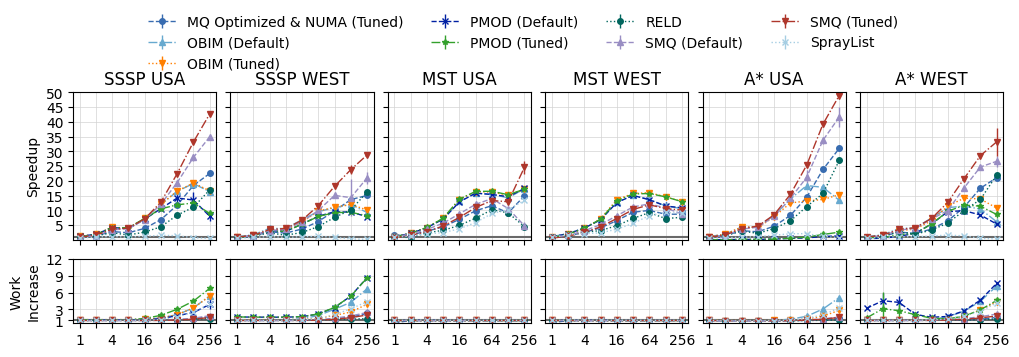}
    \includegraphics[width=\textwidth]{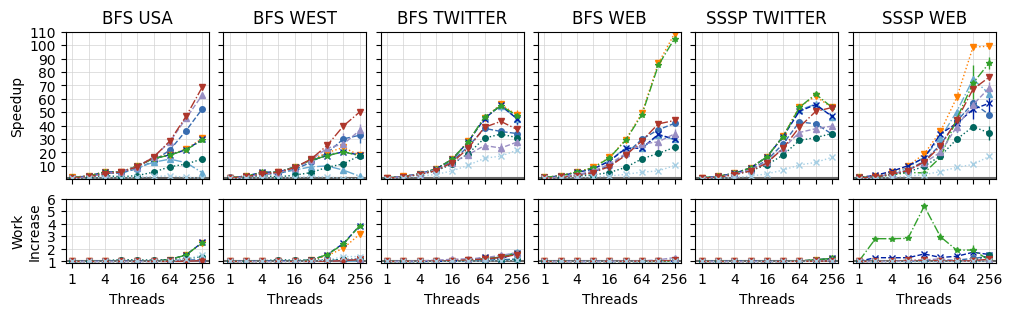}
    \caption{Comparison between the best and the default variants of SMQ, optimized classic MQ, tuned OBIM, PMOD, and other advanced schedulers on the AMD platform. Speedups are versus the baseline Multi-Queue, running on a single thread. See text for full details.} 
    \label{fig:smq_best}
\end{figure*} 

\paragraph{Impact of Optimizations and Parameters.} 
Our first set of experiments aims to examine the impact of the \emph{task batching} and \emph{temporal locality} optimizations on the performance of Multi-Queue-based proposals. We begin with the classic Multi-Queue and its variants. We found that most variants ``scale,'' in the sense that their maximal throughput is achieved at the highest number of threads. (Lower performance after 128 threads is explained by the fact that hyper-threads share their cache, thus technically halving its size.) Generally, our  optimizations significantly improve the performance of the Multi-Queue by up to $3.4\times$. Appendix~\ref{appendix:mq_optimizations} presents a set of experiments on both AMD and Intel platforms, evaluating in terms of wall-clock speedup, and total number of tasks performed, relative to the sequential baseline. 

Next, we examine the impact on the various parameter values, in particular \emph{local queue implementation}, \emph{stealing buffer size} and \emph{stealing probability}, on the performance of the \emph{Stealing Multi-Queue (SMQ).}
Figure~\ref{fig:mq_insertbatch_deletebatch_paper} presents results for the heap-based SMQ, in terms of heatmaps examining \emph{speedup} (top row) and \emph{wasted work} (work increase) for the various parameter values. (Appendix~\ref{appendix:smq:tuning} contains a full set of results on all graphs and on both AMD and Intel platforms, also examining  the skip-list variant.)
We observe that the speedups are fairly stable across parameter values, except at very high batch sizes, and that loss of performance  correlates well with increase in number of tasks (work) performed. 


\paragraph{Comparison with PMOD and OBIM.} 
Figure~\ref{fig:smq_best} presents the results of the speedup and total work comparison between the SMQ (both using heaps and skip-lists), the best-performing variant of the optimized Multi-Queue with NUMA-aware sampling, PMOD, OBIM, RELD, and Spray-List schedulers, executed on the AMD machine. 
(Please see the Appendix~\ref{appendix:final:magnified} for magnified versions of this graph on both AMD and Intel platforms.) 
We present two variants of the SMQ with respect to the tuning process. 
\texttt{SMQ (Tuned)} adopts parameter values derived via task-specific tuning (see Figure~\ref{fig:mq_insertbatch_deletebatch_paper}), while \texttt{SMQ (Default)} picks a set of reasonable default parameters ($\idlow{STEAL\_SIZE} = 4, p_{\idlow{steal}} = 1/8, K = 8$) across all benchmarks. 

We note the following. First, an examination of the throughput graphs (top) shows that the SMQ provides the highest throughput at 256(128 for Intel) threads in 10 out of the 12 experiments, and is virtually tied in the 6th experiment (MST). 
The only experiments where it has lower overall performance relative to OBIM and PMOD is the BFS experiment on social network graphs. 
The simple explanation is that, here, throughput is more important than task ordering: the task priorities are ``flat'' due to the graph having high expansion, and there is essentially no difference in terms of the total number of tasks performed by the different schedulers. 
On all other tasks, we find that the SMQ provides similar or higher performance relative to scheduling heuristics, of up to $1.84 \times$.  
Focusing on the SSSP and A* experiments, the main reason is the lower number of total tasks performed, as well as a relatively low per-operation cost due to batching and the stealing buffer implementation. Of note, the number of tasks performed appears to be near-constant with respect to the number of threads/queues.
We further emphasize that this finding is valid even in the absence of task-specific tuning. 

Next, we notice that, perhaps surprisingly, even the standard Multi-Queue can provide competitive results on a range of benchmarks, as long as it uses the optimized and NUMA-aware variants. However, its throughput does not surpass that of the SMQ, motivating the stealing optimization. 
Finally, we note that both OBIM and PMOD provide very competitive results, and good scalability in all benchmarks. 
We encourage the reader to examine the Appendix for additional experiments and ablation studies. 

\section{Discussion}

We presented an in-depth investigation of scalable priority scheduling for graph algorithms, focusing on a new Multi-Queue variant which we show to be competitive with state-of-the-art scheduling heuristics, while still providing theoretical guarantees. 
In future work, we plan to examine comparisons with alternative parallelization  approaches~\cite{dhulipala2018theoretically}, and other applications, such as   iterative machine learning algorithms e.g.~\cite{aksenov2020scalable}.


\bibliography{references}
\bibliographystyle{plain}

\appendix

\onecolumn{}

\section{Full Analysis}

\paragraph{General Approach.} Our analysis will generalize the argument of~\cite{AKLN17} to the more intricate variant of the Multi-Queue process which we consider. 
The first step in both analyses is similar: we describe coupling which equates the discrete $SMQ$ process described above to a \emph{continuous} balls into bins process.

We start with the insertion phase.
Imagine we replace the $n$ queues with $n$ bins, each of which initially contains a single ball of label $0$. 
We start by inserting infinitely many balls into bins, so that for each bin $i$ the difference between the labels of two consecutive balls
is an exponential random variable with mean $\pi_i$. 
Then, we perform  $T$ insertions in the modified $SMQ$ process (recall that elements are inserted in the increasing rank order) as follows :
for each element with rank $t \le T$, we insert it into the queue $i$ if the label(ball) with rank $t$ is inserted in the bin $i$.
Finally, we remove the labels which have rank larger than $T$ from the bins. 
Note that after the insertion phase the rank distributions of labels (balls) in bins and elements in queues are equal.
In order to show that the coupling is valid we need the following Lemma, which proves that elements are inserted into the queues 
in the same way as the original $SMQ$ process.

\begin{lemma}[Coupling, Theorem 2 in \cite{AKLN17}]  \label{lem:couplingAKLN}
Let $I_{i \leftarrow j}$ be the event that the label with rank $j$ is located in bin $i$ and let $Pr[I_{i \leftarrow j}]$
be its probability. We have that $I_{i \leftarrow j}$
is independent from $I_{i' \leftarrow j'}$
for all $j \neq j'$ and 
$Pr[I_{i \leftarrow j}]=\pi_i$.
\end{lemma}

Assuming that the number of initial insertions-$T$ is large, we describe how coupling works in a second \emph{removal phase}. 
In each removal step, we first pick a ``local'' bin $i$ with probability $\pi_i$, and then flip an additional coin to decide whether to steal or not. With probability \stealprob{}, we decide to steal. 
If so, we pick a second bin uniformly at random from among all $n$, and examine the two balls of lowest label (highest priority) on ``top'' of the two bins. 
Following the priority process, we remove the ball of lowest label among the two, uncovering the next ball in the bin. 
If the coin flip dictated that we \emph{not steal}, then we directly remove the ball on top of bin $i$. (Notice that, in both cases, the label increment for a chosen bin is exponentially-distributed).
When batch size $B>1$, we remove $B$
 labels from the bin, but for the simplicity we are going to deal with the $B=1$ case first.
 In the $SMQ$ process we make the same random choices as in the balls and bins process and follow the same removal procedure, and since the rank distributions are equal after the insertion phase, it is easy to see that they will stay equal after every removal. That is, if the ball is removed from the bin $k$ then the element is removed from the queue $k$. The \emph{rank cost} at a step is the \emph{rank} of the label of the removed ball among all labels still present in all the bins. 
Thus, to minimize this cost, we would like to always remove the ball of lowest label, but this is obviously unlikely due to the random nature of our process. However, we will show that the rank cost at each step is still well-bounded in expectation. 
Because of the coupling, this will imply that the rank cost in the $SMQ$ process is well-bounded in expectation as well, we therefore focus on
the analysis of the ranks in the balls into bins process.

Let $\ell_i(t)$ be the label on top of bin $i$ at time step $t$.
Let $x_i(t)=\ell_i(t)/n$ be the normalized value of this label, 
and let $\mu(t)=\frac{1}{n} \sum_{i=1}^n x_i(t)$ be the average normalized label at time step $t$.
As in~\cite{PTW15, AKLN17}, we will be analyzing the potential function 
\begin{equation}
    \Gamma(t)=\sum_{i=1}^n e^{-\alpha(x_i(t)-\mu(t))}+\sum_{i=1}^n e^{\alpha(x_i(t)-\mu(t))},
\end{equation} 
for a suitable constant $\alpha$, which we will define later.

We now overview the analysis of~\cite{AKLN17}, and then proceed to outline the major differences. 
A key ingredient of this analysis is the $(1+\beta)$-choice random process~\cite{PTW15}, which is similar to the continuous process we defined above, with the difference that in order to delete an element, with probability $(1-\beta)$ we choose a single bin uniformly at random and delete from it, and with probability $\beta$
we choose two bins uniformly at random and delete from the one which has a ball of lower label on top. (Thus, in expectation, we perform $(1 + \beta)$ choices at a step.)

In our analysis, we will aim to ``loosely couple'' the SMQ process with a variant of the $(1 + \beta)$-choice process, with parameters $\beta=\Omega(\gamma)$ and $\alpha=\Theta(\beta)$. As we will see, this coupling will not be exact (unlike the coupling between the discrete and continuous processes above). Yet, we will be able to use the properties of the $(1 + \beta)$-choice process to bound the properties of the SMQ process. 

Returning to the $(1 + \beta)$-choice process, 
Lemma 2 in \cite{AKLN17} shows the following potential bound, for any step $t \ge 0$:
\begin{align} \label{eqn:gammaboundstep}
\mathbb{E}[\Gamma(t+1)|x_1(t),x_2(t),...,x_n(t)] \le \Big(1-\Omega\left(\frac{\beta^2}{n}\right)\Big)\Gamma(t)+poly\left(\frac{1}{\beta}\right).
\end{align}
Assuming that Inequality~(\ref{eqn:gammaboundstep}) holds, Lemma 3 in \cite{AKLN17} proves that for any time step $t \ge 0$:
\begin{align} \label{eqn:gammaboundglobal}
\mathbb{E}[\Gamma(t)] \le O\Big(poly\left(\frac{1}{\beta}\right)n\Big).
\end{align}
Finally, given that (\ref{eqn:gammaboundglobal}) holds, Theorems 4 and 5 in \cite{AKLN17} show
that for $t \ge 0$, expected maximum rank of labels on top of bins at time step $t$ is
$O\Big(\frac{n}{\beta}(\log {n}+\log {\frac{1}{\beta}})\Big)$ and expected average rank is 
$O\Big(\frac{n}{\beta}\log {\frac{1}{\beta}}\Big)$.

Relative to this argument, we will aim to show that there exists  $\beta$ such that Equation (\ref{eqn:gammaboundstep}) holds for continuous process. This will imply  bounds on the expected average rank and expected maximum rank based on $\beta$.

We would like to point at out that we provide Lemma and Theorem numbers based on full version of \cite{AKLN17}.

\paragraph{Bounding the Potential.}
Fix an arbitrary time step $t$, and the labels  $x_1(t), x_2(t), ... x_n(t)$  on top of the bins.
Let $\Gamma_c(t+1)$ be the potential at time step $t+1$ if the label is deleted by our 
continuous process, and let $\Gamma_{\beta}(t+1)$ be the potential at time step $t+1$
if the label is deleted by $1+\beta$ process.
Our goal is to show that there exists $\beta$ such that 
\begin{align}
    \mathbb{E}[\Gamma_c(t+1) |x_1(t),x_2(t),...,x_n(t)] \le  &\mathbb{E}[\Gamma_{\beta}(t+1)||x_1(t),x_2(t),...,x_n(t)].
\end{align}

We assume that bins are sorted in the increasing order of their top labels.
Let $S_c(i)$ be the probability that we delete a label from one of the first $i$ counters in our process continuous process, and let $S_{\beta}(i)$ be the same probability for the $(1+\beta)$ process. 
We can prove the following property:
\begin{lemma} \label{lem:majorpr}
If $\gamma\Big(\frac{1}{p_{\idlow{steal}}}-1\Big) \le \frac{1}{2n}$ and $\beta=\frac{p_{\idlow{steal}}}{2(1+\gamma)}$, we get that 
for any $1 \le i \le n, S_c(i) \ge S_{\beta}(i).$   
\end{lemma}
\begin{proof}

First, notice that we have that:
\begin{align*}
S_{\beta}(i)=(1-\beta)\frac{i}{n}+\beta\frac{i^2+2i(n-i)}{n^2}=\frac{i}{n}+\beta\frac{i(n-i)}{n^2}.
\end{align*}

Further, recall that, for each bin $i$, $1-\gamma \le \frac{1}{\pi_i n} \le 1+\gamma$, for $\gamma \le 1/2$.
Here we slightly abuse the notation for $\pi_i$, since we assumed that bins are sorted in increasing label order, but this does not change the proof since we will only need to show the lower bound on $\pi_i$, which holds for every $i$.
We have that for any $i$, $\pi_i \ge \frac{1}{n(1+\gamma)}$.
Let $P_i=\sum_{j=1}^i \pi_i \ge \frac{i}{n(1+\gamma)}$.
We get that :
\begin{align*}
    S_{c}(i) &= P_i+(1-P_i)p_{\idlow{steal}}\frac{i}{n}=P_i(1-p_{\idlow{steal}}\frac{i}{n})+p_{\idlow{steal}}\frac{i}{n} \\ &\ge \frac{i}{(1+\gamma)n}(1-p_{\idlow{steal}}\frac{i}{n})+p_{\idlow{steal}}\frac{i}{n}.
\end{align*}
Thus:
\begin{align*}
    S_c(i)-S_{\beta}(i) &
    \ge
 \frac{i}{(1+\gamma)n}-p_{\idlow{steal}}\frac{i^2}{(1+\gamma)n^2} -(1-p_{\idlow{steal}})\frac{i}{n}-\beta\frac{i(n-i)}{n^2}.
\end{align*}
Hence, to complete the proof we need to show that
\begin{align} \label{eqn:betaupper}
   \beta &\le \Big(\frac{i}{(1+\gamma)n}-p_{\idlow{steal}}\frac{i^2}{(1+\gamma)n^2}-(1-p_{\idlow{steal}})\frac{i}{n}\Big)\frac{n^2}{i(n-i)} \nonumber \\ &=
\frac{n}{(n-i)}\Big(\frac{1}{1+\gamma}-(1-p_{\idlow{steal}})-p_{\idlow{steal}}\frac{i}{(1+\gamma)n}\Big).
\end{align}
Recall that $1 \le i \le n$ is an integer and for $i=n$, $S_c(i)=S_{\beta}(i)=1$.
Thus, we need to show that (\ref{eqn:betaupper}) holds for $1 \le i \le n-1$.
Next, we can prove that $\frac{n}{(n-i)}\Big(\frac{1}{1+\gamma}-(1-p_{\idlow{steal}})-p_{\idlow{steal}}\frac{i}{(1+\gamma)n}\Big)$ is minimized for $i=n-1$.
Thus, after plugging $i=n-1$ in (\ref{eqn:betaupper}), we need to show that 
\begin{align*}
   \beta &\le
n\Big(\frac{1}{1+\gamma}-(1-p_{\idlow{steal}})-p_{\idlow{steal}}\frac{(n-1)}{(1+\gamma)n}\Big) =p_{\idlow{steal}}\frac{n}{1+\gamma}\Bigg(\frac{1}{n}-\gamma\Big(\frac{1}{p_{\idlow{steal}}}-1\Big)\Bigg) .
\end{align*}

We can now set $\gamma\Big(\frac{1}{p_{\idlow{steal}}}-1\Big) \le \frac{1}{2n}$ and $\beta=\frac{p_{\idlow{steal}}}{2(1+\gamma)}$, and the above inequality holds.
This completes the proof of the lemma.
\end{proof}

We can now show that the potential bound (3) holds. Formally:
\begin{lemma} \label{lem:couplingstep}
Fix any time step $t$ and labels $x_1(t),x_2(t),...,x_n(t)$. Let $\gamma\Big(\frac{1}{p_{\idlow{steal}}}-1\Big) \le \frac{1}{2n}$ and $\beta=\frac{p_{\idlow{steal}}}{2(1+\gamma)}$. Also, let $w_t$ be the random weight which we use to generate new labels for both continuous and $1+\beta$ processes. Then: 
\begin{align}
    \mathbb{E}[\Gamma_{c}(t+1)-\Gamma_{\beta}(t+1)|x_1(t),x_2(t),...,x_n(t),w_t] \le 0.
\end{align}
\end{lemma}
\begin{proof}
To show the proof of the lemma we use the coupling similar to the one used in \cite{PTW15}, Theorem 3.1. We again assume that bins are sorted in the increasing label order. The coupling works as follows:
At step $t$ we pick probability $0 \le p < 1$ uniformly at random,
for our continuous process we delete label from bin $i$, such that $S_{c}(i-1) \le p < S_c(i)$
(we assume that $S_c(0)=0$) and for $1+\beta$ process we delete label from bin $j$, if 
$S_{\beta}(j-1) \le p < S_{\beta}(j)$. 
We set $x_{c,i}(t+1)=x_i(t)+w_t/n$ and $x_{\beta,j}(t+1)=x_j(t)+w_t/n$.
Here $x_{c,i}(t+1)$ and $x_{\beta,j}(t+1)$ are new normalized labels on top of bins $i$ and $j$
for our continuous and $1+\beta$ processes correspondingly (The rest of the labels do not change).
It is straightforward to verify that coupling is valid, since $S_c(i)-S_c(i-1)$ is exactly
the probability of deleting label from bin $i$ in our continuous process (the same thing is valid for $1+\beta$ process). We would like to note the we are not able to use Theorem 3.1 in \cite{PTW15} directly since it assumes that $w_t=1$.
Lemma \ref{lem:majorpr} gives us that $i \le j$, and hence $x_i(t) \le x_j(t)$
Recall that $\mu(t)=\frac{1}{n} \sum_{k=1}^n x_k(t)$, for both processes $\mu(t+1)=\mu(t)+w_t/n^2$.
and for $k \neq i,j$, $x_k(t+1)=x_k(t)$, hence to prove that 
potential at step $t$ is smaller for continuous process we just need to check 
at how new labels of bins $i$ and $j$ effect potentials.

First we show that 
\begin{align*}
    e^{\alpha(x_i(t)+w_t/n-w_t/n^2-\mu(t))}+e^{\alpha(x_j(t)-w_t/n^2-\mu(t))} \le e^{\alpha(x_i(t)-w_t/n^2-\mu(t))}+e^{\alpha(x_j(t)+w_t-w_t/n^2-\mu(t))}.
\end{align*}
which after diving both sides by $e^{\alpha(-w_t/n^2-\mu(t))} \ge 0$ is the same as
\begin{align*}
    e^{\alpha(x_i(t)+w_t/n)}+e^{\alpha(x_j(t))} \le e^{\alpha(x_i(t))}+e^{\alpha(x_j(t)+w_t/n)}.
\end{align*}
After rearranging terms, this can be rewritten as
\begin{align*}
    (e^{\alpha w_t/n}-1)(e^{\alpha(x_j(t))}-e^{\alpha(x_i(t))}) \ge 0.
\end{align*}
The above inequality holds since $\alpha \ge 0$, $w_t \ge 0$ and $x_j(t) \ge x_i(t)$.
Next, we show that

\begin{align*}
    e^{-\alpha(x_i(t)+w_t/n-w_t/n^2-\mu(t))}+e^{-\alpha(x_j(t)-w_t/n^2-\mu(t))} \le e^{-\alpha(x_i(t)-w_t/n^2-\mu(t))}+e^{-\alpha(x_j(t)+w_t-w_t/n^2-\mu(t))}.
\end{align*}
which is the same as
\begin{align*}
    e^{-\alpha(x_i(t)+w_t/n)}+e^{-\alpha(x_j(t))} \le e^{-\alpha(x_i(t))}+e^{-\alpha(x_j(t)+w_t/n)}.
\end{align*}
Rearranging terms, this can be rewritten as
\begin{align*}
    (e^{-\alpha w_t/n}-1)(e^{-\alpha(x_j(t))}-e^{-\alpha(x_i(t))}) \ge 0.
\end{align*}
The above inequality clearly holds since $\alpha \ge 0$, $w_t \ge 0$ and $x_j(t) \ge x_i(t)$ (in this case, both terms are negative).
\end{proof}

\paragraph{Proof of Theorem~\ref{thm:main}.} Finally, we can prove our main result.
First, fix normalized labels on the top of the bins: \\$x_1(t), x_2(t), ..., x_n(t)$.
Lemma 2 in full version of \cite{AKLN17} shows that 
for any step $t \ge 0$:
\begin{align} \label{eqn:gammaperstepthm}
\mathbb{E}[\Gamma_{\beta}(t+1)|x_1(t),x_2(t),...,x_n(t)] \le \Big(1-\Omega(\frac{\beta^2}{n})\Big)\Gamma(t) +poly(\frac{1}{\beta}).
\end{align}
and Lemma \ref{lem:couplingstep} gives us that for $\beta=\frac{p_{\idlow{steal}}}{2(1+\gamma)}$
\begin{align*}
\mathbb{E}[\Gamma_{c}(t+1)-\Gamma_{\beta}(t+1)|x_1(t),x_2(t),...,x_n(t),w_t] \le 0.
\end{align*}
We remove conditioning on $w_t$ and slightly abuse the notation in $1+\beta$ process:
we assume that $w_t$ is $Exp(\pi_i)$ if continuous process deletes from bin $i$,
this slightly changes $1+\beta$ process (which might delete from different bin), but proof of (\ref{eqn:gammaperstepthm}) in \cite{AKLN17}
will still be correct since it only uses that $w_t=Exp(\pi_i)$, (To be more precise, the proof uses expectation 
and moment generating function of exponential random variable)
for $1-\gamma \le \frac{1}{\pi_i n} \le 1+\gamma$, and $\gamma \le 1/2$ and does not make any assumptions about $i$.
Hence we get that 
\begin{align*}
\mathbb{E}[\Gamma_{c}(t+1)-\Gamma_{\beta}(t+1)|x_1(t),x_2(t),...,x_n(t)] \le 0.
\end{align*}
and 
this means that 
\begin{align} 
\mathbb{E}[\Gamma_{c}(t+1)|x_1(t),x_2(t),...,x_n(t)] &\le \Big(1-\Omega(\frac{\beta^2}{n})\Big)\Gamma(t)\nonumber +poly(\frac{1}{\beta}).
\end{align}
Lemma 3 in the full version of \cite{AKLN17} proves that for any time step $t 
\ge 0$:
\begin{align} \label{eqn:gammaboundglobalthm}
\mathbb{E}[\Gamma_c(t)] \le O\Big(poly(\frac{1}{\beta})n\Big).
\end{align}
Finally, we can use Theorems 4 and 5 in the full version of \cite{AKLN17} to show
that for the continuous process and time  $t \ge 0$, the expected maximum rank of labels on top of bins at time step $t$ is
$O\Big(\frac{n}{\beta}(\log {n}+\log {\frac{1}{\beta}})\Big)$ and expected average rank is 
$O\Big(\frac{n}{\beta}\log {\frac{1}{\beta}}\Big)$. This is possible since these theorems only use the upper bound
on $\mathbb{E}[\Gamma(t)]$, regardless of which process we used to derive this bound).
Plugging in $\beta=\frac{p_{\idlow{steal}}}{2(1+\gamma)}$ in the above rank bounds and
using Lemma \ref{lem:couplingAKLN} completes the proof of the theorem for $B=1$.
We proceed by specifying what will change in the proof if $B>1$.
Lemma $\ref{lem:couplingAKLN}$ holds as before.
In the continuous process we delete $B$ labels instead of just one.
This means that if continuous process deletes $B$ labels from bin $i$ at step $t$.
We have that $\ell_i(t+1)=\ell_i(t)+\sum_{k=1}^B Exp(\pi_i)$ (Recall that $\ell_i(t)$ is a label
on top of bin $i$ at step $t$). Hence, we use a normalized label: $x_i(t)=\frac{\ell_i(t)}{Bn}$.
Consider random variable $\sum_{k=1}^B \frac{Exp(\pi_i)}{B}$ (this is by how much normalized label on top of bin $i$ increases, if we ignore factor of $1/n$). We have that $\mathbb{E}\Big[\sum_{k=1}^B \frac{Exp(\pi_i)}{B}\Big]=\mathbb{E}\Big[Exp(\pi_i)\Big]$
Also, by convexity of exponential function and Jensen's inequality we have that 
the moment generating function of $\sum_{k=1}^B Exp(\pi_i)/B$ is upper bounded by moment
generating function of $Exp(\pi_i)$ (This is important since the proof of (\ref{eqn:gammaperstepthm}) in \cite{AKLN17} uses
moment generating function). This means that we can apply all the steps of the proof exactly as in the case of $B=1$.
The only difference will be last step, where we scale back by $1/nB$ instead of $1/n$.
Hence bounds on expected maximum rank and expected average rank become $O\Big(\frac{nB}{\beta}(\log {n}+\log {\frac{1}{\beta}})\Big)$ and 
$O\Big(\frac{nB}{\beta}\log {\frac{1}{\beta}}\Big)$.

\clearpage
\section{Tuning OBIM and PMOD Schedulers}\label{appendix:obim_pmod_tuning}
\begin{figure*}[h]
    \centering
    \includegraphics[width=1\textwidth]{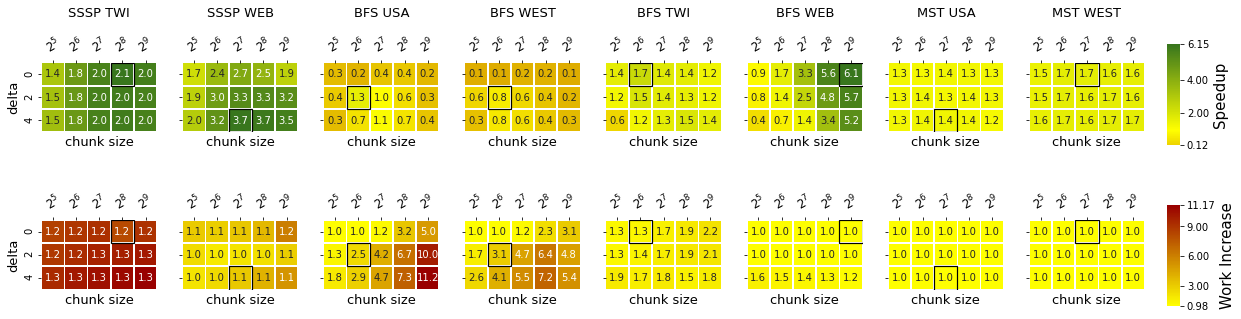}
    \includegraphics[width=0.48\textwidth]{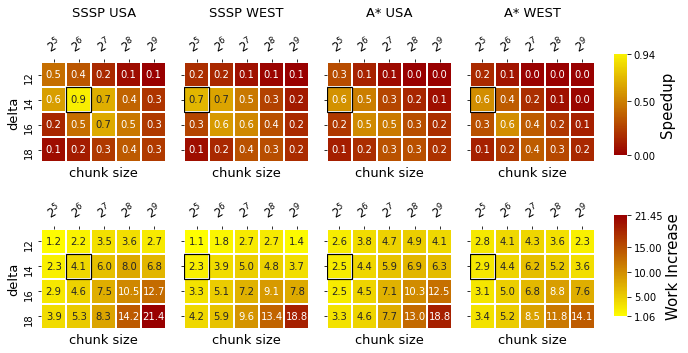}
    \vspace{-1em}
    \caption{Ablation of \obimdel{} and \chunksize{} parameters for OBIM. Experiments execute on $256$ threads on the \textbf{AMD} platform. The baseline is the classic Multi-Queue on $256$ threads with $C$ = 4. The fastest configuration for each benchmark is highlighted with a black border. Best viewed in color.}
    \label{fig:obim_delta_amd}
\end{figure*}

\begin{figure*}[h]
    \centering
    \includegraphics[width=1\textwidth]{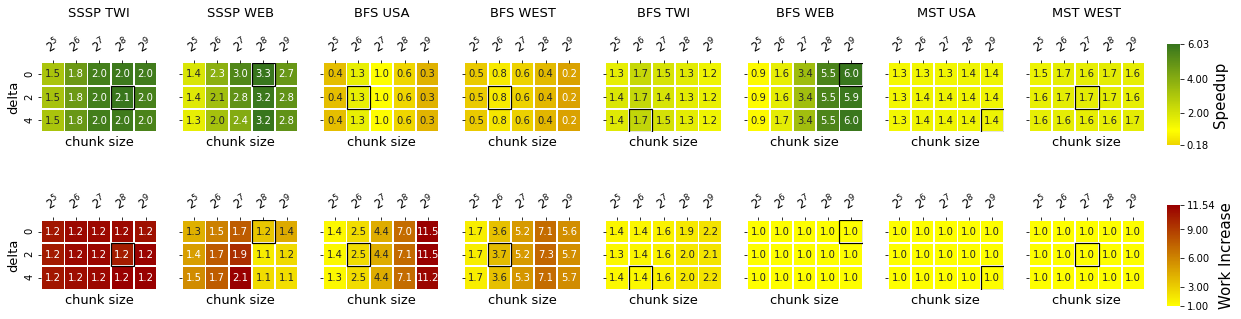}
    \includegraphics[width=0.48\textwidth]{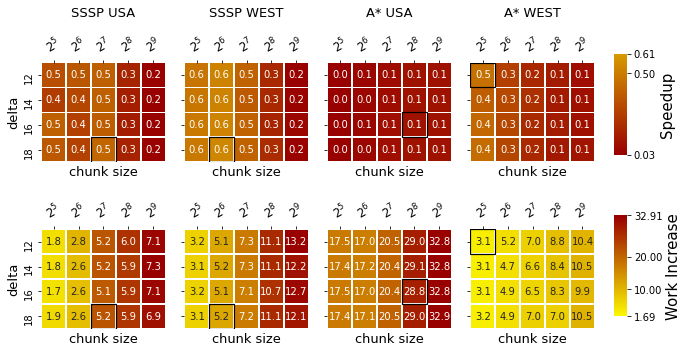}
    \vspace{-1em}
    \caption{Ablation of \obimdel{} and \chunksize{} parameters for PMOD. Experiments execute on $256$ threads on the \textbf{AMD} platform. The baseline is the classic Multi-Queue on $256$ threads with $C$ = 4. The fastest configuration for each benchmark is highlighted with a black border. Best viewed in color.}
    \label{fig:pmod_delta_amd}
\end{figure*}

\begin{figure*}[h]
    \centering
    \includegraphics[width=1\textwidth]{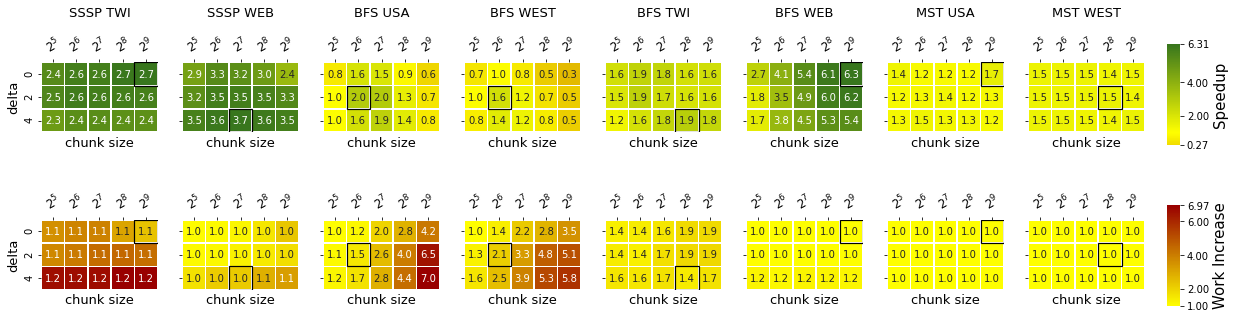}
    \includegraphics[width=0.48\textwidth]{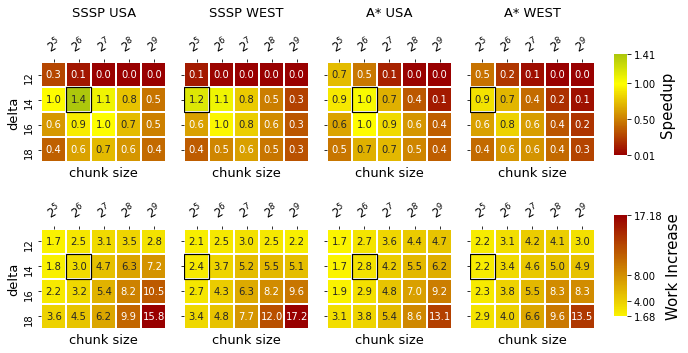}
    \vspace{-1em}
    \caption{Ablation of \obimdel{} and \chunksize{} parameters for OBIM. Experiments execute on $128$ threads on the \textbf{Intel} platform. The baseline is the classic Multi-Queue on $128$ threads with $C$ = 4. The fastest configuration for each benchmark is highlighted with a black border. Best viewed in color.}
    \label{fig:obim_delta_intel}
\end{figure*}

\begin{figure*}[h]
    \centering
    \includegraphics[width=1\textwidth]{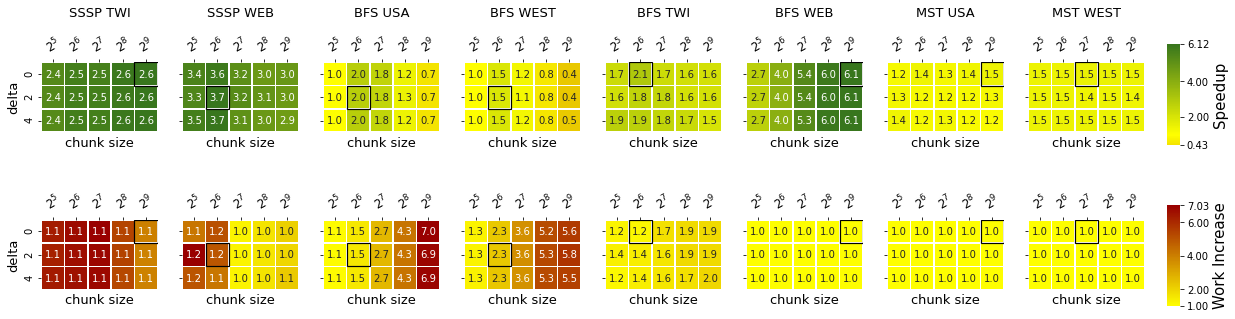}
    \includegraphics[width=0.48\textwidth]{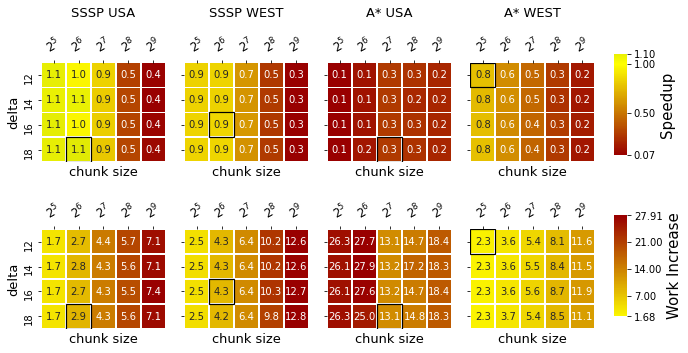}
    \vspace{-1em}
    \caption{Ablation of \obimdel{} and \chunksize{} parameters for PMOD. Experiments execute on $128$ threads on the \textbf{Intel} platform. The baseline is the classic Multi-Queue on $128$ threads with $C$ = 4. The fastest configuration for each benchmark is highlighted with a black border. Best viewed in color.}
    \label{fig:pmod_delta_intel}
\end{figure*}


\clearpage
\onecolumn
\section{Classic Multi-Queue Optimizations} \label{appendix:mq_optimizations}
Usually, Multi-Queues use $C \times T$ sequential queues under the hood, where $T$ is the number of threads and $C$ is some constant between $2$ and $8$. We consider different $C$ values to find the best one, and present the corresponding experimental data in Tables~\ref{table:c_amd}--\ref{table:c_intel}.

After that, we evaluate the \emph{task batching} and \emph{temporal locality} optimizations in Subsection~\ref{subsec:optimizations}. In short, task batching reduces the ratio between synchronization cost and task execution time by retrieving multiple tasks from the same queue at once in \del{} and buffering them in \ins{}. The temporal locality optimization reduces the cache coherence overhead by using the same queue for a sequence of \del{} or \ins{} operations, changing the ``temporally local'' queue with a constant probability. We evaluate all the four combinations of these optimizations, and present the corresponding results in Figures~\ref{fig:mq_insertprob_deleteprob}--\ref{fig:mq_insertbatch_deletebatch2}.  In addition, we compare the optimally configured combinations {---} the results are presented in Figures~\ref{fig:mq_combinations}--\ref{fig:mq_combinations2}. All the experiments are executed on both AMD and Intel platforms.

\begin{table}[h]
\small
\begin{center}
\begin{tabular}{ |c|c|c|c|c|c|c|c|c|c|c|c| }
\hline
 & \normalsize{\textbf{2}} & \normalsize{\textbf{3}} & \normalsize{\textbf{4}} & \normalsize{\textbf{5}} & \normalsize{\textbf{6}} & \normalsize{\textbf{7}} & \normalsize{\textbf{8}} \\
\hline
\normalsize{\textbf{BFS USA}} & 17.18 & 21.33 & 23.41 & 25.76 & 27.28 & \color{Numa}{\textbf{28.27}} & 27.44 \\
\hline
\normalsize{\textbf{BFS WEST}} & 17.64 & 20.31 & 20.69 & 22.05 & 23.10 & 23.65 & \color{Numa}{\textbf{23.77}} \\
\hline
\normalsize{\textbf{BFS TWI}} & \color{Numa}{\textbf{30.20}} & 30.06 & 29.28 & 27.81 & 29.09 & 28.98 & 28.29 \\
\hline
\normalsize{\textbf{BFS WEB}} & 15.95 & 16.41 & 17.65 & 18.03 & 18.06 & \color{Numa}{\textbf{18.46}} & 18.42 \\
\hline
\normalsize{\textbf{SSSP USA}} & 17.83 & \color{Numa}{\textbf{18.26}} & 17.53 & 18.10 & 17.35 & 16.95 & 16.34 \\
\hline
\normalsize{\textbf{SSSP WEST}} & 13.18 & \color{Numa}{\textbf{13.80}} & 13.52 & 12.67 & 11.40 & 11.74 & 11.29 \\
\hline
\normalsize{\textbf{SSSP TWI}} & \color{Numa}{\textbf{27.01}} & 26.86 & 26.51 & 24.98 & 24.14 & 25.11 & 25.03 \\
\hline
\normalsize{\textbf{SSSP WEB}} & 26.70 & 26.70 & 27.30 & 26.91 & 28.93 & \color{Numa}{\textbf{29.15}} & \color{Numa}{\textbf{29.15}} \\
\hline
\normalsize{\textbf{MST USA}} & 13.04 & 11.97 & 12.55 & 11.46 & 12.71 & \color{Numa}{\textbf{13.25}} & 12.87 \\
\hline
\normalsize{\textbf{MST WEST}} & 7.63 & 7.38 & 7.73 & 7.73 & \color{Numa}{\textbf{7.87}} & 7.22 & 7.40 \\
\hline
\normalsize{\textbf{A* USA}} & 23.47 & 24.74 & \color{Numa}{\textbf{26.27}} & 26.12 & 25.57 & 24.79 & 23.96 \\
\hline
\normalsize{\textbf{A* WEST}} & 18.01 & \color{Numa}{\textbf{19.52}} & 18.78 & 18.31 & 17.57 & 16.64 & 16.06 \\
\hline
\end{tabular}
\end{center}
\vspace{0.5em}
\caption{Speedup of the classic Multi-Queue with various $C$ executed on $256$ threads on the \textbf{AMD} platform. The baseline is sequential priority queue execution on a single thread. The best speedups are highlighted with {\color{Numa}{\textbf{\numacol{}}}}.}
\label{table:c_amd}
\end{table}

\begin{table}[h]
\small
\begin{center}
\begin{tabular}{ |c|c|c|c|c|c|c|c|c|c|c|c| }
\hline
 & \normalsize{\textbf{2}} & \normalsize{\textbf{3}} & \normalsize{\textbf{4}} & \normalsize{\textbf{5}} & \normalsize{\textbf{6}} & \normalsize{\textbf{7}} & \normalsize{\textbf{8}} \\
\hline
\normalsize{\textbf{BFS USA}} & 9.45 & 11.65 & 12.70 & 13.49 & 13.74 & \color{Numa}{\textbf{13.77}} & \color{Numa}{\textbf{13.77}} \\
\hline
\normalsize{\textbf{BFS WEST}} & 9.35 & 11.00 & 12.19 & 12.76 & 13.01 & 12.95 & \color{Numa}{\textbf{13.60}} \\
\hline
\normalsize{\textbf{BFS TWI}} & 29.00 & 29.04 & 28.62 & 26.81 & \color{Numa}{\textbf{29.12}} & 28.99 & 27.91 \\
\hline
\normalsize{\textbf{BFS WEB}} & 12.32 & 13.05 & 13.52 & 13.94 & 13.91 & 14.88 & \color{Numa}{\textbf{15.03}} \\
\hline
\normalsize{\textbf{SSSP USA}} & 10.40 & 12.68 & 13.49 & \color{Numa}{\textbf{13.53}} & 13.35 & 13.04 & 13.19 \\
\hline
\normalsize{\textbf{SSSP WEST}} & 9.45 & 9.85 & \color{Numa}{\textbf{10.96}} & 10.83 & 10.53 & 10.04 & 9.74 \\
\hline
\normalsize{\textbf{SSSP TWI}} & 27.30 & 26.98 & \color{Numa}{\textbf{27.64}} & 27.30 & 26.36 & 26.30 & 26.90 \\
\hline
\normalsize{\textbf{SSSP WEB}} & 23.65 & 24.58 & 25.17 & 25.22 & 23.79 & 26.13 & \color{Numa}{\textbf{26.51}} \\
\hline
\normalsize{\textbf{MST USA}} & 4.50 & 4.73 & 4.66 & 4.44 & 4.58 & 4.73 & \color{Numa}{\textbf{4.86}} \\
\hline
\normalsize{\textbf{MST WEST}} & 4.78 & 4.66 & \color{Numa}{\textbf{4.80}} & 4.46 & 4.59 & 4.33 & 4.75 \\
\hline
\normalsize{\textbf{A* USA}} & 14.64 & 17.42 & 17.73 & 18.31 & \color{Numa}{\textbf{18.71}} & 18.58 & 18.44 \\
\hline
\normalsize{\textbf{A* WEST}} & 12.29 & 14.10 & 14.55 & \color{Numa}{\textbf{14.59}} & 14.25 & 14.01 & 13.60 \\
\hline
\end{tabular}
\end{center}
\vspace{0.5em}
\caption{Speedup of the classic Multi-Queue with various $C$ executed on $128$ threads on the \textbf{Intel} platform. The baseline is sequential priority queue execution on a single thread. The best speedups are highlighted with {\color{Numa}{\textbf{\numacol{}}}}.}
\label{table:c_intel}
\end{table}

\clearpage
\subsection{Classic Multi-Queue Optimizations on AMD: insert=Temporal Locality, delete=Temporal Locality } 

\begin{figure*}[h]
    \centering
    \includegraphics[width=0.95\textwidth]{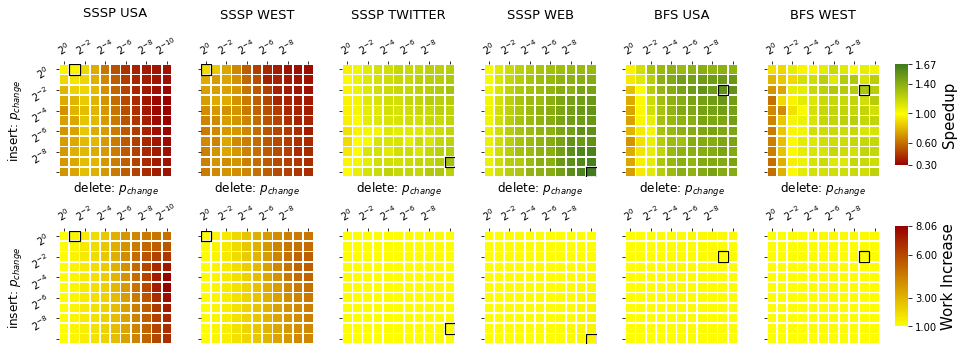}
    \includegraphics[width=0.95\textwidth]{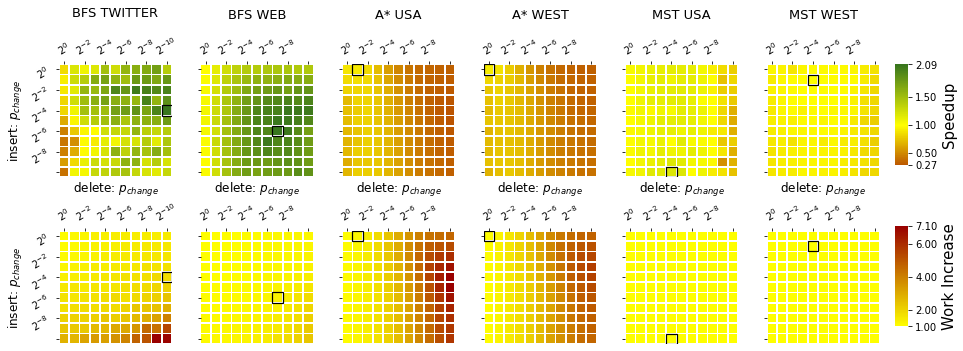}
    \vspace{-1em}
    \caption{Ablation of queue change probabilities (Temporal Locality) for both \ins{} and \del{}. Experiments execute on $256$ threads on the \textbf{AMD} platform. The baseline is the classic Multi-Queue on $256$ threads with $C$ = 4. The fastest configuration for each benchmark is highlighted with a black border and listed in Table~\ref{table:mqpp_amd}.}
    \label{fig:mq_insertprob_deleteprob}
\end{figure*}

\begin{table}[h]
\small
\begin{tabular}{|c|c|c|c|c|c|c| }
\hline
 & \normalsize{\textbf{SSSP USA}}&\normalsize{\textbf{SSSP WEST}} &\normalsize{\textbf{SSSP TWITTER}}&\normalsize{\textbf{SSSP WEB}} 
 & \normalsize{\textbf{BFS USA}} & \normalsize{\textbf{BFS WEST}} 
 \\
\hline
\insprob{} & 1/1 & 1/1 & 1/512 & 1/1024 & 1/4 & 1/4 \\
\hline
\delprob{} & 1/2 & 1/1 & 1/1024 & 1/1024 & 1/512 & 1/512 \\
\hline
\speed{} & 0.98 & 0.91 & 1.32 & 1.67 & 1.56 & 1.29 \\
\hline
\workinc{} & 1.24 & 1.04 & 1.07 & 1.31 & 1.07 & 1.07 \\
\hline
& \normalsize{\textbf{BFS TWITTER}} & \normalsize{\textbf{BFS WEB}} & \normalsize{\textbf{A* USA}} & \normalsize{\textbf{A* WEST}} & \normalsize{\textbf{MST USA}} & \normalsize{\textbf{MST WEST}} \\
\hline
\insprob{} & 1/16 & 1/64 & 1/1 & 1/1 & 1/1024 & 1/2 \\
\hline
\delprob{} & 1/1024 & 1/128 & 1/2 & 1/1 & 1/16 & 1/16 \\
\hline
\speed{} & 1.19 & 2.09 & 0.91 & 0.92 & 1.19 & 1.01 \\
\hline
\workinc{} & 1.03 & 1.01 & 1.19 & 1.03 & 1.00 & 1.00 \\
\hline
\end{tabular}
\vspace{0.3em}
\caption{The optimal parameters for Multi-Queue with the \emph{temporal locality} optimization for both \ins{} and \del{} obtained on the \textbf{AMD} platform. Based on Figure~\ref{fig:mq_insertprob_deleteprob}. For each benchmark, the best parameters are presented with the speedup and work increase. }
\label{table:mqpp_amd}
\end{table}

\newpage
\subsection{Classic Multi-Queue Optimizations on Intel: insert=Temporal Locality, delete=Temporal Locality }

\begin{figure*}[h]
    \centering
    \includegraphics[width=0.95\textwidth]{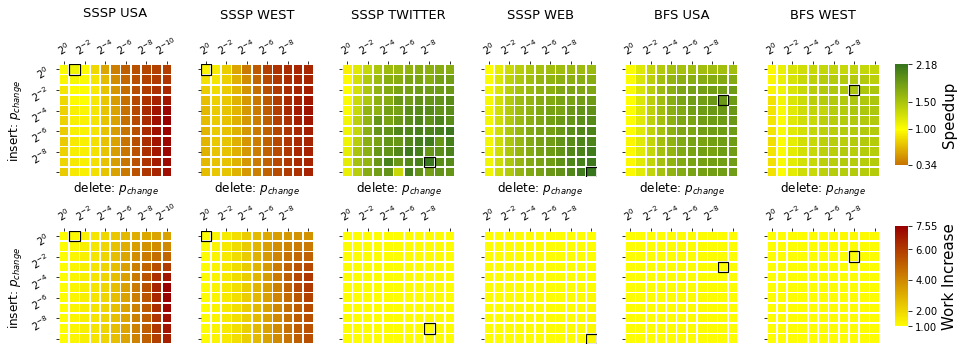}
    \includegraphics[width=0.95\textwidth]{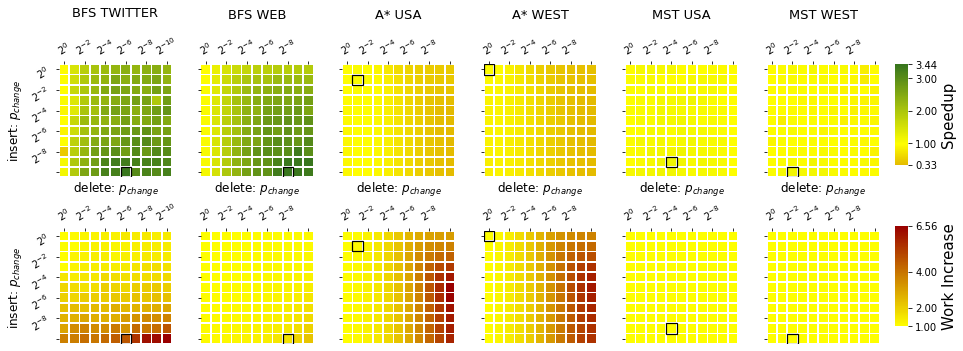}
    \vspace{-1em}
    \caption{Ablation of queue change probabilities (Temporal Locality) for both \ins{} and \del{}.  Experiments execute on $128$ threads on the \textbf{Intel} platform. The baseline is the classic Multi-Queue on $128$ threads with $C$ = 4. The fastest configuration for each benchmark is highlighted with a black border and listed in Table~\ref{table:mqpp_intel}.}
    \label{fig:mq_insertprob_deleteprob2}
\end{figure*}

\begin{table}[h]
\small
\begin{tabular}{ |c|c|c|c|c|c|c| }
\hline
 & \normalsize{\textbf{SSSP USA}}&\normalsize{\textbf{SSSP WEST}} &\normalsize{\textbf{SSSP TWITTER}}&\normalsize{\textbf{SSSP WEB}} 
 & \normalsize{\textbf{BFS USA}} & \normalsize{\textbf{BFS WEST}} \\
\hline
\insprob{} & 1/1 & 1/1 & 1/512 & 1/1024 & 1/8 & 1/4 \\
\hline
\delprob{} & 1/2 & 1/1 & 1/256 & 1/1024 & 1/512 & 1/256 \\
\hline
\speed{} & 1.05 & 0.97 & 1.73 & 2.18 & 1.88 & 1.50 \\
\hline
\workinc{} & 1.15 & 1.04 & 1.01 & 1.15 & 1.06 & 1.07 \\
\hline 
& \normalsize{\textbf{BFS TWITTER}} & \normalsize{\textbf{BFS WEB}} & \normalsize{\textbf{A* USA}} & \normalsize{\textbf{A* WEST}} & \normalsize{\textbf{MST USA}} & \normalsize{\textbf{MST WEST}} \\
\hline
\insprob{} & 1/1024 & 1/1024 & 1/2 & 1/1 & 1/512 & 1/1024 \\
\hline
\delprob{} & 1/64 & 1/256 & 1/2 & 1/1 & 1/16 & 1/4 \\
\hline
\speed{} & 1.60 & 3.44 & 1.06 & 0.98 & 1.20 & 1.16 \\
\hline
\workinc{} & 1.09 & 1.02 & 1.11 & 1.02 & 1.00 & 1.01 \\
\hline
\end{tabular}
\vspace{0.3em}
\caption{The optimal parameters for Multi-Queue with the \emph{temporal locality} optimization for both \ins{} and \del{} obtained on the \textbf{Intel} platform.
Based on Figure~\ref{fig:mq_insertprob_deleteprob2}. 
For each benchmark, the best parameters are presented with the speedup and work increase. }
\label{table:mqpp_intel}
\end{table}

\newpage
\subsection{Classic Multi-Queue Optimizations on AMD: insert=Temporal Locality, delete=Task Batching }

\begin{figure*}[h]
    \centering
    \includegraphics[width=0.95\textwidth]{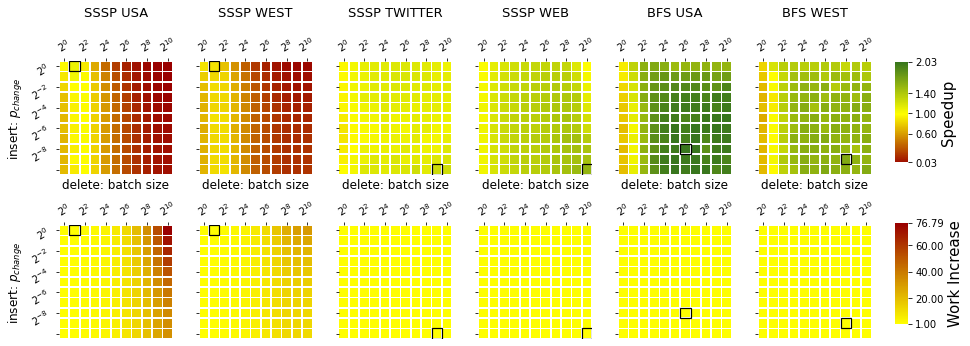}
    \includegraphics[width=0.95\textwidth]{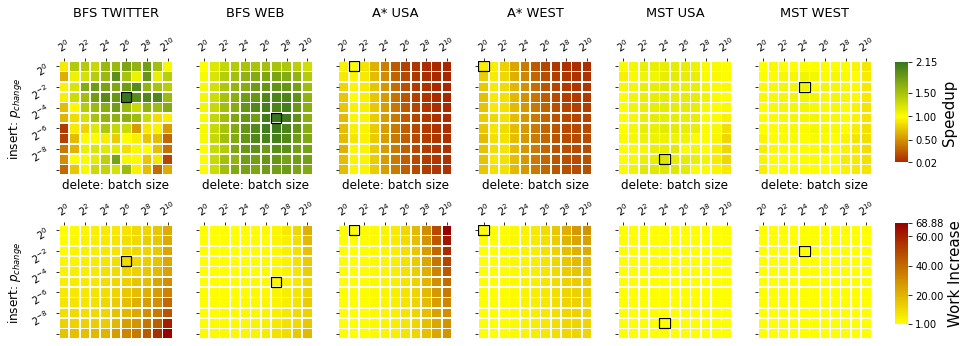}
    \vspace{-1em}
    \caption{Ablation of queue change probability (Temporal Locality) for \ins{} and batch size for \del{}. Experiments execute on $256$ threads on the \textbf{AMD} platform. The baseline is the classic Multi-Queue on $256$ threads with $C$ = 4. The fastest configuration for each benchmark is highlighted with a black border and listed in Table~\ref{table:mqpl_amd}.}
    \label{fig:mq_insertprob_deletebatch}
\end{figure*}

\begin{table}[h]
\small
\begin{tabular}{ |c|c|c|c|c|c|c| }
\hline
& \normalsize{\textbf{SSSP USA}} & \normalsize{\textbf{SSSP WEST}} & \normalsize{\textbf{SSSP TWITTER}} & \normalsize{\textbf{SSSP WEB}} & \normalsize{\textbf{BFS USA}} & \normalsize{\textbf{BFS WEST}} \\
\hline
\insprob{} & 1/1 & 1/1 & 1/1024 & 1/1024 & 1/256 & 1/512 \\
\hline
\delbatch{} 
& 2 & 2 & 512 & 1024 & 64 & 256 \\
\hline
\speed{} & 1.09 & 0.90 & 1.25 & 1.53 & 2.03 & 1.65 \\
\hline
\workinc{} & 1.28 & 1.35 & 1.10 & 1.40 & 1.08 & 1.08 \\
\hline
& \normalsize{\textbf{BFS TWITTER}} & \normalsize{\textbf{BFS WEB}} & \normalsize{\textbf{A* USA}} & \normalsize{\textbf{A* WEST}} & \normalsize{\textbf{MST USA}} & \normalsize{\textbf{MST WEST}} \\
\hline
\insprob{} & 1/8 & 1/32 & 1/1 & 1/1 & 1/512 & 1/4 \\
\hline
\delbatch{} 
& 64 & 128 & 2 & 1 & 16 & 16 \\
\hline
\speed{} & 1.15 & 2.15 & 1.00 & 0.93 & 1.20 & 1.09 \\
\hline
\workinc{} & 1.05 & 1.01 & 1.22 & 1.03 & 1.00 & 1.00 \\
\hline
\end{tabular}
\vspace{0.3em}
\caption{The optimal parameters for Multi-Queue with the \emph{temporal locality} optimization for \ins{} and \emph{task batching} for \del{}, obtained on the \textbf{AMD} platform. Based on Figure~\ref{fig:mq_insertprob_deletebatch}. For each benchmark, the best parameters are presented with the speedup and work increase. }
\label{table:mqpl_amd}
\end{table}

\newpage
\subsection{Classic Multi-Queue Optimizations on Intel: insert=Temporal Locality, delete=Batching }
\begin{figure*}[h]
    \centering
    \includegraphics[width=0.95\textwidth]{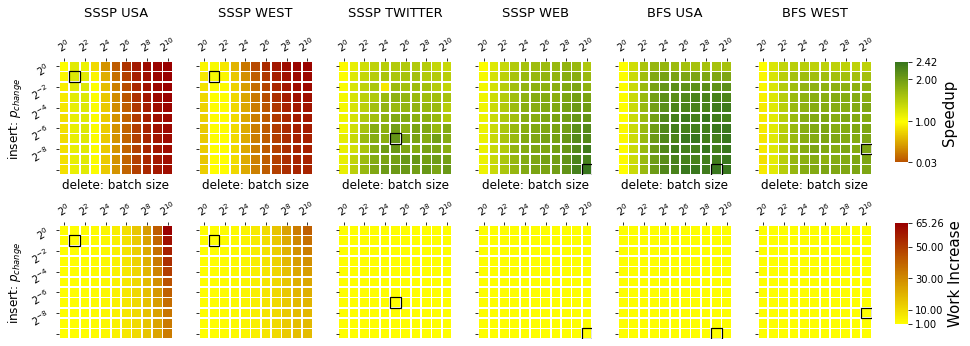}
    \includegraphics[width=0.95\textwidth]{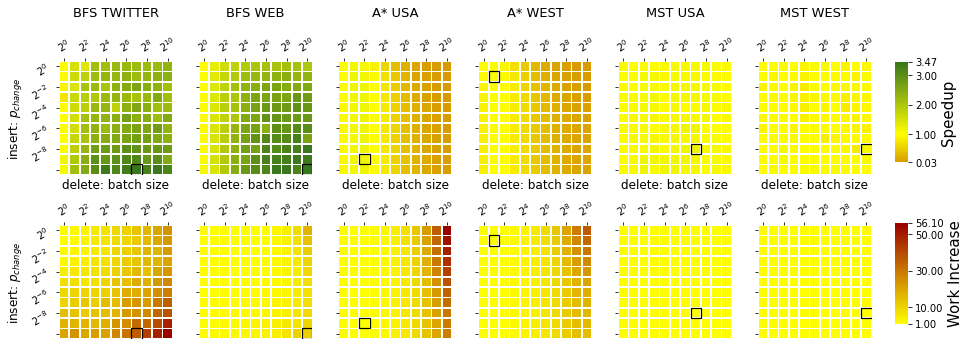}
    \vspace{-1em}
    \caption{Ablation of queue change probability (Temporal Locality) for \ins{} and batch size for \del{}. Experiments execute on $128$ threads on the \textbf{Intel} platform. The baseline is the classic Multi-Queue on $128$ threads with $C$ = 4. The fastest configuration for each benchmark is highlighted with a black border and listed in Table~\ref{table:mqpl_intel}.}
    \label{fig:mq_insertprob_deletebatch2}
\end{figure*}

\begin{table}[h]
\small
\begin{tabular}{ |c|c|c|c|c|c|c| }
\hline
 & \normalsize{\textbf{SSSP USA}} & \normalsize{\textbf{SSSP WEST}} & \normalsize{\textbf{SSSP TWITTER}} & \normalsize{\textbf{SSSP WEB}} & \normalsize{\textbf{BFS USA}} & \normalsize{\textbf{BFS WEST}} \\
\hline
\insprob{} & 1/2 & 1/2 & 1/128 & 1/1024 & 1/1024 & 1/256 \\
\hline
\delbatch{} & 2 & 2 & 32 & 1024 & 512 & 1024 \\
\hline
\speed{} & 1.16 & 1.03 & 1.76 & 2.15 & 2.42 & 1.93 \\
\hline
\workinc{} & 1.19 & 1.30 & 1.02 & 1.22 & 1.07 & 1.08 \\
\hline & \normalsize{\textbf{BFS TWITTER}} & \normalsize{\textbf{BFS WEB}} & \normalsize{\textbf{A* USA}} & \normalsize{\textbf{A* WEST}} & \normalsize{\textbf{MST USA}} & \normalsize{\textbf{MST WEST}} \\
\hline
\insprob{} & 1/1024 & 1/1024 & 1/512 & 1/2 & 1/256 & 1/256 \\
\hline
\delbatch{} & 128 & 1024 & 4 & 2 & 128 & 1024 \\
\hline
\speed{} & 1.64 & 3.47 & 1.18 & 1.04 & 1.20 & 1.12 \\
\hline
\workinc{} & 1.12 & 1.04 & 1.34 & 1.21 & 1.00 & 1.00 \\
\hline
\end{tabular}
\vspace{0.3em}
\caption{The optimal parameters for Multi-Queue with the \emph{temporal locality} optimization for \ins{} and \emph{task batching} for \del{}, obtained on \textbf{Intel}. Based on Figure~\ref{fig:mq_insertprob_deletebatch2}. For each benchmark, the best parameters are presented with the speedup and work increase.}
\label{table:mqpl_intel}
\end{table}

\newpage
\subsection{Classic Multi-Queue Optimizations on AMD: insert=Batching, delete=Temporal Locality }

\begin{figure*}[h]
    \centering
    \includegraphics[width=0.95\textwidth]{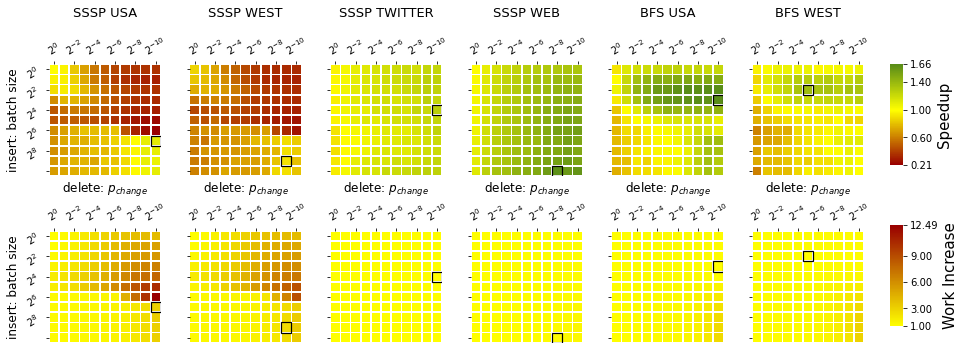}
    \includegraphics[width=0.95\textwidth]{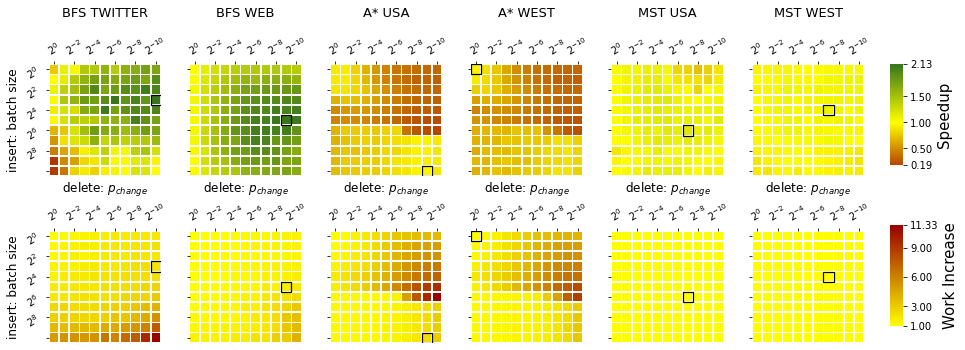}
    \vspace{-1em}
    \caption{Ablation of batch size (B) for \ins{} and queue change probability (TL) for \del{}. Experiments execute on $256$ threads on the \textbf{AMD} platform. The baseline is the classic Multi-Queue on $256$ threads with $C$ = 4. The fastest configuration for each benchmark is highlighted with a black border. Best viewed in color.}
    \label{fig:mq_insertbatch_deleteprob}
\end{figure*}

\begin{table}[h]
\small
\begin{tabular}{ |c|c|c|c|c|c|c| }
\hline
 & \normalsize{\textbf{SSSP USA}} & \normalsize{\textbf{SSSP WEST}} & \normalsize{\textbf{SSSP TWITTER}} & \normalsize{\textbf{SSSP WEB}} & \normalsize{\textbf{BFS USA}} & \normalsize{\textbf{BFS WEST}} \\
\hline
\insbatch{} & 128 & 512 & 16 & 1024 & 8 & 4 \\
\hline
\delprob{} & 1/1024 & 1/512 & 1/1024 & 1/256 & 1/1024 & 1/32 \\
\hline
\speed{} & 1.11 & 0.91 & 1.27 & 1.64 & 1.66 & 1.38 \\
\hline
\workinc{} & 3.06 & 2.60 & 1.02 & 1.13 & 1.08 & 1.07 \\
\hline
 & \normalsize{\textbf{BFS TWITTER}} & \normalsize{\textbf{BFS WEB}} & \normalsize{\textbf{A* USA}} & \normalsize{\textbf{A* WEST}} & \normalsize{\textbf{MST USA}} & \normalsize{\textbf{MST WEST}} \\
\hline
\insbatch{} & 8 & 32 & 1024 & 1 & 64 & 16 \\
\hline
\delprob{} & 1/1024 & 1/512 & 1/512 & 1/1 & 1/128 & 1/128 \\
\hline
\speed{} & 1.18 & 2.13 & 0.95 & 0.94 & 1.19 & 1.10 \\
\hline
\workinc{} & 1.02 & 1.02 & 2.38 & 1.00 & 1.00 & 1.00 \\
\hline
\end{tabular}
\vspace{0.3em}
\caption{The optimal parameters for Multi-Queue with the \emph{task batching} optimization for \ins{} and \emph{temporal locality}for \del{}, obtained on the \textbf{AMD} platform. Based on Figure~\ref{fig:mq_insertbatch_deleteprob}. For each benchmark, the best parameters are presented with the speedup and work increase.}
\label{table:mqlp_amd}
\end{table}

\newpage
\subsection{Classic Multi-Queue Optimizations on Intel: insert=Batching, delete=Temporal Locality }

\begin{figure*}[h]
    \centering
    \includegraphics[width=0.95\textwidth]{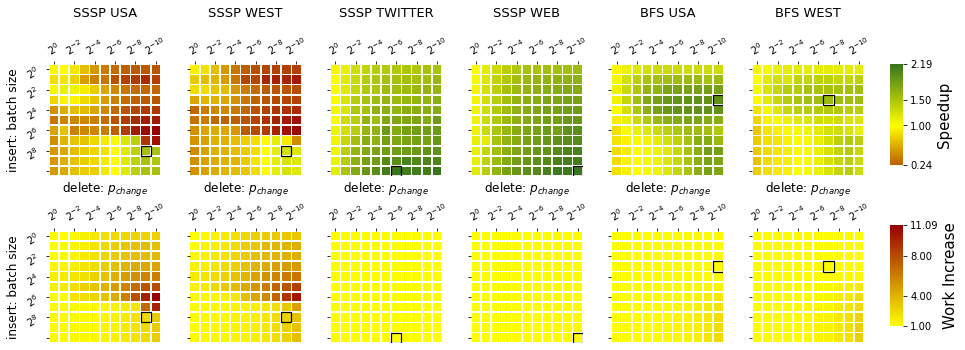}
    \includegraphics[width=0.95\textwidth]{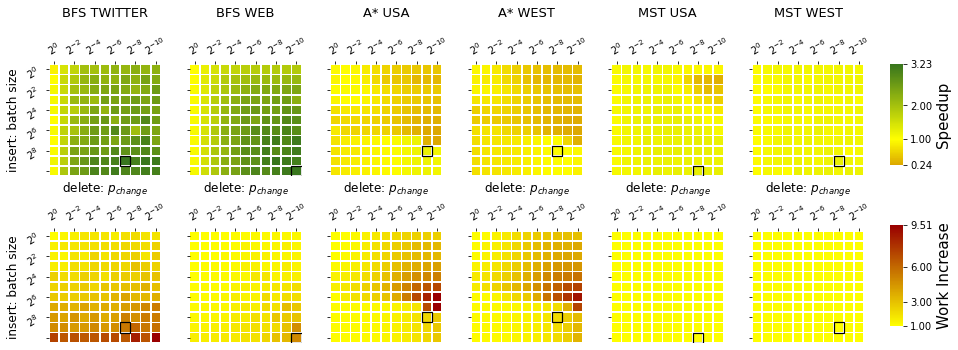}
    \vspace{-1em}
    \caption{Intel. Ablation of batch size (B) for \ins{} and queue change probability (TL) for \del{}. Experiments execute on $128$ threads. The baseline is the classic Multi-Queue on $128$ threads with $C$ = 4. The fastest configuration for each benchmark is highlighted with a black border. Best viewed in color.}
    \label{fig:mq_insertbatch_deleteprob2}
\end{figure*}

\begin{table}[h]
\small
\begin{tabular}{ |c|c|c|c|c|c|c| }
\hline
 & \normalsize{\textbf{SSSP USA}} & \normalsize{\textbf{SSSP WEST}} & \normalsize{\textbf{SSSP TWITTER}} & \normalsize{\textbf{SSSP WEB}} & \normalsize{\textbf{BFS USA}} & \normalsize{\textbf{BFS WEST}} \\
\hline
\insbatch{} & 256 & 256 & 1024 & 1024 & 8 & 8 \\
\hline
\delprob{} & 1/512 & 1/512 & 1/64 & 1/1024 & 1/1024 & 1/128 \\
\hline
\speed{} & 1.34 & 1.16 & 2.00 & 2.19 & 1.95 & 1.58 \\
\hline
\workinc{} & 2.47 & 2.80 & 1.02 & 1.21 & 1.08 & 1.09 \\
\hline
 & \normalsize{\textbf{BFS TWITTER}} & \normalsize{\textbf{BFS WEB}} & \normalsize{\textbf{A* USA}} & \normalsize{\textbf{A* WEST}} & \normalsize{\textbf{MST USA}} & \normalsize{\textbf{MST WEST}} \\
\hline
\insbatch{} & 512 & 1024 & 256 & 256 & 1024 & 512 \\
\hline
\delprob{} & 1/128 & 1/1024 & 1/512 & 1/256 & 1/256 & 1/256 \\
\hline
\speed{} & 1.60 & 3.23 & 1.22 & 1.03 & 1.27 & 1.24 \\
\hline
\workinc{} & 1.05 & 1.05 & 2.28 & 2.13 & 1.00 & 1.00 \\
\hline
\end{tabular}
\vspace{0.3em}
\caption{The optimal parameters for Multi-Queue with the \emph{task batching} optimization for \ins{} and \\emph{temporal locality}for \del{}, obtained on the \textbf{Intel} platform. Based on Figure~\ref{fig:mq_insertbatch_deleteprob2}. For each benchmark, the best parameters are presented with the speedup and work increase.}
\label{table:mqlp_intel}
\end{table}

\newpage
\subsection{Classic Multi-Queue Optimizations on AMD: insert=Batching, delete=Batching }

\begin{figure*}[h]
    \centering
    \includegraphics[width=0.95\textwidth]{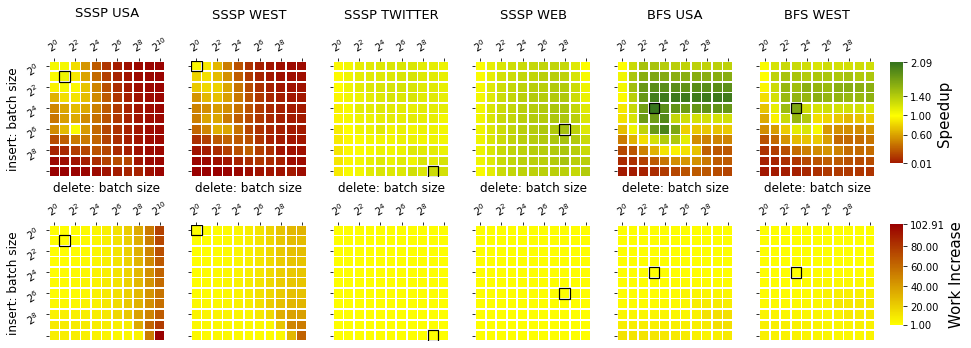}
    \includegraphics[width=0.95\textwidth]{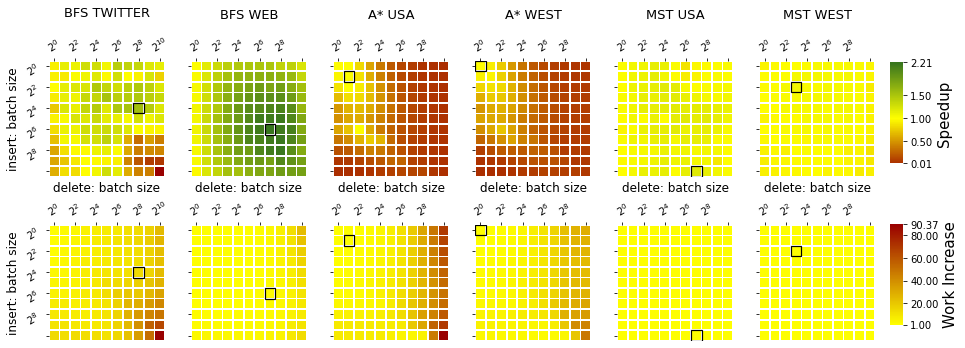}
    \vspace{-1em}
    \caption{Ablation of batch sizes (B) for \ins{} and \del{}. Experiments execute on $256$ threads on the \textbf{AMD} platform. The baseline is the classic Multi-Queue on $256$ threads with $C$ = 4. The fastest configuration for each benchmark is highlighted with a black border. Best viewed in color.}
    \label{fig:mq_insertbatch_deletebatch}
\end{figure*}

\begin{table}[h]
\small
\begin{tabular}{ |c|c|c|c|c|c|c| }
\hline
 & \normalsize{\textbf{SSSP USA}} & \normalsize{\textbf{SSSP WEST}} & \normalsize{\textbf{SSSP TWITTER}} & \normalsize{\textbf{SSSP WEB}} & \normalsize{\textbf{BFS USA}} & \normalsize{\textbf{BFS WEST}} \\
\hline
\insbatch{} & 2 & 1 & 1024 & 64 & 16 & 16 \\
\hline
\delbatch{} & 2 & 1 & 512 & 256 & 8 & 8 \\
\hline
\speed{} & 1.08 & 0.97 & 1.25 & 1.44 & 2.09 & 1.71 \\
\hline
\workinc{} & 1.33 & 1.00 & 1.12 & 1.15 & 1.11 & 1.29 \\
\hline
 & \normalsize{\textbf{BFS TWITTER}} & \normalsize{\textbf{BFS WEB}} & \normalsize{\textbf{A* USA}} & \normalsize{\textbf{A* WEST}} & \normalsize{\textbf{MST USA}} & \normalsize{\textbf{MST WEST}} \\
\hline
\insbatch{} & 16 & 64 & 2 & 1 & 1024 & 4 \\
\hline
\delbatch{} & 256 & 128 & 2 & 1 & 128 & 8 \\
\hline
\speed{} & 1.15 & 2.21 & 1.02 & 1.00 & 1.19 & 1.07 \\
\hline
\workinc{} & 1.06 & 1.01 & 1.27 & 1.00 & 1.00 & 1.00 \\
\hline
\end{tabular}
\vspace{0.3em}
\caption{The optimal parameters for Multi-Queue with the \emph{task batching} optimization for both \ins{} and \del{}, obtained on the \textbf{AMD} platform. Based on Figure~\ref{fig:mq_insertbatch_deletebatch}. For each benchmark, the best parameters are presented with the speedup and work increase.}
\label{table:mqll_amd}
\end{table}

\newpage
\subsection{Classic Multi-Queue Optimizations on Intel: insert=Batching, delete=Batching }

\begin{figure*}[h]
    \centering
    \includegraphics[width=0.95\textwidth]{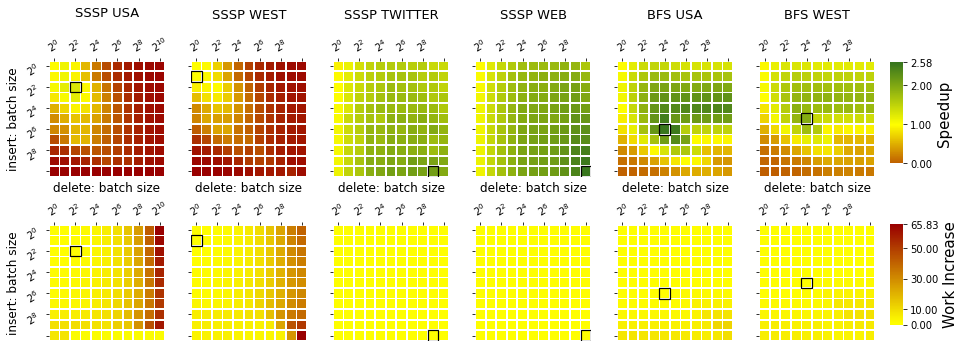}
    \includegraphics[width=0.95\textwidth]{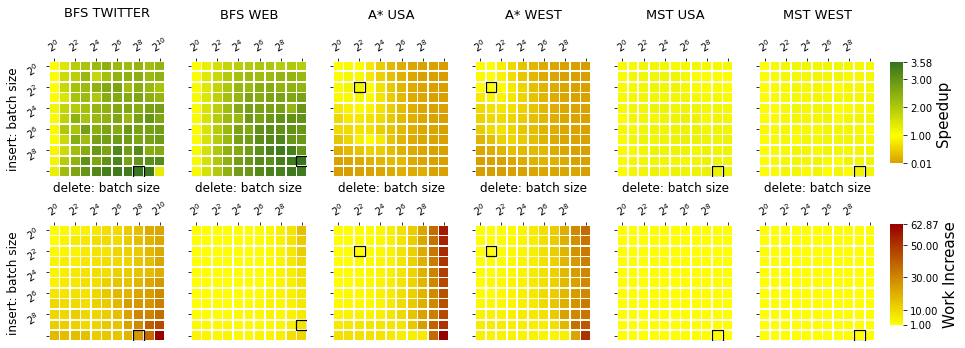}
    \vspace{-1em}
    \caption{Ablation of batch sizes (B) for \ins{} and \del{}. Experiments execute on $128$ threads on \textbf{Intel}. The baseline is the classic Multi-Queue on $128$ threads with $C$ = 4. The fastest configuration for each benchmark is highlighted with a black border. Best viewed in color.}
    \label{fig:mq_insertbatch_deletebatch2}
\end{figure*}

\begin{table}[h]
\small
\begin{tabular}{ |c|c|c|c|c|c|c| }
\hline
 & \normalsize{\textbf{SSSP USA}} & \normalsize{\textbf{SSSP WEST}} & \normalsize{\textbf{SSSP TWITTER}} & \normalsize{\textbf{SSSP WEB}} & \normalsize{\textbf{BFS USA}} & \normalsize{\textbf{BFS WEST}} \\
\hline
\insbatch{} & 4 & 2 & 1024 & 1024 & 64 & 32 \\
\hline
\delbatch{} & 4 & 1 & 512 & 1024 & 16 & 16 \\
\hline
\speed{} & 1.19 & 1.00 & 1.71 & 2.07 & 2.58 & 1.96 \\
\hline
\workinc{} & 1.55 & 1.08 & 1.09 & 1.23 & 1.22 & 1.30 \\
\hline
 & \normalsize{\textbf{BFS TWITTER}} & \normalsize{\textbf{BFS WEB}} & \normalsize{\textbf{A* USA}} & \normalsize{\textbf{A* WEST}} & \normalsize{\textbf{MST USA}} & \normalsize{\textbf{MST WEST}} \\
\hline
\insbatch{} & 1024 & 512 & 4 & 4 & 1024 & 1024 \\
\hline
\delbatch{} & 256 & 1024 & 4 & 2 & 512 & 512 \\
\hline
\speed{} & 1.63 & 3.58 & 1.19 & 1.00 & 1.26 & 1.23 \\
\hline
\workinc{} & 1.08 & 1.02 & 1.40 & 1.33 & 1.00 & 1.00 \\
\hline
\end{tabular}
\vspace{0.3em}
\caption{The optimal parameters for Multi-Queue with the \emph{task batching} optimization for both \ins{} and \del{}, obtained on \textbf{Intel}. Based on Figure~\ref{fig:mq_insertbatch_deletebatch2}. For each benchmark, the best parameters are presented with the speedup and work increase.}
\label{table:mqll_intel}
\end{table}

\newpage
\subsection{Comparison of Task Batching and Temporal Locality Optimizations}

\begin{figure*}[h]
    \centering
    \includegraphics[width=0.95\textwidth]{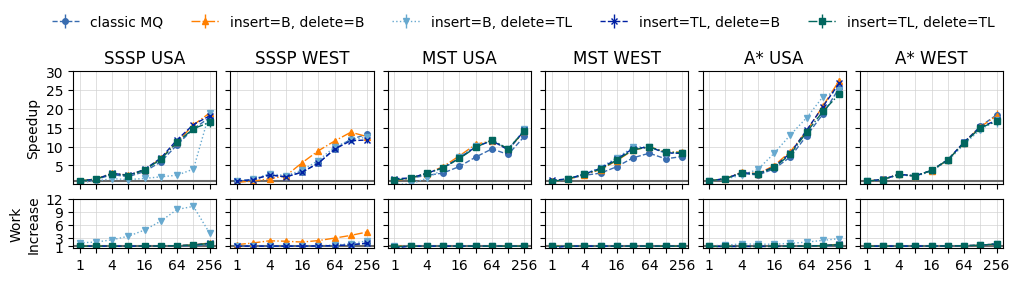}
    \includegraphics[width=0.95\textwidth]{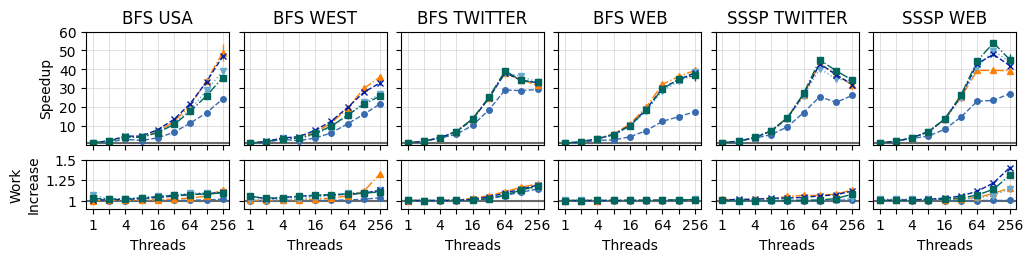}
    \caption{A comparison of task batching (B) and temporal locality (TL) optimizations on the classic Multi-Queue on the \textbf{AMD} platform.}
    \label{fig:mq_combinations} 
\end{figure*}

\begin{figure*}[h]
    \centering
    \includegraphics[width=0.95\textwidth]{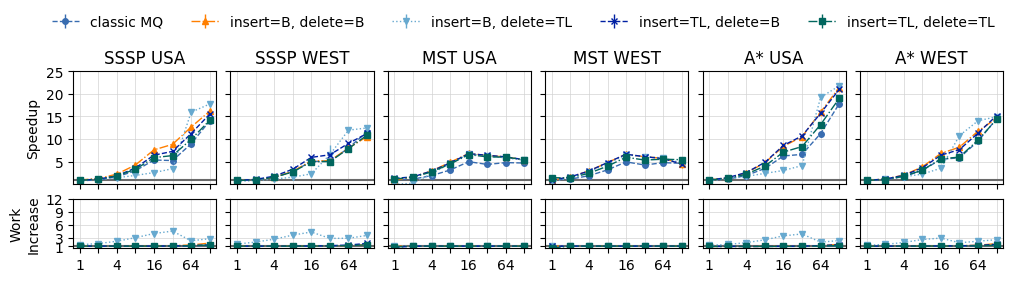}
    \includegraphics[width=0.95\textwidth]{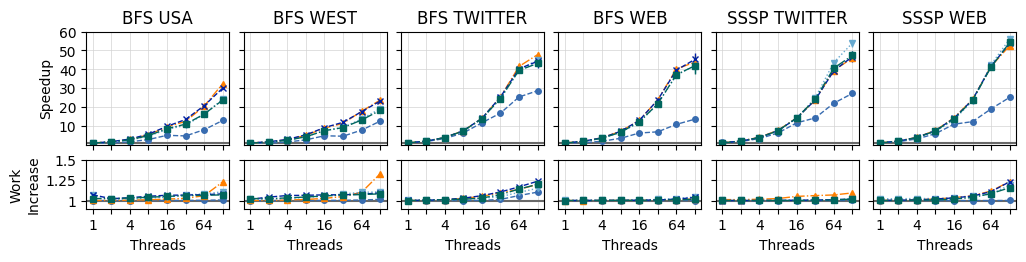}
    \caption{Intel. A comparison of task batching (B) and temporal locality (TL) optimizations on the classic Multi-Queue.}
    \label{fig:mq_combinations2} 
\end{figure*}


\newpage
\section{Stealing Multi-Queue Implementation Details}\label{appendix:smq:tuning}

This section presents the experimental data for the proposed Stealing Multi-Queue (SMQ) algorithm. Figures~\ref{fig:smq_hm_amd}--\ref{fig:smq_hm_intel} and Tables~\ref{table:smq_hm_amd}--\ref{table:smq_hm_intel} show results for the version with sequential d-ary heaps and stealing buffers, while Figures~\ref{fig:sl_smq_hm_amd}--\ref{fig:sl_smq_hm_intel} and Tables~\ref{table:sl_smq_hm_amd}--\ref{table:sl_smq_hm_intel} show results for the Skip-List based version.

\subsection{SMQ via D-Ary Heaps on AMD}
\vspace{-1em}
\begin{figure*}[h]
    \centering
    \includegraphics[width=0.93\textwidth]{plots/heatmaps/AMD/smq_amd_heatmaps1.png}
    
    \includegraphics[width=0.93\textwidth]{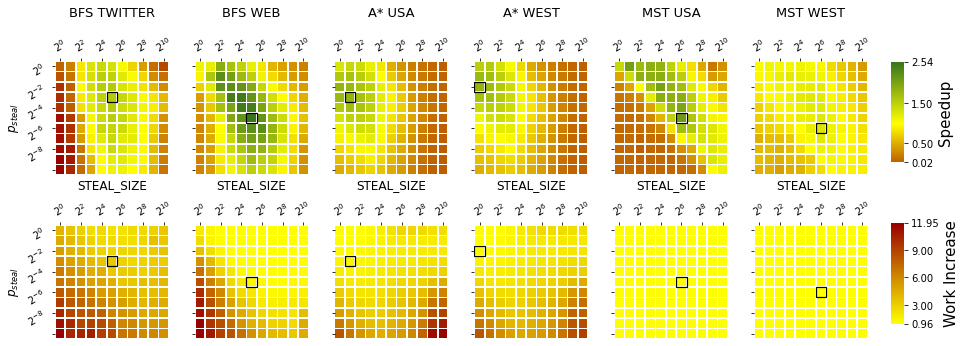}
    \vspace{-1em}
    \caption{Ablation of stealing probability $p_{\idlow{steal}}$ and steal buffer size on the \textbf{AMD} architecture, for \smq{} implemented using $d$-ary heaps. Experiments execute on $256$ threads. The baseline is the classic Multi-Queue on $256$ threads with $C$ = 4. The fastest configuration for each benchmark is highlighted with a black border and listed in Table~\ref{table:smq_hm_amd}. Best viewed in color.}
    \label{fig:smq_hm_amd}
\end{figure*}

\begin{table}[h]
\small
\begin{tabular}{ |c|c|c|c|c|c|c| }
\hline
 & \large{\textbf{SSSP USA}} & \large{\textbf{SSSP W}} & \large{\textbf{SSSP TWITTER}} & \large{\textbf{SSSP WEB}} & \large{\textbf{BFS USA}} & \large{\textbf{BFS W}} \\
\hline
\stealprob{} & 1/4 & 1/4 & 1/16 & 1/16 & 1/4 & 1/4 \\
\hline
\chunksize{} & 1 & 1 & 64 & 8 & 1 & 1 \\
\hline
\speed{} & 2.28 & 2.00 & 2.01 & 2.92 & 2.87 & 2.33 \\
\hline
\workinc{} & 1.18 & 1.20 & 1.03 & 1.11 & 1.05 & 1.12 \\
\hline
& \large{\textbf{BFS TWITTER}} & \large{\textbf{BFS WEB}} & \large{\textbf{A* USA}} & \large{\textbf{A* W}} & \large{\textbf{MST USA}} & \large{\textbf{MST W}} \\
\hline
\stealprob{} & 1/8 & 1/32 & 1/8 & 1/4 & 1/32 & 1/64 \\
\hline
\chunksize{} &  32 & 32 & 2 & 1 & 64 & 64 \\
\hline
\speed{} & 1.30 & 2.54 & 1.83 & 1.82 & 2.13 & 1.28 \\
\hline
\workinc{} & 1.33 & 1.09 & 1.35 & 1.23 & 1.00 & 1.00 \\
\hline
\end{tabular}
\vspace{0.3em}
\caption{The optimal parameters for Stealing Multi-Queue via $d$-ary heaps on $256$ threads, obtained on the \textbf{AMD} platform. Based on Figure~\ref{fig:smq_hm_amd}. For each benchmark, the best parameters are presented with the speedup and work increase.}
\label{table:smq_hm_amd}
\end{table}

\newpage
\subsection{SMQ via D-Ary Heaps on Intel}

\begin{figure*}[h]
    \centering
    \includegraphics[width=0.95\textwidth]{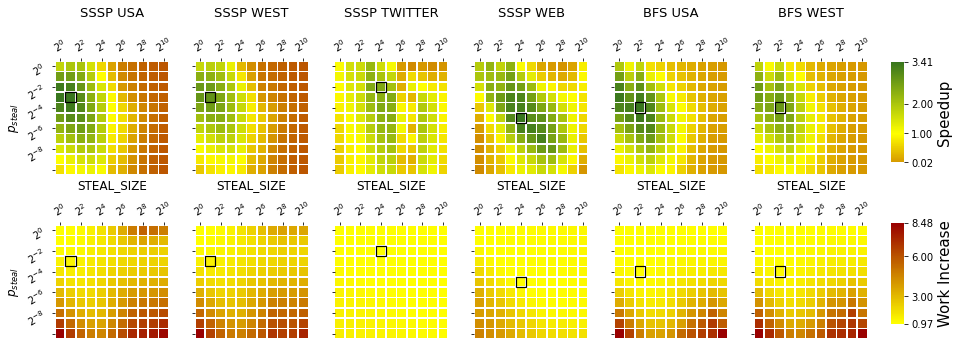}
    \includegraphics[width=0.95\textwidth]{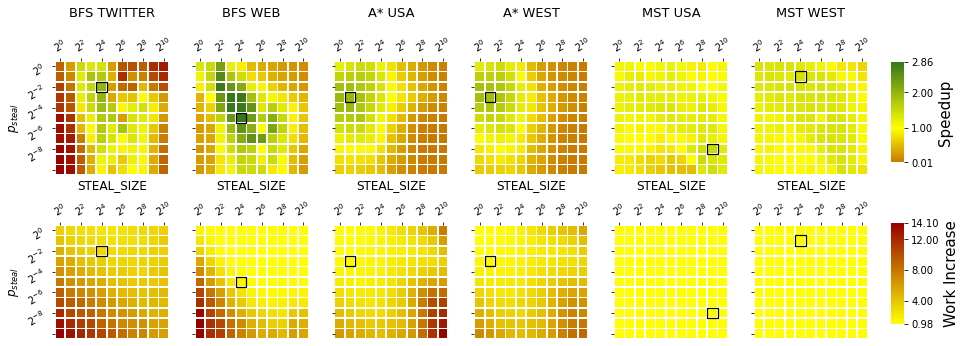}
    \vspace{-1em}
    \caption{Ablation of stealing probability $p_{\idlow{steal}}$ and steal buffer size on \textbf{Intel}, for \smq{} implemented using $d$-ary heaps. Experiments execute on $128$ threads. The baseline is the classic Multi-Queue on $128$ threads with $C$ = 4. The fastest configuration for each benchmark is highlighted with a black border and listed in Table~\ref{table:smq_hm_intel}. Best viewed in color.}
    \label{fig:smq_hm_intel}
\end{figure*}

\begin{table}[h]
\small
\begin{tabular}{ |c|c|c|c|c|c|c| }
\hline
 & \large{\textbf{SSSP USA}} & \large{\textbf{SSSP W}} & \large{\textbf{SSSP TWITTER}} & \large{\textbf{SSSP WEB}} & \large{\textbf{BFS USA}} & \large{\textbf{BFS W}} \\
\hline
\stealprob{} & 1/8 & 1/8 & 1/4 & 1/32 & 1/16 & 1/16 \\
\hline
\chunksize{} & 2 & 2 & 16 & 16 & 4 & 4 \\
\hline
\speed{} & 2.47 & 2.15 & 1.92 & 2.46 & 3.41 & 2.82 \\
\hline
\workinc{} & 1.23 & 1.38 & 1.01 & 1.06 & 1.06 & 1.17 \\
\hline
 & \large{\textbf{BFS TWITTER}} & \large{\textbf{BFS WEB}} & \large{\textbf{A* USA}} & \large{\textbf{A* W}} & \large{\textbf{MST USA}} & \large{\textbf{MST W}} \\
\hline
\stealprob{} & 1/4 & 1/32 & 1/8 & 1/8 & 1/256 & 1/2 \\
\hline
\chunksize{} & 16 & 16 & 2 & 2 & 512 & 16 \\
\hline
\speed{} & 1.44 & 2.86 & 1.95 & 1.85 & 1.48 & 1.39 \\
\hline
\workinc{} & 1.20 & 1.10 & 1.24 & 1.33 & 1.00 & 1.00 \\
\hline
\end{tabular}
\vspace{0.3em}
\caption{The optimal parameters of SMQ via $d$-ary heaps on \textbf{Intel}, based on Figure~\ref{fig:smq_hm_intel}. For each benchmark, the best $p_{\idlow{steal}}$ and steal buffer size combination is presented, with its speedup and work increase.  }
\label{table:smq_hm_intel}
\end{table}

\newpage
\subsection{SMQ via Skip Lists on AMD}

\begin{figure*}[h]
    \centering
    \includegraphics[width=0.95\textwidth]{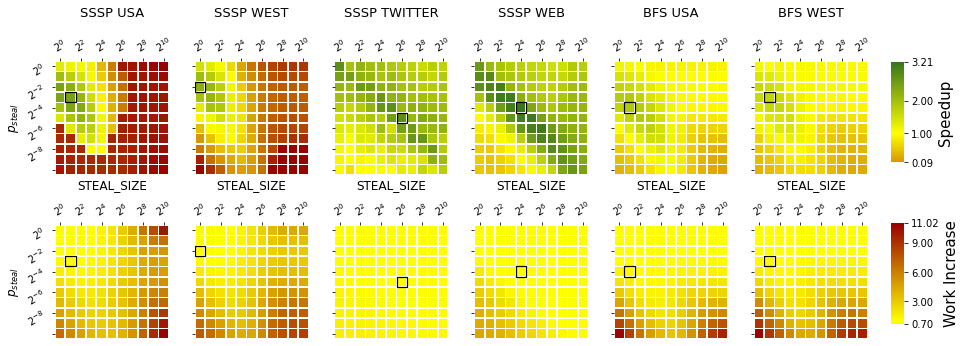}
    \includegraphics[width=0.95\textwidth]{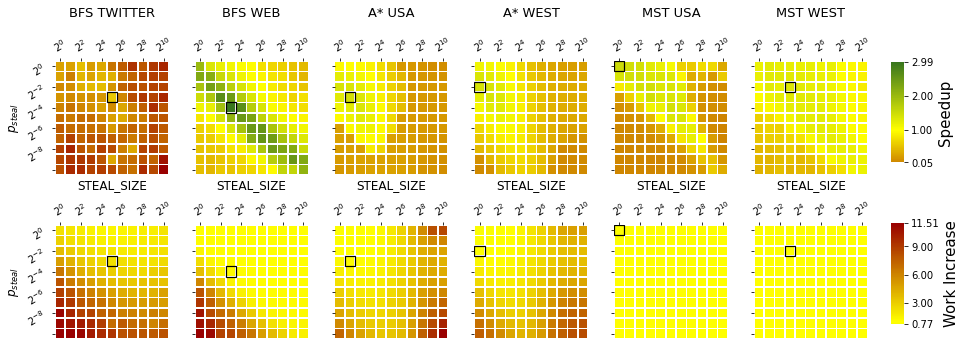}
    \vspace{-1em}
    \caption{Ablation of stealing probability $p_{\idlow{steal}}$ and steal buffer size on the \textbf{AMD} architecture, for \smq{} implemented using skip-lists. Experiments execute on $256$ threads. The baseline is the classic Multi-Queue on $256$ threads with $C$ = 4. The fastest configuration for each benchmark is highlighted with a black border and listed in Table~\ref{table:sl_smq_hm_amd}. Best viewed in color.}
    \label{fig:sl_smq_hm_amd}
\end{figure*}

\begin{table}[h]
\small
\begin{tabular}{ |c|c|c|c|c|c|c| }
\hline
 & \large{\textbf{SSSP USA}} & \large{\textbf{SSSP W}} & \large{\textbf{SSSP TWITTER}} & \large{\textbf{SSSP WEB}} & \large{\textbf{BFS USA}} & \large{\textbf{BFS W}} \\
\hline
\stealprob{} & 1/8 & 1/4 & 1/32 & 1/16 & 1/16 & 1/8 \\
\hline
\chunksize{} & 2 & 1 & 64 & 16 & 2 & 2 \\
\hline
\speed{} & 1.59 & 1.51 & 1.74 & 3.21 & 1.87 & 1.68 \\
\hline
\workinc{} & 1.13 & 1.00 & 1.03 & 1.01 & 1.21 & 1.13 \\
\hline
& \large{\textbf{BFS TWITTER}} & \large{\textbf{BFS WEB}} & \large{\textbf{A* USA}} & \large{\textbf{A* W}} & \large{\textbf{MST USA}} & \large{\textbf{MST W}} \\
\hline
\stealprob{} & 1/8 & 1/16 & 1/8 & 1/4 & 1/1 & 1/4 \\
\hline
\chunksize{} & 32 & 8 & 2 & 1 & 1 & 8 \\
\hline
\speed{} & 0.88 & 2.99 & 1.44 & 1.42 & 1.52 & 1.34 \\
\hline
\workinc{} & 1.14 & 1.04 & 1.14 & 1.04 & 1.01 & 1.00 \\
\hline
\end{tabular}
\vspace{0.3em}
\caption{The optimal parameters of SMQ via skip lists on the \textbf{AMD} architecture, based on Figure~\ref{fig:smq_hm_amd}. For each benchmark, the best $p_{\idlow{steal}}$ and steal buffer size combination is presented, with its speedup and work increase.}
\label{table:sl_smq_hm_amd}
\end{table}

\newpage
\subsection{SMQ via Skip Lists on Intel}

\begin{figure*}[h]
    \centering
    \includegraphics[width=0.95\textwidth]{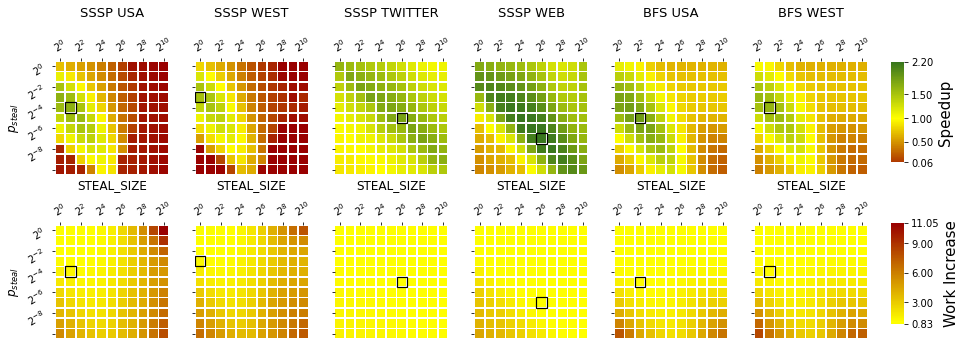}
    \includegraphics[width=0.95\textwidth]{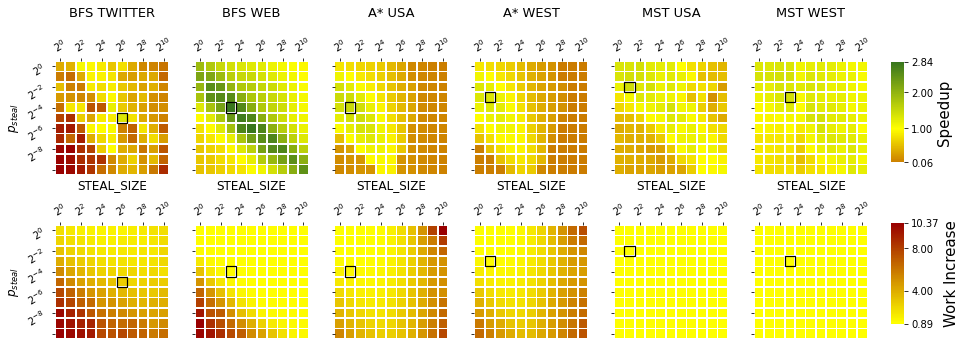}
    \vspace{-1em}
    \caption{Ablation of stealing probability $p_{\idlow{steal}}$ and steal buffer size on \textbf{Intel}, for \smq{} implemented using skip-lists. Experiments execute on $128$ threads. The baseline is the classic Multi-Queue on $128$ threads with $C$ = 4. The fastest configuration for each benchmark is highlighted with a black border and listed in Table~\ref{table:sl_smq_hm_intel}. Best viewed in color.}
    \label{fig:sl_smq_hm_intel}
\end{figure*}

\begin{table}[h]
\small
\begin{tabular}{ |c|c|c|c|c|c|c| }
\hline
 & \large{\textbf{SSSP USA}} & \large{\textbf{SSSP W}} & \large{\textbf{SSSP TWITTER}} & \large{\textbf{SSSP WEB}} & \large{\textbf{BFS USA}} & \large{\textbf{BFS W}} \\
\hline
\stealprob{} & 1/16 & 1/8 & 1/32 & 1/128 & 1/32 & 1/16 \\
\hline
\chunksize{} & 2 & 1 & 64 & 64 & 4 & 2 \\
\hline
\speed{} & 1.53 & 1.42 & 1.64 & 2.20 & 1.85 & 1.64 \\
\hline
\workinc{} & 1.28 & 1.30 & 1.01 & 1.02 & 1.12 & 1.18 \\
\hline
 & \large{\textbf{BFS TWITTER}} & \large{\textbf{BFS WEB}} & \large{\textbf{A* USA}} & \large{\textbf{A* W}} & \large{\textbf{MST USA}} & \large{\textbf{MST W}} \\
\hline
\stealprob{} & 1/32 & 1/16 & 1/16 & 1/8 & 1/4 & 1/8 \\
\hline
\chunksize{} & 64 & 8 & 2 & 2 & 2 & 8 \\
\hline
\speed{} & 1.10 & 2.84 & 1.49 & 1.33 & 1.43 & 1.42 \\
\hline
\workinc{} & 1.21 & 1.02 & 1.25 & 1.14 & 1.01 & 1.00 \\
\hline
\end{tabular}
\vspace{0.3em}
\caption{The optimal parameters of SMQ via skip lists on the \textbf{Intel} architecture, based on Figure~\ref{fig:sl_smq_hm_intel}. For each benchmark, the best $p_{\idlow{steal}}$ and steal buffer size combination is presented, with its speedup and work increase.}
\label{table:sl_smq_hm_intel}
\end{table}


\clearpage
\section{NUMA Awareness}\label{appendix:numa:tuning}
In this section we examine how the proposed in Section~\ref{sec:smq_impl} optimization for NUMA architectures improves both the Multi-Queue variants (Tables~\ref{table:mqpp_numa_amd}--\ref{table:mqll_numa_intel}) and the proposed Stealing Multi-Queue (SMQ) via both d-ary heaps (Tables~\ref{table:smq_numa_amd}--\ref{table:smq_numa_intel}) and Skip-Lists (Tables~\ref{table:slsmq_numa_amd}--\ref{table:slsmq_numa_intel}).

\subsection{MQ Optimized NUMA: insert=Temporal Locality, delete=Temporal Locality }
\begin{table}[h]
\small
\begin{center}
\begin{tabular}{ |c|c|c|c|c|c|c|c|c|c|c|c| }
\hline
 & \large{\textbf{1}} & \large{\textbf{2}} & \large{\textbf{4}} & \large{\textbf{8}} & \large{\textbf{16}} & \large{\textbf{32}} & \large{\textbf{64}} & \large{\textbf{128}} & \large{\textbf{256}} & \large{\textbf{512}} & \large{\textbf{1024}} \\
\hline
\large{\textbf{BFS USA}} & 1.44 & 1.56 & 1.65 & 1.70 & 1.73 & \color{Numa}{\textbf{1.75}} & \color{Numa}{\textbf{1.75}} & 1.73 & 1.64 & 1.54 & 1.36 \\
\hline
\large{\textbf{BFS WEST}} & 1.21 & 1.30 & 1.37 & 1.41 & 1.40 & \color{Numa}{\textbf{1.48}} & \color{Numa}{\textbf{1.48}} & 1.47 & 1.41 & 1.32 & 1.20 \\
\hline
\large{\textbf{BFS TWITTER}} & \color{Numa}{\textbf{1.14}} & 1.10 & 1.12 & 1.11 & 1.04 & 0.97 & 0.91 & 0.88 & 0.82 & 0.80 & 0.78 \\
\hline
\large{\textbf{BFS WEB}} & 2.22 & 2.32 & \color{Numa}{\textbf{2.38}} & 2.03 & 1.97 & 1.67 & 1.52 & 1.55 & 1.43 & 1.44 & 1.45 \\
\hline
\large{\textbf{SSSP USA}} & 1.00 & 1.03 & 1.10 & 1.12 & 1.16 & 1.19 & 1.19 & \color{Numa}{\textbf{1.20}} & \color{Numa}{\textbf{1.20}} & \color{Numa}{\textbf{1.20}} & 1.15 \\
\hline
\large{\textbf{SSSP WEST}} & 0.95 & 0.98 & 0.99 & 1.05 & 1.08 & 1.09 & 1.12 & \color{Numa}{\textbf{1.14}} & 1.09 & 1.06 & 1.08 \\
\hline
\large{\textbf{SSSP TWITTER}} & 1.30 & 1.27 & \color{Numa}{\textbf{1.35}} & 1.28 & 1.23 & 1.27 & 1.27 & 1.27 & 1.33 & 1.23 & 1.33 \\
\hline
\large{\textbf{SSSP WEB}} & 1.68 & 1.77 & 1.78 & \color{Numa}{\textbf{1.86}} & 1.78 & 1.67 & 1.60 & 1.55 & 1.44 & 1.55 & 1.58 \\
\hline
\large{\textbf{MST USA}} & 0.43 & 1.06 & \color{Numa}{\textbf{1.15}} & 1.00 & 0.98 & 0.79 & 0.59 & 0.69 & 0.40 & 0.47 & 0.73 \\
\hline
\large{\textbf{MST WEST}} & \color{Numa}{\textbf{1.16}} & 1.12 & 1.06 & 1.12 & 1.12 & 1.13 & 1.13 & 1.14 & 1.14 & 1.15 & 1.09 \\
\hline
\large{\textbf{A* USA}} & 0.95 & 1.00 & 1.05 & 1.09 & 1.12 & 1.12 & 1.13 & \color{Numa}{\textbf{1.15}} & 1.13 & 1.11 & 1.11 \\
\hline
\large{\textbf{A* WEST}} & 0.94 & 1.02 & 1.06 & 1.08 & 1.11 & 1.13 & \color{Numa}{\textbf{1.17}} & 1.16 & \color{Numa}{\textbf{1.17}} & 1.14 & 1.11 \\
\hline
\end{tabular}
\end{center}
\vspace{0.3em}
\caption{Speedups for Multi-Queue variants with the \emph{temporal locality} optimization for both \ins{} and \del{}, and various weights $K$ for non-local NUMA node accesses, obtained on the \textbf{AMD} platform on $256$ threads; the best speedups are highlighted with {\color{Numa}{\textbf{\numacol{}}}}. The baseline is the classic Multi-Queue on $256$ threads with $C$ = 4. With $K = 1$ the algorithm is the same as without the NUMA-specific optimization. }
\label{table:mqpp_numa_amd}
\end{table}
\begin{table}[h]
\small
\begin{center}
\begin{tabular}{ |c|c|c|c|c|c|c|c|c|c|c|c| }
\hline
 & \large{\textbf{1}} & \large{\textbf{2}} & \large{\textbf{4}} & \large{\textbf{8}} & \large{\textbf{16}} & \large{\textbf{32}} & \large{\textbf{64}} & \large{\textbf{128}} & \large{\textbf{256}} & \large{\textbf{512}} & \large{\textbf{1024}} \\
\hline
\large{\textbf{BFS USA}} & 1.58 & 1.79 & 2.09 & 2.37 & 2.64 & 2.84 & \color{Numa}{\textbf{2.97}} & 2.86 & 2.72 & 2.49 & 1.99 \\
\hline
\large{\textbf{BFS WEST}} & 1.28 & 1.45 & 1.68 & 1.94 & 2.16 & 2.37 & \color{Numa}{\textbf{2.43}} & 2.40 & 2.17 & 2.02 & 1.68 \\
\hline
\large{\textbf{BFS TWITTER}} & \color{Numa}{\textbf{1.58}} & \color{Numa}{\textbf{1.58}} & 1.39 & 1.36 & 1.21 & 1.05 & 0.99 & 0.78 & 0.73 & 0.64 & 0.68 \\
\hline
\large{\textbf{BFS WEB}} & 3.11 & 3.32 & \color{Numa}{\textbf{3.69}} & 3.23 & 2.62 & 2.14 & 1.81 & 1.56 & 1.53 & 1.31 & 1.26 \\
\hline
\large{\textbf{SSSP USA}} & 0.92 & 0.99 & 1.12 & 1.28 & 1.44 & 1.58 & 1.68 & 1.74 & \color{Numa}{\textbf{1.78}} & 1.77 & 1.72 \\
\hline
\large{\textbf{SSSP WEST}} & 0.88 & 0.92 & 1.02 & 1.17 & 1.32 & 1.46 & 1.56 & 1.61 & \color{Numa}{\textbf{1.65}} & 1.62 & 1.54 \\
\hline
\large{\textbf{SSSP TWITTER}} & 1.77 & 1.90 & \color{Numa}{\textbf{1.96}} & \color{Numa}{\textbf{1.96}} & 1.63 & 1.74 & 1.74 & 1.72 & 1.74 & 1.67 & 1.66 \\
\hline
\large{\textbf{SSSP WEB}} & 2.11 & 2.30 & \color{Numa}{\textbf{2.34}} & 2.25 & 1.90 & 1.62 & 1.38 & 1.18 & 1.09 & 1.00 & 0.95 \\
\hline
\large{\textbf{MST USA}} & 1.02 & 1.11 & 1.22 & \color{Numa}{\textbf{1.28}} & \color{Numa}{\textbf{1.28}} & \color{Numa}{\textbf{1.28}} & 1.22 & 1.20 & 1.22 & 1.22 & 1.18 \\
\hline
\large{\textbf{MST WEST}} & 1.16 & 1.17 & 1.20 & 1.10 & 1.12 & 1.24 & \color{Numa}{\textbf{1.31}} & 1.28 & 1.28 & 1.21 & 1.13 \\
\hline
\large{\textbf{A* USA}} & 0.99 & 1.06 & 1.19 & 1.34 & 1.49 & 1.60 & 1.69 & 1.75 & \color{Numa}{\textbf{1.76}} & 1.75 & 1.69 \\
\hline
\large{\textbf{A* WEST}} & 0.93 & 0.97 & 1.07 & 1.22 & 1.36 & 1.48 & 1.58 & 1.63 & \color{Numa}{\textbf{1.65}} & 1.64 & 1.58 \\
\hline
\end{tabular}
\end{center}
\vspace{0.3em}
\caption{Speedups for Multi-Queue variants with the \emph{temporal locality} optimization for both \ins{} and \del{},, and various weights $K$ for non-local NUMA node accesses, obtained on the \textbf{Intel} platform on $128$ threads; the best speedups are highlighted with {\color{Numa}{\textbf{\numacol{}}}}. The baseline is the classic Multi-Queue on $128$ threads with $C$ = 4. With $K = 1$ the algorithm is the same as without the NUMA-specific optimization.}
\label{table:mqpp_numa_intel}
\end{table} 

\clearpage
\subsection{MQ Optimized NUMA: insert=Temporal Locality, delete=Task Batching }
\begin{table}[h]
\small
\begin{center}
\begin{tabular}{ |c|c|c|c|c|c|c|c|c|c|c|c| }
\hline
 & \large{\textbf{1}} & \large{\textbf{2}} & \large{\textbf{4}} & \large{\textbf{8}} & \large{\textbf{16}} & \large{\textbf{32}} & \large{\textbf{64}} & \large{\textbf{128}} & \large{\textbf{256}} & \large{\textbf{512}} & \large{\textbf{1024}} \\
\hline
\large{\textbf{BFS USA}} & 1.99 & 2.06 & 2.04 & 2.14 & 2.10 & 2.16 & 2.15 & 2.16 & \color{Numa}{\textbf{2.18}} & 2.10 & 2.03 \\
\hline
\large{\textbf{BFS WEST}} & 1.56 & 1.64 & 1.67 & 1.71 & 1.70 & 1.72 & 1.73 & \color{Numa}{\textbf{1.76}} & 1.72 & 1.68 & 1.60 \\
\hline
\large{\textbf{BFS TWITTER}} & 1.15 & 1.15 & \color{Numa}{\textbf{1.17}} & 1.16 & 1.13 & 1.11 & 1.10 & 1.07 & 1.05 & 1.00 & 0.95 \\
\hline
\large{\textbf{BFS WEB}} & 2.21 & 2.23 & 2.27 & 2.33 & \color{Numa}{\textbf{2.38}} & 2.37 & 2.11 & 2.01 & 1.77 & 1.78 & 1.62 \\
\hline
\large{\textbf{SSSP USA}} & 1.05 & 1.13 & 1.18 & 1.20 & 1.26 & 1.28 & 1.29 & 1.29 & \color{Numa}{\textbf{1.31}} & 1.30 & 1.27 \\
\hline
\large{\textbf{SSSP WEST}} & 0.88 & 0.95 & 1.00 & 1.04 & 1.05 & 1.07 & 1.07 & 1.08 & \color{Numa}{\textbf{1.10}} & 1.09 & 1.07 \\
\hline
\large{\textbf{SSSP TWITTER}} & 1.16 & 1.26 & 1.22 & 1.32 & 1.29 & 1.29 & 1.25 & \color{Numa}{\textbf{1.33}} & 1.28 & \color{Numa}{\textbf{1.33}} & 1.26 \\
\hline
\large{\textbf{SSSP WEB}} & 1.58 & 1.56 & 1.68 & 1.71 & 1.72 & \color{Numa}{\textbf{1.81}} & 1.70 & 1.55 & 1.62 & 1.62 & 1.54 \\
\hline
\large{\textbf{MST USA}} & 1.13 & 1.15 & \color{Numa}{\textbf{1.20}} & 1.12 & 1.18 & 1.17 & 1.12 & 0.89 & 0.69 & 0.61 & 0.43 \\
\hline
\large{\textbf{MST WEST}} & \color{Numa}{\textbf{1.17}} & 1.14 & 1.13 & 1.13 & 1.13 & 1.13 & 1.12 & 1.14 & 1.13 & 1.14 & 1.13 \\
\hline
\large{\textbf{A* USA}} & 0.97 & 1.07 & 1.10 & 1.15 & 1.18 & 1.17 & 1.21 & \color{Numa}{\textbf{1.22}} & \color{Numa}{\textbf{1.22}} & 1.19 & 1.20 \\
\hline
\large{\textbf{A* WEST}} & 0.98 & 0.97 & 1.06 & 1.05 & 1.13 & 1.16 & 1.16 & \color{Numa}{\textbf{1.19}} & 1.17 & 1.11 & 1.12 \\
\hline
\end{tabular}
\end{center}
\vspace{0.3em}
\caption{Speedups for Multi-Queue variants with the \emph{temporal locality} optimization for \ins{}, \emph{task batching} for \del{}, and various weights $K$ for non-local NUMA node accesses, obtained on the \textbf{AMD} platform on $256$ threads; the best speedups are highlighted with {\color{Numa}{\textbf{\numacol{}}}}. The baseline is the classic Multi-Queue on $256$ threads with $C$ = 4. With $K = 1$ the algorithm is the same as without the NUMA-specific optimization.}
\label{table:mqpl_numa_amd}
\end{table}

\begin{table}[h]
\small
\begin{center}
\begin{tabular}{ |c|c|c|c|c|c|c|c|c|c|c|c| }
\hline
 & \large{\textbf{1}} & \large{\textbf{2}} & \large{\textbf{4}} & \large{\textbf{8}} & \large{\textbf{16}} & \large{\textbf{32}} & \large{\textbf{64}} & \large{\textbf{128}} & \large{\textbf{256}} & \large{\textbf{512}} & \large{\textbf{1024}} \\
\hline
\large{\textbf{BFS USA}} & 2.22 & 2.37 & 2.58 & 2.75 & 2.93 & 3.08 & 3.22 & \color{Numa}{\textbf{3.37}} & 3.35 & 3.28 & 3.15 \\
\hline
\large{\textbf{BFS WEST}} & 1.79 & 1.93 & 2.11 & 2.32 & 2.47 & 2.64 & 2.77 & 2.84 & \color{Numa}{\textbf{2.86}} & 2.78 & 2.67 \\
\hline
\large{\textbf{BFS TWITTER}} & \color{Numa}{\textbf{1.69}} & 1.55 & \color{Numa}{\textbf{1.69}} & 1.62 & 1.55 & 1.54 & 1.50 & 1.42 & 1.35 & 1.32 & 1.28 \\
\hline
\large{\textbf{BFS WEB}} & 3.47 & 3.60 & 3.65 & \color{Numa}{\textbf{3.68}} & 3.61 & 3.44 & 3.62 & 3.13 & 2.64 & 2.69 & 2.66 \\
\hline
\large{\textbf{SSSP USA}} & 1.07 & 1.14 & 1.25 & 1.39 & 1.52 & 1.66 & 1.76 & 1.84 & 1.89 & \color{Numa}{\textbf{1.90}} & 1.89 \\
\hline
\large{\textbf{SSSP WEST}} & 0.95 & 1.01 & 1.11 & 1.23 & 1.36 & 1.47 & 1.55 & 1.62 & \color{Numa}{\textbf{1.66}} & \color{Numa}{\textbf{1.66}} & 1.63 \\
\hline
\large{\textbf{SSSP TWITTER}} & 1.56 & 1.61 & 1.80 & 1.82 & 1.93 & 1.87 & \color{Numa}{\textbf{1.96}} & 1.83 & 1.74 & 1.69 & 1.71 \\
\hline
\large{\textbf{SSSP WEB}} & 2.19 & 2.21 & 2.32 & 2.30 & 2.45 & \color{Numa}{\textbf{2.53}} & 2.48 & 2.10 & 1.91 & 1.78 & 1.67 \\
\hline
\large{\textbf{MST USA}} & 1.15 & 1.21 & 1.22 & 1.25 & 1.23 & 1.27 & \color{Numa}{\textbf{1.30}} & 1.23 & 1.20 & 1.08 & 1.09 \\
\hline
\large{\textbf{MST WEST}} & 1.20 & 1.19 & 1.26 & 1.27 & 1.24 & 1.30 & 1.28 & \color{Numa}{\textbf{1.31}} & 1.30 & 1.10 & 1.27 \\
\hline
\large{\textbf{A* USA}} & 1.18 & 1.24 & 1.32 & 1.43 & 1.55 & 1.62 & 1.70 & 1.74 & \color{Numa}{\textbf{1.79}} & \color{Numa}{\textbf{1.79}} & 1.78 \\
\hline
\large{\textbf{A* WEST}} & 1.02 & 1.07 & 1.16 & 1.28 & 1.41 & 1.50 & 1.59 & 1.64 & \color{Numa}{\textbf{1.68}} & 1.67 & 1.65 \\
\hline
\end{tabular}
\end{center}
\vspace{0.3em}
\caption{Speedups for Multi-Queue variants with the \emph{temporal locality} optimization for \ins{}, \emph{task batching} for \del{}, and various weights $K$ for non-local NUMA node accesses, obtained on the \textbf{Intel} platform on $128$ threads; the best speedups are highlighted with {\color{Numa}{\textbf{\numacol{}}}}. The baseline is the classic Multi-Queue on $128$ threads with $C$ = 4. With $K = 1$ the algorithm is the same as without the NUMA-specific optimization. }
\label{table:mqpl_numa_intel}
\end{table}

\clearpage
\subsection{MQ Optimized NUMA: insert=Task Batching, delete=Temporal Locality }
\begin{table}[h]
\small
\begin{center}
\begin{tabular}{ |c|c|c|c|c|c|c|c|c|c|c|c| }
\hline
 & \large{\textbf{1}} & \large{\textbf{2}} & \large{\textbf{4}} & \large{\textbf{8}} & \large{\textbf{16}} & \large{\textbf{32}} & \large{\textbf{64}} & \large{\textbf{128}} & \large{\textbf{256}} & \large{\textbf{512}} & \large{\textbf{1024}} \\
\hline
\large{\textbf{BFS USA}} & 1.52 & 1.58 & 1.70 & 1.78 & 1.74 & 1.84 & 1.85 & \color{Numa}{\textbf{1.86}} & 1.81 & 1.81 & 1.77 \\
\hline
\large{\textbf{BFS WEST}} & 1.23 & 1.28 & 1.33 & 1.43 & 1.40 & 1.46 & \color{Numa}{\textbf{1.51}} & 1.49 & 1.47 & 1.47 & 1.39 \\
\hline
\large{\textbf{BFS TWITTER}} & 1.13 & \color{Numa}{\textbf{1.18}} & \color{Numa}{\textbf{1.18}} & 1.11 & 1.11 & 1.06 & 1.00 & 0.93 & 0.86 & 0.83 & 0.80 \\
\hline
\large{\textbf{BFS WEB}} & 2.03 & 2.09 & 2.25 & \color{Numa}{\textbf{2.36}} & 2.08 & 1.85 & 1.76 & 1.57 & 1.53 & 1.43 & 1.44 \\
\hline
\large{\textbf{SSSP USA}} & 0.99 & \color{Numa}{\textbf{1.00}} & 0.97 & 0.90 & 0.95 & 0.95 & 0.97 & \color{Numa}{\textbf{1.00}} & 0.85 & 0.94 & 0.97 \\
\hline
\large{\textbf{SSSP WEST}} & 0.81 & 0.82 & 0.79 & 0.74 & 0.81 & \color{Numa}{\textbf{0.85}} & 0.81 & 0.80 & 0.84 & 0.81 & 0.78 \\
\hline
\large{\textbf{SSSP TWITTER}} & 1.18 & 1.27 & 1.33 & 1.31 & 1.26 & 1.25 & 1.30 & 1.28 & 1.25 & 1.26 & \color{Numa}{\textbf{1.37}} \\
\hline
\large{\textbf{SSSP WEB}} & 1.53 & 1.59 & 1.72 & 1.80 & \color{Numa}{\textbf{1.87}} & 1.85 & 1.79 & 1.50 & 1.52 & 1.45 & 1.45 \\
\hline
\large{\textbf{MST USA}} & 1.16 & 1.14 & 1.14 & 1.05 & \color{Numa}{\textbf{1.18}} & 1.09 & 0.95 & 0.98 & 0.88 & 0.89 & 0.91 \\
\hline
\large{\textbf{MST WEST}} & 1.07 & 1.06 & 1.08 & 1.09 & 1.11 & 1.13 & 1.14 & 1.18 & 1.14 & \color{Numa}{\textbf{1.19}} & 1.18 \\
\hline
\large{\textbf{A* USA}} & 0.89 & 0.88 & 0.90 & 0.88 & 0.89 & 0.92 & 0.91 & \color{Numa}{\textbf{0.94}} & 0.91 & 0.91 & 0.90 \\
\hline
\large{\textbf{A* WEST}} & 0.97 & 0.99 & 1.04 & 1.08 & 1.10 & 1.06 & \color{Numa}{\textbf{1.15}} & 1.11 & 1.14 & 1.13 & 1.11 \\
\hline
\end{tabular}
\end{center}
\vspace{0.3em}
\caption{Speedups for Multi-Queue variants with the \emph{task batching} optimization for \ins{}, \emph{temporal locality} for \del{}, and various weights $K$ for non-local NUMA node accesses, obtained on the \textbf{AMD} platform on $256$ threads; the best speedups are highlighted with {\color{Numa}{\textbf{\numacol{}}}}. The baseline is the classic Multi-Queue on $256$ threads with $C$ = 4. With $K = 1$ the algorithm is the same as without the NUMA-specific optimization.}
\label{table:mqlp_numa_amd}
\end{table}

\begin{table}[h]
\small
\begin{center}
\begin{tabular}{ |c|c|c|c|c|c|c|c|c|c|c|c| }
\hline
 & \large{\textbf{1}} & \large{\textbf{2}} & \large{\textbf{4}} & \large{\textbf{8}} & \large{\textbf{16}} & \large{\textbf{32}} & \large{\textbf{64}} & \large{\textbf{128}} & \large{\textbf{256}} & \large{\textbf{512}} & \large{\textbf{1024}} \\
\hline
\large{\textbf{BFS USA}} & 1.68 & 1.81 & 2.01 & 2.26 & 2.48 & 2.66 & 2.82 & 2.92 & \color{Numa}{\textbf{2.95}} & 2.87 & 2.73 \\
\hline
\large{\textbf{BFS WEST}} & 1.32 & 1.45 & 1.63 & 1.87 & 2.08 & 2.26 & 2.38 & \color{Numa}{\textbf{2.47}} & 2.42 & 2.36 & 2.09 \\
\hline
\large{\textbf{BFS TWITTER}} & 1.54 & 1.52 & \color{Numa}{\textbf{1.56}} & 1.45 & 1.30 & 1.20 & 1.10 & 0.96 & 0.87 & 0.74 & 0.67 \\
\hline
\large{\textbf{BFS WEB}} & 2.93 & 3.07 & 3.30 & \color{Numa}{\textbf{3.46}} & 3.43 & 2.81 & 2.38 & 1.91 & 1.68 & 1.42 & 1.25 \\
\hline
\large{\textbf{SSSP USA}} & 0.68 & 0.74 & 0.84 & 0.98 & 1.09 & 1.19 & 1.25 & 1.29 & \color{Numa}{\textbf{1.30}} & 1.27 & 1.24 \\
\hline
\large{\textbf{SSSP WEST}} & 0.65 & 0.66 & 0.78 & 0.92 & 1.03 & 1.12 & 1.19 & 1.22 & \color{Numa}{\textbf{1.23}} & 1.21 & 1.15 \\
\hline
\large{\textbf{SSSP TWITTER}} & 1.82 & 1.98 & 2.06 & \color{Numa}{\textbf{2.14}} & 2.04 & 2.00 & 1.81 & 1.80 & 1.83 & 1.77 & 1.83 \\
\hline
\large{\textbf{SSSP WEB}} & 2.04 & 2.18 & 2.38 & 2.46 & \color{Numa}{\textbf{2.61}} & 2.39 & 2.10 & 1.70 & 1.44 & 1.19 & 1.07 \\
\hline
\large{\textbf{MST USA}} & \color{Numa}{\textbf{1.18}} & 1.15 & 1.09 & 1.17 & 1.17 & 1.07 & 0.81 & 0.60 & 0.34 & 0.11 & 0.34 \\
\hline
\large{\textbf{MST WEST}} & 0.90 & 0.98 & 0.91 & 0.89 & 1.06 & 1.03 & 0.93 & 0.95 & \color{Numa}{\textbf{1.15}} & 1.01 & 0.92 \\
\hline
\large{\textbf{A* USA}} & 0.77 & 0.82 & 0.93 & 1.05 & 1.17 & 1.25 & 1.32 & \color{Numa}{\textbf{1.36}} & \color{Numa}{\textbf{1.36}} & 1.34 & 1.29 \\
\hline
\large{\textbf{A* WEST}} & 0.71 & 0.75 & 0.85 & 0.97 & 1.08 & 1.16 & 1.22 & \color{Numa}{\textbf{1.26}} & 1.25 & 1.23 & 1.15 \\
\hline
\end{tabular}
\end{center}
\vspace{0.3em}
\caption{Speedups for Multi-Queue variants with the \emph{task batching} optimization for \ins{}, \emph{temporal locality} for \del{}, and various weights $K$ for non-local NUMA node accesses, obtained on the \textbf{Intel} platform on $128$ threads; the best speedups are highlighted with {\color{Numa}{\textbf{\numacol{}}}}. The baseline is the classic Multi-Queue on $128$ threads with $C$ = 4. With $K = 1$ the algorithm is the same as without the NUMA-specific optimization.}
\label{table:mqlp_numa_intel}
\end{table}

\clearpage
\subsection{MQ Optimized NUMA: insert=Task Batching, delete=Task Batching }
\begin{table}[h]
\small
\begin{center}
\begin{tabular}{ |c|c|c|c|c|c|c|c|c|c|c|c| }
\hline
 & \large{\textbf{1}} & \large{\textbf{2}} & \large{\textbf{4}} & \large{\textbf{8}} & \large{\textbf{16}} & \large{\textbf{32}} & \large{\textbf{64}} & \large{\textbf{128}} & \large{\textbf{256}} & \large{\textbf{512}} & \large{\textbf{1024}} \\
\hline
\large{\textbf{BFS USA}} & 2.09 & 2.10 & 2.15 & 2.18 & 2.23 & 2.26 & \color{Numa}{\textbf{2.27}} & 2.22 & 2.25 & 2.22 & 2.10 \\
\hline
\large{\textbf{BFS WEST}} & 1.62 & 1.64 & 1.66 & 1.40 & 1.69 & 1.70 & 1.71 & 1.65 & \color{Numa}{\textbf{1.75}} & 1.71 & 1.57 \\
\hline
\large{\textbf{BFS TWITTER}} & 1.15 & 1.17 & \color{Numa}{\textbf{1.18}} & 1.16 & 1.14 & 1.12 & 1.06 & 1.06 & 1.07 & 1.05 & 1.00 \\
\hline
\large{\textbf{BFS WEB}} & 2.27 & 2.23 & 2.37 & 2.44 & 2.49 & \color{Numa}{\textbf{2.52}} & \color{Numa}{\textbf{2.52}} & 2.30 & 2.16 & 1.98 & 1.78 \\
\hline
\large{\textbf{SSSP USA}} & 1.10 & 1.13 & 1.16 & 1.20 & 1.22 & 1.22 & 1.25 & 1.24 & 1.22 & \color{Numa}{\textbf{1.26}} & 1.24 \\
\hline
\large{\textbf{SSSP WEST}} & 0.96 & 0.98 & 1.02 & 1.09 & 1.11 & 1.16 & \color{Numa}{\textbf{1.18}} & 1.14 & 1.12 & 1.13 & 1.16 \\
\hline
\large{\textbf{SSSP TWITTER}} & 1.18 & 1.17 & 1.27 & 1.29 & 1.32 & 1.21 & 1.29 & 1.32 & 1.27 & 1.23 & \color{Numa}{\textbf{1.33}} \\
\hline
\large{\textbf{SSSP WEB}} & 1.42 & 1.43 & 1.48 & 1.53 & 1.59 & \color{Numa}{\textbf{1.64}} & 1.61 & 1.59 & 1.49 & 1.41 & 1.39 \\
\hline
\large{\textbf{MST USA}} & 1.15 & \color{Numa}{\textbf{1.17}} & 1.12 & 1.16 & 1.11 & 1.12 & 1.14 & 1.11 & 1.11 & 1.07 & 1.06 \\
\hline
\large{\textbf{MST WEST}} & 1.15 & 1.15 & 1.15 & 1.16 & 1.15 & 1.18 & 1.18 & 1.18 & 1.14 & \color{Numa}{\textbf{1.19}} & \color{Numa}{\textbf{1.19}} \\
\hline
\large{\textbf{A* USA}} & 1.05 & 1.09 & 1.10 & 1.13 & 1.15 & 1.19 & 1.20 & 1.20 & 1.23 & 1.19 & \color{Numa}{\textbf{1.24}} \\
\hline
\large{\textbf{A* WEST}} & 0.98 & 1.03 & 1.08 & 1.10 & 1.14 & 1.19 & 1.20 & \color{Numa}{\textbf{1.21}} & \color{Numa}{\textbf{1.21}} & \color{Numa}{\textbf{1.21}} & 1.18 \\
\hline
\end{tabular}
\end{center}
\vspace{0.3em}
\caption{Speedups for Multi-Queue variants with the \emph{task batching} optimization for both \ins{} and \del{}, and various weights $K$ for non-local NUMA node accesses, obtained on the \textbf{AMD} platform on $256$ threads; the best speedups are highlighted with {\color{Numa}{\textbf{\numacol{}}}}. The baseline is the classic Multi-Queue on $256$ threads with $C$ = 4. With $K = 1$ the algorithm is the same as without the NUMA-specific optimization.}
\label{table:mqll_numa_amd}
\end{table}

\begin{table}[h]
\small
\begin{center}
\begin{tabular}{ |c|c|c|c|c|c|c|c|c|c|c|c| }
\hline
 & \large{\textbf{1}} & \large{\textbf{2}} & \large{\textbf{4}} & \large{\textbf{8}} & \large{\textbf{16}} & \large{\textbf{32}} & \large{\textbf{64}} & \large{\textbf{128}} & \large{\textbf{256}} & \large{\textbf{512}} & \large{\textbf{1024}} \\
\hline
\large{\textbf{BFS USA}} & 2.47 & 2.50 & 2.63 & 2.79 & 2.99 & 3.16 & 3.33 & \color{Numa}{\textbf{3.44}} & 3.42 & 3.35 & 3.17 \\
\hline
\large{\textbf{BFS WEST}} & 1.87 & 1.94 & 2.04 & 2.17 & 2.38 & 2.51 & 2.64 & \color{Numa}{\textbf{2.73}} & 2.56 & 2.72 & 2.58 \\
\hline
\large{\textbf{BFS TWITTER}} & \color{Numa}{\textbf{1.64}} & 1.59 & 1.58 & 1.62 & 1.61 & 1.57 & 1.50 & 1.51 & 1.49 & 1.41 & 1.43 \\
\hline
\large{\textbf{BFS WEB}} & 3.49 & 3.44 & 3.47 & 3.61 & 3.41 & \color{Numa}{\textbf{3.72}} & 3.44 & 3.45 & 3.55 & 3.33 & 3.33 \\
\hline
\large{\textbf{SSSP USA}} & 1.13 & 1.16 & 1.23 & 1.33 & 1.44 & 1.54 & 1.64 & 1.72 & 1.79 & 1.85 & \color{Numa}{\textbf{1.88}} \\
\hline
\large{\textbf{SSSP WEST}} & 0.88 & 0.88 & 0.98 & 1.05 & 1.22 & 1.26 & 1.34 & 1.37 & 1.32 & \color{Numa}{\textbf{1.38}} & 1.35 \\
\hline
\large{\textbf{SSSP TWITTER}} & 1.69 & 1.68 & 1.67 & 1.80 & 1.81 & \color{Numa}{\textbf{1.91}} & 1.88 & 1.81 & 1.72 & 1.69 & 1.65 \\
\hline
\large{\textbf{SSSP WEB}} & 2.08 & 2.10 & 2.14 & 2.22 & 2.34 & 2.36 & \color{Numa}{\textbf{2.42}} & 2.26 & 1.91 & 1.84 & 1.62 \\
\hline
\large{\textbf{MST USA}} & 1.00 & 1.05 & 1.07 & 1.13 & 1.09 & \color{Numa}{\textbf{1.15}} & 1.10 & 1.09 & 1.07 & 1.00 & 0.78 \\
\hline
\large{\textbf{MST WEST}} & 1.10 & 1.03 & 0.95 & 1.03 & 1.13 & 1.08 & \color{Numa}{\textbf{1.19}} & 1.13 & 1.04 & 1.13 & 1.09 \\
\hline
\large{\textbf{A* USA}} & 1.19 & 1.21 & 1.28 & 1.37 & 1.47 & 1.56 & 1.64 & 1.71 & 1.79 & 1.82 & \color{Numa}{\textbf{1.86}} \\
\hline
\large{\textbf{A* WEST}} & 0.97 & 1.00 & 1.09 & 1.21 & 1.23 & 1.31 & 1.41 & 1.45 & 1.45 & \color{Numa}{\textbf{1.55}} & 1.52 \\
\hline
\end{tabular}
\end{center}
\vspace{0.3em}
\caption{Speedups for Multi-Queue variants with the \emph{task batching} optimization for both \ins{} and \del{}, and various weights $K$ for non-local NUMA node accesses, obtained on the \textbf{Intel} platform on $128$ threads; the best speedups are highlighted with {\color{Numa}{\textbf{\numacol{}}}}. The baseline is the classic Multi-Queue on $128$ threads with $C$ = 4. With $K = 1$ the algorithm is the same as without the NUMA-specific optimization. }
\label{table:mqll_numa_intel}
\end{table}

\clearpage
\subsection{SMQ via D-Ary Heaps NUMA }
\begin{table}[h]
\small
\begin{center}
\begin{tabular}{ |c|c|c|c|c|c|c|c|c|c|c|c| }
\hline
 & \large{\textbf{1}} & \large{\textbf{2}} & \large{\textbf{4}} & \large{\textbf{8}} & \large{\textbf{16}} & \large{\textbf{32}} & \large{\textbf{64}} & \large{\textbf{128}} & \large{\textbf{256}} & \large{\textbf{512}} & \large{\textbf{1024}} \\
\hline
\large{\textbf{BFS USA}} & 2.85 & 2.87 & 2.88 & \color{Numa}{\textbf{2.89}} & 2.88 & 2.78 & 2.86 & 2.87 & 2.88 & 2.86 & 2.88 \\
\hline
\large{\textbf{BFS WEST}} & 2.35 & 2.38 & 2.39 & \color{Numa}{\textbf{2.42}} & 2.37 & 2.33 & 2.35 & 2.38 & 2.39 & 2.37 & 2.41 \\
\hline
\large{\textbf{BFS TWITTER}} & 1.28 & 1.28 & 1.29 & 1.28 & 1.28 & \color{Numa}{\textbf{1.30}} & \color{Numa}{\textbf{1.30}} & \color{Numa}{\textbf{1.30}} & \color{Numa}{\textbf{1.30}} & \color{Numa}{\textbf{1.30}} & 1.28 \\
\hline
\large{\textbf{BFS WEB}} & 2.55 & 2.36 & 2.59 & 2.47 & 2.62 & 2.56 & \color{Numa}{\textbf{2.63}} & \color{Numa}{\textbf{2.63}} & 2.54 & 2.60 & 2.60 \\
\hline
\large{\textbf{SSSP USA}} & 2.35 & \color{Numa}{\textbf{2.44}} & 2.43 & \color{Numa}{\textbf{2.44}} & \color{Numa}{\textbf{2.44}} & 2.41 & 2.43 & 2.43 & 2.40 & 2.43 & 2.43 \\
\hline
\large{\textbf{SSSP WEST}} & 2.12 & 2.10 & 2.10 & 2.11 & 2.12 & 2.10 & 2.11 & 2.10 & 2.11 & \color{Numa}{\textbf{2.13}} & 2.11 \\
\hline
\large{\textbf{SSSP TWITTER}} & 2.04 & 2.01 & 2.05 & 1.94 & 2.02 & 2.01 & 1.99 & 2.05 & 2.00 & \color{Numa}{\textbf{2.06}} & 2.02 \\
\hline
\large{\textbf{SSSP WEB}} & 2.63 & 2.77 & 2.67 & 2.74 & 2.60 & 2.76 & 2.86 & 2.62 & 2.80 & 2.78 & \color{Numa}{\textbf{2.88}} \\
\hline
\large{\textbf{MST USA}} & 1.99 & 1.90 & 1.94 & 1.82 & 1.85 & 1.96 & 1.89 & 1.83 & 1.95 & \color{Numa}{\textbf{2.09}} & 1.74 \\
\hline
\large{\textbf{MST WEST}} & 1.26 & 1.22 & 1.28 & 1.27 & 1.27 & 1.27 & 1.30 & 1.29 & 1.27 & \color{Numa}{\textbf{1.31}} & 1.25 \\
\hline
\large{\textbf{A* USA}} & \color{Numa}{\textbf{1.92}} & 1.89 & 1.90 & 1.87 & 1.90 & 1.89 & 1.90 & 1.90 & 1.91 & 1.91 & 1.91 \\
\hline
\large{\textbf{A* WEST}} & 1.92 & \color{Numa}{\textbf{1.94}} & 1.93 & 1.87 & 1.92 & 1.91 & 1.90 & 1.84 & 1.91 & 1.92 & 1.90 \\
\hline
\end{tabular}
\end{center}
\vspace{0.3em}
\caption{Speedups for the Stealing Multi-Queue implementation via $d$-ary heaps with various weights $K$ for non-local NUMA node accesses, obtained on the \textbf{AMD} platform on $256$ threads; the best speedups are highlighted with {\color{Numa}{\textbf{\numacol{}}}}. The baseline is the classic Multi-Queue on $256$ threads with $C$ = 4. With $K = 1$ the algorithm is the same as without the NUMA-specific optimization. }
\label{table:smq_numa_amd}
\end{table}

\begin{table}[h]
\small
\begin{center}
\begin{tabular}{ |c|c|c|c|c|c|c|c|c|c|c|c| }
\hline
 & \large{\textbf{1}} & \large{\textbf{2}} & \large{\textbf{4}} & \large{\textbf{8}} & \large{\textbf{16}} & \large{\textbf{32}} & \large{\textbf{64}} & \large{\textbf{128}} & \large{\textbf{256}} & \large{\textbf{512}} & \large{\textbf{1024}} \\
\hline
\large{\textbf{BFS USA}} & 3.27 & \color{Numa}{\textbf{3.28}} & 3.27 & 3.19 & 3.27 & \color{Numa}{\textbf{3.28}} & 3.27 & 3.26 & 3.27 & 3.27 & 3.27 \\
\hline
\large{\textbf{BFS WEST}} & \color{Numa}{\textbf{2.72}} & \color{Numa}{\textbf{2.72}} & 2.71 & 2.71 & \color{Numa}{\textbf{2.72}} & 2.71 & 2.71 & 2.70 & 2.71 & 2.71 & 2.71 \\
\hline
\large{\textbf{BFS TWITTER}} & 1.50 & 1.48 & 1.46 & 1.50 & \color{Numa}{\textbf{1.51}} & \color{Numa}{\textbf{1.51}} & 1.46 & 1.46 & 1.49 & 1.47 & 1.49 \\
\hline
\large{\textbf{BFS WEB}} & 2.85 & 2.78 & 3.04 & 2.99 & 2.90 & 2.67 & 2.97 & 2.68 & 3.00 & 2.94 & \color{Numa}{\textbf{3.08}} \\
\hline
\large{\textbf{SSSP USA}} & \color{Numa}{\textbf{2.35}} & \color{Numa}{\textbf{2.35}} & 2.34 & \color{Numa}{\textbf{2.35}} & \color{Numa}{\textbf{2.35}} & \color{Numa}{\textbf{2.35}} & 2.34 & \color{Numa}{\textbf{2.35}} & \color{Numa}{\textbf{2.35}} & \color{Numa}{\textbf{2.35}} & 2.34 \\
\hline
\large{\textbf{SSSP WEST}} & 2.02 & \color{Numa}{\textbf{2.03}} & \color{Numa}{\textbf{2.03}} & \color{Numa}{\textbf{2.03}} & 2.01 & 2.01 & 2.01 & \color{Numa}{\textbf{2.03}} & 2.01 & 2.02 & 2.01 \\
\hline
\large{\textbf{SSSP TWITTER}} & 1.92 & 1.86 & 1.92 & \color{Numa}{\textbf{1.93}} & 1.72 & 1.89 & 1.86 & 1.90 & 1.78 & 1.91 & 1.83 \\
\hline
\large{\textbf{SSSP WEB}} & 2.43 & 2.37 & 2.33 & 2.46 & \color{Numa}{\textbf{2.47}} & 2.40 & 2.29 & 2.44 & 2.44 & 2.42 & 2.30 \\
\hline
\large{\textbf{MST USA}} & 1.24 & 1.27 & 1.20 & 1.25 & \color{Numa}{\textbf{1.31}} & 1.11 & 1.28 & 1.25 & 1.28 & 1.18 & 1.22 \\
\hline
\large{\textbf{MST WEST}} & 1.21 & 1.21 & 1.25 & 1.27 & 1.21 & 1.22 & 1.19 & 1.26 & \color{Numa}{\textbf{1.28}} & 1.27 & 1.24 \\
\hline
\large{\textbf{A* USA}} & 2.14 & 2.15 & 2.16 & \color{Numa}{\textbf{2.17}} & 2.15 & 2.16 & 2.16 & 2.16 & 2.15 & 2.16 & 2.16 \\
\hline
\large{\textbf{A* WEST}} & 1.90 & 1.91 & 1.91 & 1.91 & 1.91 & \color{Numa}{\textbf{1.92}} & 1.90 & 1.90 & 1.90 & \color{Numa}{\textbf{1.92}} & \color{Numa}{\textbf{1.92}} \\
\hline
\end{tabular}
\end{center}
\vspace{0.3em}
\caption{Speedups for the Stealing Multi-Queue implementation via $d$-ary heaps with various weights $K$ for non-local NUMA node accesses, obtained on the \textbf{Intel} platform on $128$ threads; the best speedups are highlighted with {\color{Numa}{\textbf{\numacol{}}}}. The baseline is the classic Multi-Queue on $128$ threads with $C$ = 4. With $K = 1$ the algorithm is the same as without the NUMA-specific optimization. }
\label{table:smq_numa_intel}
\end{table}

\clearpage
\subsection{SMQ via Skip Lists NUMA}
\begin{table}[h]
\small
\begin{center}
\begin{tabular}{ |c|c|c|c|c|c|c|c|c|c|c|c| }
\hline
 & \large{\textbf{1}} & \large{\textbf{2}} & \large{\textbf{4}} & \large{\textbf{8}} & \large{\textbf{16}} & \large{\textbf{32}} & \large{\textbf{64}} & \large{\textbf{128}} & \large{\textbf{256}} & \large{\textbf{512}} & \large{\textbf{1024}} \\
\hline
\large{\textbf{BFS USA}} & 1.80 & 1.72 & 1.81 & 1.82 & \color{Numa}{\textbf{1.83}} & 1.80 & 1.82 & 1.81 & 1.81 & 1.82 & 1.80 \\
\hline
\large{\textbf{BFS WEST}} & 1.55 & 1.60 & 1.42 & 1.61 & 1.60 & 1.60 & 1.58 & 1.60 & \color{Numa}{\textbf{1.62}} & 1.52 & 1.60 \\
\hline
\large{\textbf{BFS TWITTER}} & 0.94 & 0.91 & 0.94 & 0.83 & 0.93 & 0.95 & 0.86 & \color{Numa}{\textbf{0.97}} & 0.87 & 0.75 & \color{Numa}{\textbf{0.97}} \\
\hline
\large{\textbf{BFS WEB}} & \color{Numa}{\textbf{3.24}} & 3.17 & 2.66 & 3.08 & 2.81 & 3.15 & 2.96 & 3.06 & 2.48 & 3.12 & 2.72 \\
\hline
\large{\textbf{SSSP USA}} & 1.68 & \color{Numa}{\textbf{1.73}} & 1.68 & 1.72 & 1.66 & 1.66 & 1.71 & 1.72 & 1.71 & 1.70 & 1.71 \\
\hline
\large{\textbf{SSSP WEST}} & 1.52 & 1.54 & 1.53 & 1.55 & 1.55 & 1.53 & 1.54 & \color{Numa}{\textbf{1.56}} & 1.55 & \color{Numa}{\textbf{1.56}} & 1.55 \\
\hline
\large{\textbf{SSSP TWITTER}} & 1.87 & 1.89 & \color{Numa}{\textbf{1.99}} & 1.70 & 1.77 & 1.69 & 1.85 & 1.64 & 1.78 & 1.63 & 1.78 \\
\hline
\large{\textbf{SSSP WEB}} & 3.28 & 3.25 & 3.13 & 3.18 & \color{Numa}{\textbf{3.29}} & 3.23 & 3.20 & 3.19 & 3.15 & 3.20 & 3.20 \\
\hline
\large{\textbf{MST USA}} & 1.42 & 1.40 & 1.45 & \color{Numa}{\textbf{1.46}} & 1.44 & \color{Numa}{\textbf{1.46}} & 1.44 & 1.29 & 1.45 & 1.44 & 1.45 \\
\hline
\large{\textbf{MST WEST}} & 1.22 & \color{Numa}{\textbf{1.32}} & \color{Numa}{\textbf{1.32}} & 1.31 & 1.31 & 1.27 & 1.20 & 1.25 & 1.25 & 1.18 & 1.23 \\
\hline
\large{\textbf{A* USA}} & 1.49 & 1.50 & 1.50 & \color{Numa}{\textbf{1.51}} & 1.50 & 1.50 & 1.50 & 1.50 & 1.49 & 1.50 & 1.50 \\
\hline
\large{\textbf{A* WEST}} & \color{Numa}{\textbf{1.47}} & \color{Numa}{\textbf{1.47}} & 1.46 & \color{Numa}{\textbf{1.47}} & 1.45 & \color{Numa}{\textbf{1.47}} & 1.41 & \color{Numa}{\textbf{1.47}} & \color{Numa}{\textbf{1.47}} & \color{Numa}{\textbf{1.47}} & \color{Numa}{\textbf{1.47}} \\
\hline
\end{tabular}
\end{center}
\vspace{0.3em}
\caption{Speedups for the Stealing Multi-Queue implementation via skip lists with various weights $K$ for non-local NUMA node accesses, obtained on the \textbf{AMD} platform on $256$ threads; the best speedups are highlighted with {\color{Numa}{\textbf{\numacol{}}}}. The baseline is the classic Multi-Queue on $256$ threads with $C$ = 4. With $K = 1$ the algorithm is the same as without the NUMA-specific optimization. }
\label{table:slsmq_numa_amd}
\end{table}

\begin{table}[h]
\small
\begin{center}
\begin{tabular}{ |c|c|c|c|c|c|c|c|c|c|c|c| }
\hline
 & \large{\textbf{1}} & \large{\textbf{2}} & \large{\textbf{4}} & \large{\textbf{8}} & \large{\textbf{16}} & \large{\textbf{32}} & \large{\textbf{64}} & \large{\textbf{128}} & \large{\textbf{256}} & \large{\textbf{512}} & \large{\textbf{1024}} \\
\hline
\large{\textbf{BFS USA}} & 1.83 & 1.85 & 1.84 & 1.84 & 1.83 & 1.84 & 1.84 & 1.85 & \color{Numa}{\textbf{1.86}} & 1.85 & 1.84 \\
\hline
\large{\textbf{BFS WEST}} & 1.62 & 1.62 & 1.63 & 1.63 & 1.64 & \color{Numa}{\textbf{1.65}} & \color{Numa}{\textbf{1.65}} & \color{Numa}{\textbf{1.65}} & \color{Numa}{\textbf{1.65}} & \color{Numa}{\textbf{1.65}} & \color{Numa}{\textbf{1.65}} \\
\hline
\large{\textbf{BFS TWITTER}} & 0.99 & 0.95 & 1.03 & \color{Numa}{\textbf{1.07}} & 0.97 & 1.03 & 1.00 & 1.02 & 1.06 & 0.98 & 1.06 \\
\hline
\large{\textbf{BFS WEB}} & \color{Numa}{\textbf{2.84}} & 2.35 & 2.67 & \color{Numa}{\textbf{2.84}} & 2.82 & 2.79 & 2.80 & 2.68 & 2.83 & 2.68 & 2.65 \\
\hline
\large{\textbf{SSSP USA}} & 1.47 & 1.47 & 1.47 & \color{Numa}{\textbf{1.48}} & 1.47 & \color{Numa}{\textbf{1.48}} & \color{Numa}{\textbf{1.48}} & \color{Numa}{\textbf{1.48}} & \color{Numa}{\textbf{1.48}} & \color{Numa}{\textbf{1.48}} & \color{Numa}{\textbf{1.48}} \\
\hline
\large{\textbf{SSSP WEST}} & 1.35 & 1.33 & 1.36 & 1.36 & 1.36 & 1.36 & \color{Numa}{\textbf{1.37}} & 1.36 & 1.36 & \color{Numa}{\textbf{1.37}} & \color{Numa}{\textbf{1.37}} \\
\hline
\large{\textbf{SSSP TWITTER}} & 1.66 & 1.65 & 1.65 & 1.64 & 1.65 & \color{Numa}{\textbf{1.67}} & 1.65 & 1.66 & 1.66 & \color{Numa}{\textbf{1.67}} & 1.65 \\
\hline
\large{\textbf{SSSP WEB}} & 2.19 & 2.19 & 2.21 & 2.21 & 2.20 & 2.19 & 2.20 & 2.21 & 2.19 & \color{Numa}{\textbf{2.23}} & 2.20 \\
\hline
\large{\textbf{MST USA}} & 1.34 & 1.33 & 1.29 & 1.29 & 1.30 & 1.29 & 1.32 & 1.36 & 1.28 & \color{Numa}{\textbf{1.38}} & 1.28 \\
\hline
\large{\textbf{MST WEST}} & 1.41 & 1.47 & \color{Numa}{\textbf{1.48}} & 1.40 & 1.46 & \color{Numa}{\textbf{1.48}} & 1.47 & 1.38 & \color{Numa}{\textbf{1.48}} & 1.45 & \color{Numa}{\textbf{1.48}} \\
\hline
\large{\textbf{A* USA}} & 1.47 & 1.47 & \color{Numa}{\textbf{1.48}} & 1.47 & 1.47 & \color{Numa}{\textbf{1.48}} & 1.47 & \color{Numa}{\textbf{1.48}} & 1.47 & 1.47 & \color{Numa}{\textbf{1.48}} \\
\hline
\large{\textbf{A* WEST}} & 1.31 & \color{Numa}{\textbf{1.33}} & \color{Numa}{\textbf{1.33}} & 1.32 & \color{Numa}{\textbf{1.33}} & \color{Numa}{\textbf{1.33}} & \color{Numa}{\textbf{1.33}} & \color{Numa}{\textbf{1.33}} & \color{Numa}{\textbf{1.33}} & \color{Numa}{\textbf{1.33}} & \color{Numa}{\textbf{1.33}} \\
\hline
\end{tabular}
\end{center}
\vspace{0.3em}
\caption{Speedups for the Stealing Multi-Queue implementation via skip lists with various weights $K$ for non-local NUMA node accesses, obtained on the \textbf{Intel} platform on $128$ threads; the best speedups are highlighted with {\color{Numa}{\textbf{\numacol{}}}}. The baseline is the classic Multi-Queue on $128$ threads with $C$ = 4. With $K = 1$ the algorithm is the same as without the NUMA-specific optimization. }
\label{table:slsmq_numa_intel}
\end{table}



\clearpage
\section{Final Results: The Magnified Versions for AMD and Intel}\label{appendix:final:magnified}
\begin{figure*}[h]
    \centering
    \includegraphics[width=0.82\textwidth]{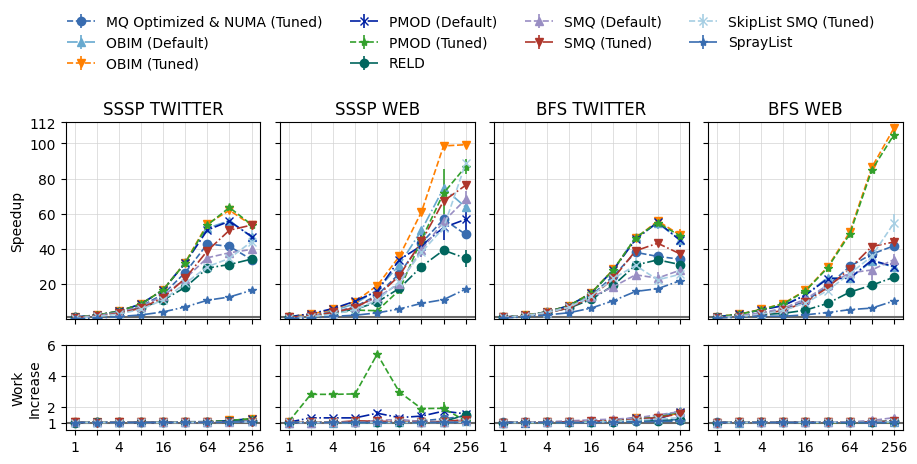}
    \includegraphics[width=0.82\textwidth]{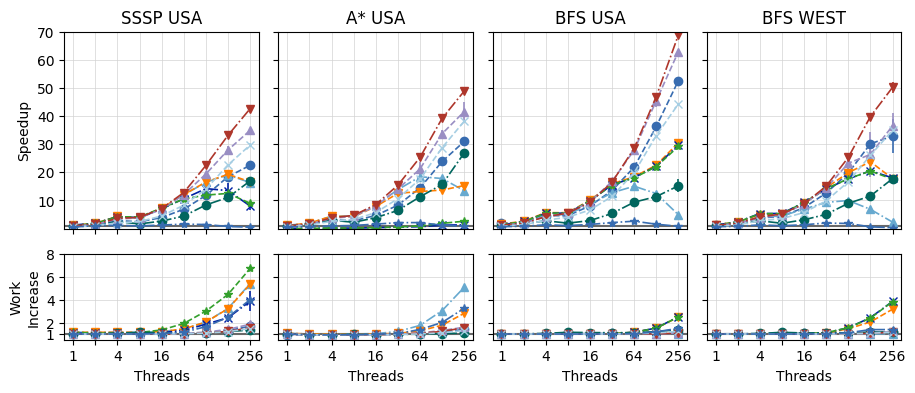}
    \includegraphics[width=0.82\textwidth]{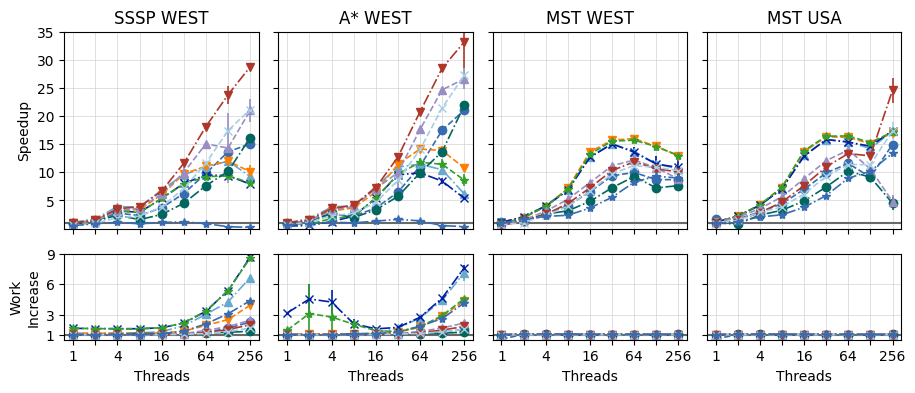}       
    \caption{Comparison of the tuned and default variants of SMQ, PMOD, and OBIM, an optimized version of the classic Multi-Queue, SprayList, and RELD schedulers on the \textbf{AMD} platform. Speedups are versus the baseline Multi-Queue executed on a single thread. Implementation details are provided in the main body.}
    \label{fig:big_amd} 
\end{figure*}

\begin{figure*}[h]
    \centering
    \includegraphics[width=0.82\textwidth]{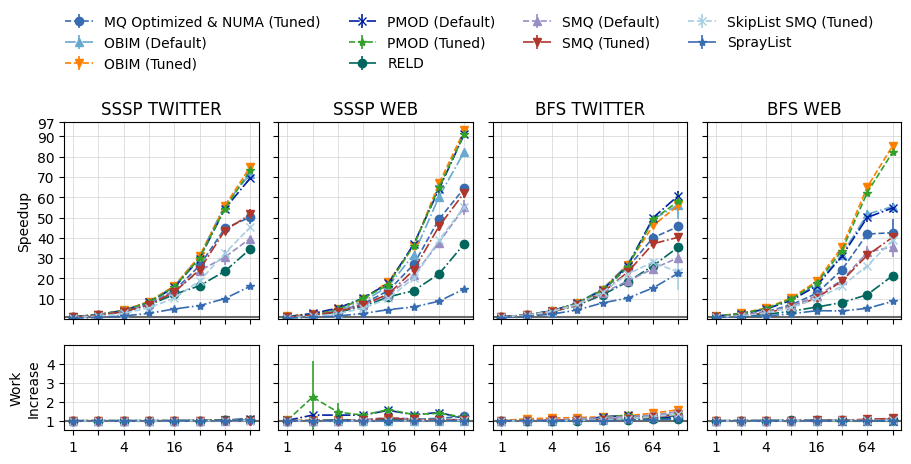}
    \includegraphics[width=0.82\textwidth]{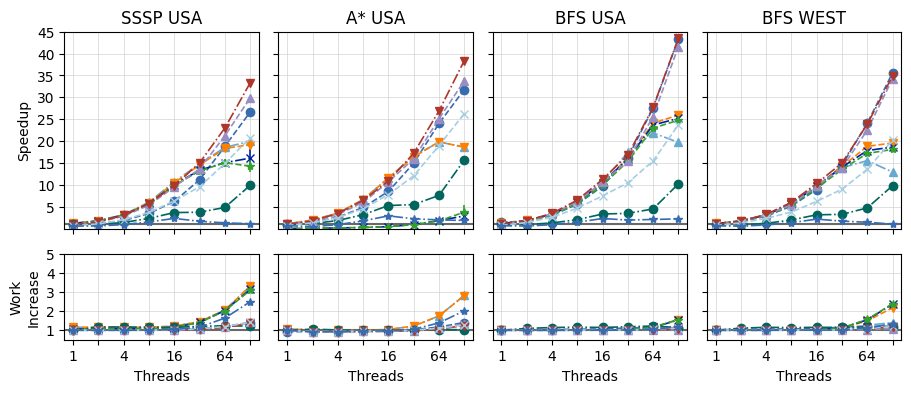}
    \includegraphics[width=0.82\textwidth]{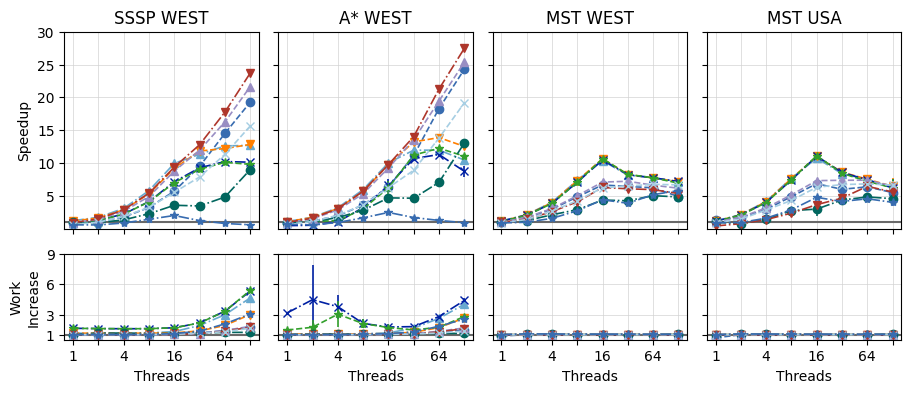}       
    \caption{Comparison of the tuned and default variants of SMQ, PMOD, and OBIM, an optimized version of the classic Multi-Queue, SprayList, and RELD schedulers on the \textbf{Intel} platform. Speedups are versus the baseline Multi-Queue executed on a single thread. Implementation details are provided in the main body.}
    \label{fig:big_intel} 
\end{figure*}

\end{document}